\newcommand{\R}{\mathbb{R}}
\newcommand{\N}{\mathbb{N}}
\newcommand{\Z}{\mathbb{Z}}
\newcommand{\rinit}{r_M}
\newcommand{\prinit}{\Pi_\delta(\rinit)}
\newcommand{\m}{\hat{\mu}}
\newcommand{\sqs}{\ell_2(\mathbb{Z})}
\newcommand{\Mld}[1]{\mathscr{M}_{\infty}({#1})\cap\ell_2(\mathbb{Z})}
\newcommand{\Ml}{\Mld{\delta}}
\newcommand{\ralpha}{{[-1+\delta,1-\delta]}} 
\newtheorem{lem}{Lemma}
\newtheorem{thm}{Theorem}
\newtheorem{prop}{Proposition}
\newtheorem{cor}{Corollary}
\newtheorem{definition}{Definition}
\newtheorem{exmp}{Example}[section]
\def\pageoption{3}
\def\Title{Efficient shape-constrained inference for the autocovariance sequence from a reversible Markov chain}
\begin{document}

	\def\spacingset#1{\renewcommand{\baselinestretch}%
		{#1}\small\normalsize} \spacingset{1}

	\ifnum\pageoption=2\begin{bibunit}\fi
		
		\title{\bf \Title}
		\author{Stephen Berg\thanks{Both authors contributed equally.} \, and Hyebin Song\footnotemark[1]  \thanks{Corresponding author: hps5320@psu.edu} \\
			Department of Statistics, Pennsylvania State University}
		\maketitle
		\bigskip
		\begin{abstract}
			In this paper, we study the problem of estimating the autocovariance sequence resulting from a reversible Markov chain. A motivating application for studying this problem is the estimation of the asymptotic variance in central limit theorems for Markov chains. 
			We propose a novel shape-constrained estimator of the autocovariance sequence, which is based on the key observation that the representability of the autocovariance sequence as a moment sequence imposes certain shape constraints.
			We examine the theoretical properties of the proposed estimator and provide strong consistency guarantees for our estimator. In particular, for geometrically ergodic reversible Markov chains, we show that our estimator is strongly consistent for the true autocovariance sequence with respect to an $\ell_2$ distance, and that our estimator leads to strongly consistent estimates of the asymptotic variance.
			Finally, we perform empirical studies to illustrate the theoretical properties of the proposed estimator as well as to demonstrate the effectiveness of our estimator in comparison with other current state-of-the-art methods for Markov chain Monte Carlo variance estimation, including batch means, spectral variance estimators, and the initial convex sequence estimator.
		\end{abstract}
		
		\noindent%
		{\it Keywords:}  Markov chain Monte Carlo, Shape-constrained inference, Autocovariance sequence estimation, Asymptotic variance
		\vfill
		
		\newpage
		\spacingset{1.5} 
		\section{Introduction}
		
		Markov chain Monte Carlo (MCMC) is a routinely used tool for approximating intractable integrals of the form $\mu=\int g(x)\pi(dx)$, where $\pi$ is an intractable probability measure on a measurable space $(\mathsf{X},\mathscr{X})$ and $g:\mathsf{X}\to\mathbb{R}$ is a $\pi$-integrable function. In MCMC, a Markov chain $X_0,X_1,X_2,...$ with transition kernel $Q$ and stationary probability measure $\pi$ is simulated for some finite number of iterations $M$, possibly after an initial burn-in period, and $\mu$ can then be estimated by the empirical average \begin{align*}
			Y_{M}=M^{-1}\sum_{t=0}^{M-1}g(X_t).
		\end{align*}
		
		In general, $g(X_t)$ from a Markov chain may have nonzero covariance. For a Markov chain transition kernel $Q$ with a unique stationary probability measure $\pi$, define the autocovariance sequence $\gamma=\{\gamma(k)\}_{k=-\infty}^{\infty}$  \begin{align*}
			\gamma(k)= E_\pi [(g(X_0)-\mu)(g(X_{|k|})-\mu)],\;\;\;\;\; k\in \mathbb{Z}.
		\end{align*} 
		In this work, we study the problem of estimating the autocovariance sequence $\gamma \in \R^\Z$ from a reversible Markov chain by exploiting shape constraints satisfied by the autocovariance sequence $\gamma$. It is a well known result that for a reversible Markov chain, the autocovariance sequence $\gamma$ admits the following representation [e.g., \citealp{Rudin1973-ry}]:
		\begin{align}
			\gamma(k) = \int x^{|k|} F(dx) \;\;\;\;\; k\in\mathbb{Z}\label{eq:spectralMeasure}
		\end{align}
		for a unique positive measure $F$ supported on $[-1,1]$. For a function on $\R$ or $\Z$, admitting a certain mixture representation has an implication in its global shape \citep{Hausdorff1921-kl, Feller1939-fp,Steutel1969-mn,jewell1982mixtures,Balabdaoui2020-nn}. For instance, if the support of $F$ in \eqref{eq:spectralMeasure} is contained in $[0,1]$, $\gamma$ is \textit{completely monotone}, meaning the inequalities $(-1)^n \Delta^n \gamma(j) \geq 0$ are satisfied for all $j, n \in \mathbb{N}$ where $\Delta^n \gamma(j)=\Delta^{n-1} \gamma(j+1)-\Delta^{n-1} \gamma(j)$ is a difference operator with $\Delta^0 \gamma=\gamma$. While $\gamma$ is not, in general, completely monotone because the support of $F$ may extend outside of $[0,1]$, the representation \eqref{eq:spectralMeasure} still imposes an infinite number of shape constraints on $\gamma$ (see Proposition~\ref{prop:momentSeq}). To exploit such structure in $\gamma$, in this work, we propose an estimator of the autocovariance sequence based on the $\ell_2$ projection of an initial input autocovariance sequence estimate, such as the ordinary empirical autocovariance sequence, onto the set of sequences admitting a representation as in~\eqref{eq:spectralMeasure}.
		
		\subsection{Main application: asymptotic variance estimation for MCMC estimates}
		There are several motivations for the estimation of the autocovariance sequence. As a main application, we consider the problem of estimating the asymptotic variance in a Markov chain central limit theorem. This problem has practical importance, as the asymptotic variance quantifies uncertainties in the MCMC estimate $Y_M$. 
		Under mild conditions~\citep{meyn2009markov}, a central limit theorem can be established for $Y_M$ such that \begin{align}
			\sqrt{M}(Y_{M}-\mu)\overset{d}{\to} N(0,\sigma^2)\label{eq:clt}
		\end{align} where \begin{align}
			\sigma^2&=\sum_{k=-\infty}^{\infty}\gamma(k). \label{eq:avar}
		\end{align} The infinite sum in~\eqref{eq:avar} arises from covariance between terms in the sum in the definition of $Y_{M}$. From \eqref{eq:clt}, the variance of the empirical mean $Y_M$ from an MCMC simulation as an estimator of $\mu$ is quantified, in an asymptotic sense, by the asymptotic variance $\sigma^2$. In turn, from \eqref{eq:avar}, $\sigma^2$ can be estimated based on an estimate of the autocovariance sequence $\gamma$. Fixed width stopping rules for MCMC, as in~\citet{jones2006fixed}, \citet{bednorz2007few}, \citet{flegal2008markov}, \citet{latuszynski}, \citet{flegal2015relative}, and \citet{vats2019multivariate}, depend on an estimate of $\sigma^2$. 
		
		One natural estimate for $\gamma(k)$ based on the first $M$ iterates $X_0,X_1,...,X_{M-1}$ is the empirical autocovariance $\tilde{r}_M(k)$, defined by
		\begin{align}
			\tilde{r}_M(k) = \begin{cases}
				\frac{1}{M}\sum_{t=0}^{M-1-|k|} (g(X_t)-Y_{M}) (g(X_{t+|k|})-Y_{M}) & |k|\leq M-1\\
				0 & |k|>(M-1).
			\end{cases}\label{eq:emp1_example}
		\end{align} It is well known that some natural estimators of $\sigma^2$ based on the $\tilde{r}_{M}(k)$ sequence are inconsistent. For the empirical autocovariances with mean centering based on the empirical mean $Y_{M}$ as in~\eqref{eq:emp1_example}, an elementary calculation shows that $\sum_{k=-(M-1)}^{M-1}\tilde{r}_{M}(k)=0$, and the estimator $\hat{\sigma}^2_{M,\textrm{emp}}=\sum_{k=-(M-1)}^{M-1}\tilde{r}_M(k)=0$ is thus inconsistent as an estimator of $\sigma^2$. With centering based on the true mean $\mu$ rather than $Y_{M}$ in~\eqref{eq:emp1_example}, the corresponding estimator converges in distribution to a scaled $\chi^2$ random variable~\citep{anderson1971statistical,flegal2010batch}, and is thus also inconsistent. These difficulties have led to methods for estimating $\sigma^2$ with better statistical properties. These methods include spectral variance estimators~\citep{anderson1971statistical,flegal2010batch}, estimators based on batch means~\citep{priestley1981spectral,flegal2010batch,chakraborty2022estimating}, and a class of methods for reversible Markov chains called initial sequence estimators~\citep{geyer1992practical,kosorok2000error,dai2017multivariate}.

		The batch means and spectral variance estimators have known consistency properties. In particular, they are a.s. consistent for $\sigma^2$, and have $M^{1/3}$ rate of convergence with an optimal choice of batch or window size \citep{Damerdji1991-oj, flegal2010batch}. Practically, they involve tuning parameters which are known in advance only up to a constant of proportionality. For instance, the batch means, overlapping batch means, and spectral variance estimators in~\citet{flegal2010batch} require the selection of a batch size $b_M$ depending on the Markov chain sample length $M$. The optimal setting is $b_M=CM^{1/3}$, but the constant of proportionality depends on problem-dependent parameters that will typically be unknown.
		
		\citet{geyer1992practical}, on the other hand, introduces initial sequence estimators for estimating $\sigma^2$. The initial sequence estimators exploit positivity, monotonicity, and convexity constraints satisfied for reversible Markov chains by the sequence $\Gamma=\{\Gamma(k)\}_{k=0}^{\infty}$ defined by
		\begin{align}\label{eq:geyer_Gammak}
			\Gamma(k) := \gamma(2k)+\gamma(2k+1)\;\;\;\;\;k=0,1,2,...
		\end{align} 
		More specifically, to impose such constraints, first the initial positive sequence estimator is obtained by truncating the empirical $\hat{\Gamma}_{M}(k) = \tilde{r}_M(2k)+\tilde{r}_M(2k+1)$ sequence at the first $k$ such that $\hat{\Gamma}_{M}(k)<0$, to obtain $\hat{\Gamma}^{\rm (pos)}_{M}=\{\hat{\Gamma}_{M}(k)\}_{k=0}^{T-1}$ where $T:=\min\{k \in \N; \hat{\Gamma}(k) <0\}$. The argument given in \citet{geyer1992practical} for truncating the sequence at $T$ is that $T$ is the estimated time point when the autocovariance curve falls below the noise level. In addition to the initial positive sequence estimator, \citet{geyer1992practical} introduces the initial monotone sequence and initial convex sequence estimators. The initial monotone sequence and initial convex sequence estimators can then be calculated by replacing each $\hat{\Gamma}^{\rm (pos)}(k)$ with the minimum of the preceding ones and with the $k$th element of the greatest convex minorant of the initial positive sequence, respectively.
		
		Despite their simplicity, initial sequence estimators have very strong empirical performance and do not require the choice of a tuning parameter value, making them very useful in practice. For example, the widely used Stan software \citep{stan2019} employs the initial sequence estimators to estimate the effective sample size of Markov chain simulations. However, the statistical guarantees of the initial sequence estimators are somewhat lacking compared to the batch means and spectral variance estimators. To our knowledge, the only consistency results for the initial sequence estimators are that the initial sequence estimates asymptotically do not underestimate $\sigma^2$, that is,
		\begin{align}
			\underset{M\to\infty}{\lim\inf }\; \widehat{\sigma^2}_{M,\textrm{init}}\geq \sigma^2\;a.s.\label{eq:liminf}, 
		\end{align} as in~\citet{geyer1992practical,kosorok2000error,brooks2011handbook}, and \citet{dai2017multivariate}, rather than $\underset{M\to\infty}{\lim}\widehat{\sigma^2}_{M,\textrm{init}}=\sigma^2$ almost surely.
		
		\subsection{Review on estimation with shape constraints and connection to autocovariance sequence estimation}
		The work of \cite{geyer1992practical} can be viewed as an example of shape constrained inference, where the sequence $\{\Gamma_k\}_{k\in\N}$ is estimated in such a way that various shape constraints (positivity, monotonicity, and convexity) are enforced. Shape constrained inference has a long history in statistics. One of the standard examples is the isotonic regression, where in the most basic scenario one observes $n$ independent random variables $Y_i$ which are assumed to be noisy observations of some monotone increasing signal, i.e., $E[Y_1]\le E[Y_2] \le\dots E[Y_n]$. The goal is to estimate the underlying $n$-dimensional signal \citep{ Barlow1972-ej,robertson1988order}. However, shape constrained inference is not limited to the estimation of a finite dimensional vector and to monotonicity constraints. In fact, shape constrained inference has also been applied to infinite-dimensional problems where the quantity of interest is an infinite-dimensional vector or a function on $\R$ with different shape constraints. Examples include nonparametric estimation of monotone sequences or functions, the estimation of a convex or log-convex density, etc. \citep{Grenander1956-wl,Jankowski2009-tm, Dumbgen2011-wv, Balabdaoui2015-yc, Kuchibhotla2017-id}.

		Among such constraints, $k$-monotonicity, which is a refinement of the monotonicity property, has been studied by several authors \citep{Balabdaoui2007-na,Lefevre2013-lr, Durot2015-sb,Chee2016-ae, Giguelay2017-sy}. A sequence $m$ is called a $k$-monotone decreasing sequence if its successive differences up to order $k$ are alternatively nonnegative and nonpositive, i.e., 
		\begin{align}\label{eq:k-monotonicity}
			(-1)^n \Delta^n m(j) \ge 0 \mbox{ for } \;j\in\mathbb{N},\;n=0,\dots,k
		\end{align}
		where $\Delta^n m(j) = \Delta^{n-1} m(j+1) - \Delta^{n-1}m(j)$ is a difference operator with $\Delta^0 m = m$. The case of $k=0$ corresponds to nonnegativity, so that $(-1)^0\Delta^0m(j)=m(j)\geq 0$. The case $k=1$ corresponds to monotonicity $m(j+1)-m(j)\leq 0$ in addition to nonnegativity, and $k=2$ corresponds to convexity $m(j+2)-m(j+1) \ge m(j+1)-m(j)$ in addition to nonnegativity and monotonicity. 
		
		When $(-1)^n \Delta^n m(j) \ge 0$ for all $j,n\in\N$, the sequence $m$ is called completely monotone. For functions on the real line, analogous versions of complete monotonicity involving derivatives rather than differences have been considered. Complete monotonicity conditions have been investigated by various authors. One prominent feature of prior results is an equivalence between satisfying a complete monotonicity constraint and admitting a mixture representation. For instance, \cite{Hausdorff1921-kl} proved that 
		a sequence $m$ is completely monotone if and only if the sequence $m$ admits a moment representation, namely, if there exists a nonnegative measure $F$ supported on $[0,1]$ such that $m(k)$ is the $k$th moment of $F$, i.e., $m(k) =\int x^k F(dx)$. Similarly, completely monotone functions on $\mathbb{R}^{+}\cup \{0\}$ can be represented as a scale mixture of exponentials \citep{Feller1939-fp,jewell1982mixtures}, and a completely monotone probability mass function (pmf) can be represented as a mixture of geometric pmfs \citep{Steutel1969-mn}. The latter fact was used in the recent work by \cite{Balabdaoui2020-nn} for the estimation of a completely monotone pmf using nonparametric least squares estimation.
		
		In the context of asymptotic variance estimation, the result of \cite{geyer1992practical} on the $\Gamma$ sequence can be refined using the concept of complete monotonicity. Recall that \cite{geyer1992practical} showed that the sequence $\Gamma$ obtained as the rolling sum of $\gamma$ with window size $2$, i.e., $\Gamma(k) = \gamma(2k) + \gamma(2k+1)$, is $2$-monotone. In this paper, we show that the $\Gamma$ sequence is not only $2$-monotone but completely monotone (Proposition \ref{prop:gamma_moment_seq}). This suggests that higher order shape structure could be exploited in the estimation of $\Gamma(k)$ and, consequently, the asymptotic variance. However, while the $\Gamma$ sequence is completely monotone, the set of completely monotone sequences is not entirely satisfactory to work with for our purpose of estimating the \textit{entire} autocovariance sequence $\gamma$, since $\gamma$ may not be a completely monotone sequence. 
		
		\paragraph{Our contribution and organization of the paper}
		
		To our knowledge, this is the first work in which the moment representation of the autocovariance sequence~\eqref{eq:spectralMeasure} is directly exploited in this manner to carry out shape-constrained inference for the estimation of the autocovariance sequence and asymptotic variance. 
		Our work is the first to use shape-constrained inference methods to provide a provably consistent estimator for the asymptotic variance for a Markov chain. The work of \cite{Balabdaoui2020-nn} on estimating a completely monotone pmf is the most similar to ours of which we are aware. However, \citet{Balabdaoui2020-nn} consider a substantially different setting involving the estimation of a completely monotone probability mass function (pmf) from iid samples. In our setting, the dependence between observations necessitates the use of different tools for the statistical analysis. To the best of our knowledge, this is the first work in which shape-constrained inference is used to alter the convergence property of input sequences as well.
		
		The remainder of the paper is organized as follows. In Section~\ref{sec:mcmcIntro}, we introduce background on Markov chains and prove Proposition~\ref{prop:gamma_moment_seq} on the representation of $\gamma$ and $\Gamma$ as moment sequences. In Section~\ref{sec:momentSequences}, we introduce our proposed estimator, the moment least squares estimator, and study some basic properties of the proposed estimator. In Section~\ref{sec:guarantee}, we provide statistical convergence results for the moment least squares estimator. In particular, we prove the almost sure convergence in the $\ell_2$ norm of the estimated autocovariance sequence (Theorem~\ref{thm:as_l2conv}), the almost sure vague convergence of the representing measure for the moment least squares estimator to the representing measure for $\gamma$ (Proposition~\ref{prop:vague_convergence_muhat}),  and the almost sure convergence of the estimated asymptotic variance (Theorem~\ref{thm:as_conv_avar}). In Section~\ref{sec:emp}, we show the results of our empirical study, in which the moment least squares estimator performs well relative to other state-of-the-art estimators for MCMC asymptotic variance and autocovariance sequence estimation.
		
		\section{Problem set-up\label{sec:mcmcIntro}}
		
		We now describe our setup in detail and fix some notation. We consider a $\psi$-irreducible, aperiodic Markov chain $X=\{X_t\}_{t=0}^{\infty}$ evolving over $t$ on a measurable space $(\mathsf{X},\mathscr{X})$, where the state space $\mathsf{X}$ is a complete separable metric space and $\mathscr{X}$ is the associated Borel $\sigma$-algebra. We let $\pi$ denote a probability measure defined on $(\mathsf{X},\mathscr{X})$ with respect to which we would like to compute expectations. We use $g:\mathsf{X}\to\mathbb{R}$ to denote a function for which it is of interest to obtain $\mu=\int g(x)\pi(dx)$. We define a transition kernel as a function $Q:\mathsf{X}\times\mathscr{X}\to [0,1]$ such that $Q(\cdot,A):\mathsf{X}\to [0,1]$ is an $\mathscr{X}$-measurable function for each $A\in\mathscr{X}$ and $Q(x,\cdot):\mathscr{X}\to [0,1]$ is a probability measure on $(\mathsf{X},\mathscr{X})$ for each $x\in \mathsf{X}$. For a probability measure $\pi$ on $(\mathsf{X},\mathscr{X})$, a probability kernel $Q$ is said to be $\pi$-stationary if $\pi (A)=\int Q(x,A)\pi(dx)$ for all $A\in\mathscr{X}$. 
		An initial measure $\nu$ on $\mathcal{X}$ and a transition kernel $Q$ define a Markov chain probability measure $P_\nu$ for $X=(X_0,X_1,X_2,\dots)$ on the canonical sequence space $(\Omega, \mathcal{F})$. We write $E_\nu$ to denote expectation with respect to $P_{\nu}$.

		For a function $f:\mathsf{X}\to\mathbb{R}$ and a transition kernel $Q$, we define the linear operator $Q$ by 
		\begin{align}
			Qf(x)=\int Q(x,dy) f(y)\label{eq:Qf}
		\end{align}We define $Q^0f(x)=f(x)$, $Q^1f(x)=Q f(x)$, and $Q^tf(x)=Q(Q^{t-1}f)(x)$ for $t>1$, and we define $Q^t(x,A)=Q^tI_A(x)$, where $I_A(\cdot)$ is the indicator function for the set $A$. 
		We let $L^2(\pi)$ be the space of functions which are square integrable with respect to $\pi$, i.e., $L_2(\pi) = \{f:\mathsf{X} \to \R; \int f(x)^2 \pi(dx)< \infty\}$. 
		For functions $f, g\in L^2(\pi)$, we define an inner product
		\begin{align}\label{def:fn_inner_product}
			\braket{f,g}_\pi = \int f(x) g(x) \pi(dx) .
		\end{align}
		We note that $L^2(\pi)$ is a Hilbert space equipped with the inner product \eqref{def:fn_inner_product}. For $f \in L^2(\pi)$, we define $\|f\|_{L^2(\pi)} = \sqrt{\braket{f,f}_\pi}$. Also, for an operator $Q$ on $L^2(\pi)$, we define $\|Q\|_{L^2(\pi)} = \sup_{f; \|f\|_{L^2(\pi)} \le 1} \|Qf\|_{L^2(\pi)}$ and we say $Q$ is bounded if $\|Q\|_{L^2(\pi)}<\infty$. 
		
		We say that a transition kernel $Q$ satisfies the reversibility property with respect to $\pi$ if \begin{align}
			\braket{f_1,Qf_2}_\pi = \braket{Qf_1,f_2}_\pi
			\label{eq:reversible}
		\end{align} for any functions $f_1,f_2 \in L^2(\pi)$, i.e., if $Q$ is a self-adjoint operator. 
		Reversibility with respect to $\pi$ is a sufficient condition for $\pi$-stationarity of $Q$, since for a reversible transition kernel $Q$, we have \begin{align*}
			\pi(A)=\int I_A(x)QI_X(x)\pi(dx)=\int I_X(x)QI_A(x)\pi(dx).
		\end{align*} 
		
		The spectrum of the operator $Q$ plays a key role in determining the mixing properties of a Markov chain with transition kernel $Q$. Recall that for an operator $T$ on the Hilbert space $L^2(\pi)$, the spectrum of $T$ is defined as
		\begin{align}\label{def:spectrum_of_Q}
			\sigma(T) = \{\lambda \in \mathbb{C}; (T-\lambda I)^{-1} \mbox{does not exist or is unbounded }\}.
		\end{align}
		For Markov operators $Q$, we define the spectral gap $\delta$ of $Q$ as $\delta = 1- \sup\{ |\lambda| ; \lambda \in \sigma(Q_0) \}$ where $Q_0$ is defined as
		\begin{align}\label{def:Q0}
			Q_0f = Qf - E_\pi[f(X_0)] f_0
		\end{align}
		and $f_0\in L^2(\pi)$ is the constant function such that $f_0(x) = 1$ for all $x\in \mathsf{X}$. It is easy to check that $Q_0$ is self-adjoint and bounded.
		If $Q$ is reversible, $Q$ has a positive spectral gap $(\delta>0)$ if and only if the chain is geometrically ergodic \citep{Roberts1997-qk, kontoyiannis2012geometric}. In addition, $(1-\delta)^k$ is the maximal lag $k$ correlation of any two functions, and therefore for any function $f$ and $Y_{fM} = M^{-1} \sum_{t=0}^{M-1} f(X_t)$, the asymptotic variance $\sigma_f^2$ of $ \sqrt{M}(Y_{fM}-E_\pi[f(X_0)])$ is bounded above by
		\begin{align*}
			\sigma^2_f = \gamma_f(0) + 2\sum_{k \ge 1}\gamma_f(k) \le \gamma_f(0) + 2\sum_{k \ge 1} (1-\delta)^k \gamma_f(0) = \frac{2-\delta}{\delta} \gamma_f(0) 
		\end{align*}
		where $\gamma_f(k) = {\rm Cov}_\pi(f(X_0), f(X_k))$.

		In the remainder, we consider a discrete time Markov chain $X=\{X_t\}_{t=0}^{\infty}$ with stationary distribution $\pi$ and $\pi$-reversible transition kernel $Q$ with a positive spectral gap. We let $g$ be a square integrable function with respect to $\pi$, and use $\gamma(k)$ defined by
		\begin{align*}
			\gamma(k)= {\rm Cov}_\pi\{g(X_0), g(X_{|k|})\} = \braket{g, Q_0^{|k|}g}_\pi  \quad\mbox{for } k\in\mathbb{Z}
		\end{align*}
		to denote the lag $k$ autocovariance of the stationary time series $\{g(X_t)\}_{t=0}^{\infty}$ obtained with $X_0\sim \pi$. We use $\gamma = \{\gamma(k)\}_{k\in\Z}$ to denote the autocovariance sequence on $\Z$.
		We summarize our assumptions on the Markov chain $X$ as follows for future reference:
		\begin{enumerate}[label=(A.\arabic*)]
			\item \label{cond:harris_ergodicity}(Harris ergodicity) $X$ is $\psi$-irreducible, aperiodic, and positive Harris recurrent.
			\item \label{cond:piReversible}(Reversibility) The transition kernel $Q$ is $\pi$-reversible for a probability measure $\pi$ on $(\mathsf{X},\mathscr{X})$.
			\item \label{cond:geometric_ergodicity}(Geometric ergodicity) There exists a real number $\rho<1$ and a non-negative function $M$ on the state space $\mathsf{X}$ such that
			\begin{align*}
				\|Q^n(x,\cdot) - \pi(\cdot)\|_{\rm TV} \le M(x)\rho^n, \mbox{  for all } x\in \mathsf{X},
			\end{align*}
			where $\|\cdot\|_{\rm TV}$ is the total variation norm.
		\end{enumerate}
		Throughout the paper, we assume that the function of interest $g:\mathsf{X}\to\mathbb{R}$ is in $L^2(\pi)$, i.e,
		\begin{enumerate}[label=(B.\arabic*)]
			\item \label{cond:integrability} (Square integrability)$\int g(x)^2\pi(dx)<\infty$.
		\end{enumerate}
		
		For the definitions of $\psi$-irreducibility, aperiodicity, and positive Harris recurrence, see e.g., \citet{meyn2009markov}. 
		Reversibility is a key requirement for our estimator because it allows us to use the shape constraints implied by the spectral decomposition of the Markov chain kernel (see Proposition \ref{prop:gamma_moment_seq}). Many practical transition kernels satisfy $\pi$-reversibility. Notably, all Metropolis-Hastings transition kernels satisfy reversibility. Additionally, all Gibbs component update kernels are reversible. 
		As noted by a referee, in practice, it is common to combine a set of reversible transition kernels $\{Q_k\}_{k=1}^K$, such as those from Metropolis-Hastings or Gibbs updates, to form a joint transition mechanism $Q$. The reversibility of the combined mechanism $Q$ depends on the way in which the individual kernels $Q_k$ are combined. For example, in deterministic scan sampling, where each update consists of sequentially applying $Q_k$, $k=1,\dots,K$, the resulting kernel $Q(x,A)=Q_1Q_2\cdots Q_K(x,A)$ is generally non-reversible. On the other hand, there are schemes for combining reversible kernels $Q_k$ in such a way that the resulting $Q$ is reversible. For example, the random scan transition kernel $Q=K^{-1}\sum_{k=1}^{K}Q_k$, corresponding to randomly selecting the transition kernel at each iteration, is reversible.
		Additionally, random permutation scans, in which at each iteration the reversible $Q_k$ are composed in a randomly permuted order, and palindromic scan updates, in which $Q=Q_1\dots Q_{K-1}Q_KQ_{K-1}\dots Q_{1}$, lead to reversible Markov chains~\citep[see, e.g., page 376 of][]{robert2004monte}. 
		Finally, we note that in data augmentation Gibbs sampling, the marginal chains are reversible~\citep[see, e.g.,][]{wongKongLiu,robert2004monte}.

		Geometric ergodicity implies exponential convergence of the Markov chain $X$ to its target distribution $\pi$. When the state space $\mathsf{X}$ is finite, all irreducible and aperiodic Markov chains are geometrically ergodic. While this is no longer true for infinite state space, geometric ergodicity remains a theoretically and practically important condition for Markov chains \citep[e.g.][]{roberts1998markov,jones2022markov}. For example, geometric ergodicity provides one of the simplest sufficient conditions for the Markov chain central limit theorem (CLT) to hold. In fact, for a reversible geometrically ergodic Markov chain, a finite second moment of the function of interest $g$ is sufficient to establish a CLT (e.g., \citealp{jones2004central}). The establishment of geometric ergodicity is usually done on a case-by-case analysis, and many works have studied geometric ergodicity of popular samplers (e.g., \citealp{mengersen1996rates,roberts1996geometric, jarner2000geometric, jarner2003necessary,johnson2012variable,chakraborty2017convergence,livingstone2019geometric,durmus2023convergence}).
		
		The following proposition shows that both the autocovariance sequence $\gamma$ and rolling sum $\Gamma$ of the sequence $\gamma$ with a window size of $2$ from a reversible chain have the following \textit{moment} representations, namely there exist measures $F$ and $G$ supported on $[-1,1]$ and $[0,1]$, respectively, such that $\gamma(k)$ and $\Gamma(k)$ are the $k$th moments of $F$ and $G$, respectively. Let $\mathcal{M}_\R$ denote the set of finite regular measures on $\R$.  
		
		\begin{prop}\label{prop:gamma_moment_seq}
			Assume \ref{cond:piReversible} and~\ref{cond:integrability}.  
			\begin{enumerate}
				\item The true autocovariance sequence $\gamma(k) = \braket{g, Q_0^k g}_\pi$, $k\in \Z$, has the following representation for some $F\in \mathcal{M}_\R$
				\begin{align}\label{eq:gamma_k}
					\gamma(k) = \int_{\sigma(Q_0)} x^{|k|} F(dx),
				\end{align}
				where $\sigma(Q_0)$ is the spectrum of the linear operator $Q_0$ defined as in \eqref{def:Q0}. Moreover, $\sigma(Q_0)$ lies on the real axis, and $\sigma(Q_0) \subseteq [-1,1]$.
				
				\item The sequence $\Gamma=\{\Gamma(k)\}_{k\in\mathbb{N}}$ defined by $\Gamma(k) = \gamma(2k)+\gamma(2k+1)$, $k\in\mathbb{N}$, has the following representation for some $G \in \mathcal{M}_\R$
				\begin{align}\label{eq:Gamma_k}
					\Gamma(k) = \int_{\sigma(Q_0^2)} x^{k} G(dx),
				\end{align}
				and $\sigma(Q_0^2) \subseteq [0,1]$.
				
				\item If we additionally assume \ref{cond:harris_ergodicity} and \ref{cond:geometric_ergodicity} in addition to \ref{cond:piReversible} and \ref{cond:integrability},  there exists $0<\delta_0\leq 1$ such that $\sigma(Q_0) \subseteq [-1+\delta_0,1-\delta_0] $ and  $\sigma(Q_0^2) \subseteq [0,(1-\delta_0)^2] $.
			\end{enumerate}
			
		\end{prop}
		The proof of Proposition \ref{prop:gamma_moment_seq} is deferred to Supplementary Material S3.1~\citep{berg2023efficientsuppl}. In the example below, the moment representation of the autocovariance sequence from a reversible Markov chain is illustrated using an AR(1) chain.
		
		\begin{exmp}{(Autoregressive chain example)\label{exmp:ar1exmp}}
			Consider an AR(1) autoregressive process with  
			$X_{t+1}=\rho X_t+\epsilon_{t+1}, \,\, t=0,1,2,\dots$
			, where $\epsilon_t\overset{iid}{\sim}N(0,\tau^2)$ and $\rho\in (-1,1)$. The stationary measure $\pi$ for the $X_t$ chain is the measure corresponding to a $N(0,\tau^2/(1-\rho^2))$ random variable, and the $X_t$ chain can be shown to be reversible with respect to $\pi$. Consider the autocovariance sequence $\gamma(k)=E_{\pi}[\bar{g}(X_0)\bar{g}(X_{|k|})]$ with the identity function $g(x) = x$. Since $E_\pi[g(X_0)] = 0$, we have
			\begin{align*}
				\gamma(k) = {\rm Cov}_\pi (g(X_0),g(X_k)) = {\rm Var}_\pi (g(X_0)) \rho^{|k|} = \frac{\tau^2}{1-\rho^2}\rho^{|k|}.
			\end{align*}
			Then $\gamma(k)$ can be represented as $\gamma(k) =\int x^{|k|}F(dx)$ for all $k \in \Z$ by letting $F=\frac{\tau^2}{1-\rho^2}\delta_{\rho}$, where $\delta_\rho$ denotes a unit point mass measure at $\rho$.
		\end{exmp}
		
		We note that the second statement of Proposition \ref{prop:gamma_moment_seq} implies that the $\Gamma(k)$ sequence is \textit{completely monotone}, and therefore is a refinement of the result in \cite{geyer1992practical} which showed that $\Gamma(k)$ is $2$-monotone. This is due to the theorem of \citet{Hausdorff1921-kl} below, in which an equivalence is shown between $[0,1]$-\textit{moment} sequences (sequences with the representation $m(k) = \int x^k F(dx)$ for some $F$ with ${\rm Supp}(F) \subseteq [0,1]$; see Definition \ref{def:representingMeasure} for the formal definition) and \textit{completely monotone} sequences satisfying inequalities \eqref{eq:k-monotonicity} for all $k,n \in \mathbb{N}$. The relationship between sequences admitting certain moment representations and their shape constraints will be further explored in the following Section \ref{sec:momentSequences}.
		
		\begin{thm}[Hausdorff moment theorem \citep{Hausdorff1921-kl}]\label{thm:hausdorff_moment}
			There exists a representing measure $\mu$ supported on $[0,1]$ for $m\in\R^\N$ if and only if $m$ is a completely monotone sequence. Additionally, if $m$ is a completely monotone sequence, the representing measure $\mu$ for $m$ is unique.
		\end{thm}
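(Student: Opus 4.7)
My plan is to prove the equivalence in two directions, with the forward implication being an elementary integration identity and the reverse requiring the construction of approximating atomic measures together with a weak-compactness argument; uniqueness will follow separately from the Weierstrass approximation theorem.

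For the forward direction, suppose $m(k)=\int_0^1 x^k\,\mu(dx)$ for a positive measure $\mu$ supported on $[0,1]$. Expanding the difference operator via the binomial formula $\Delta^n m(k)=\sum_{j=0}^n\binom{n}{j}(-1)^{n-j}m(k+j)$ and interchanging the finite sum with the integral, I would derive
\begin{equation*}
(-1)^n\Delta^n m(k)=\int_0^1 x^k(1-x)^n\,\mu(dx),
\end{equation*}
which is nonnegative for every $k,n\in\N$ since the integrand is nonnegative on $[0,1]$. This is the definition of complete monotonicity.

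For the reverse direction, assume $m$ is completely monotone. First I would observe that $0\le m(k)\le m(0)$ for all $k$ from the $n=0$ and $n=1$ complete monotonicity conditions. Next, for each $n\ge 1$ I construct the discrete positive measure on $[0,1]$
\begin{equation*}
\mu_n=\sum_{k=0}^n\binom{n}{k}(-1)^{n-k}\Delta^{n-k}m(k)\,\delta_{k/n},
\end{equation*}
whose weights are nonnegative by complete monotonicity. A short binomial identity (writing $(-1)^{n-k}\Delta^{n-k}=(I-E)^{n-k}$ with $E$ the unit shift, expanding, and summing using Vandermonde-type cancellations) shows $\mu_n([0,1])=m(0)$ for every $n$. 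Thus $\{\mu_n\}$ is a uniformly bounded family of positive measures on the compact interval $[0,1]$, and Helly's selection theorem yields a subsequence $\mu_{n_j}$ converging vaguely to some finite positive measure $\mu$ on $[0,1]$.

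To identify $\mu$ as a representing measure, I would verify $\int x^k\,\mu_n(dx)\to m(k)$ for each fixed $k\in\N$. Direct expansion gives $\int x^k\,\mu_n(dx)=\sum_{j=0}^n(j/n)^k\binom{n}{j}(-1)^{n-j}\Delta^{n-j}m(j)$, and rearranging the sum by the index of $m$ reduces it to $n^{-k}\sum_{\ell=0}^{k\wedge n}m(\ell)\binom{n}{\ell}\ell!\,S(k,\ell)$, where $S(k,\ell)$ is the Stirling number of the second kind. Only the $\ell=k$ term survives in the limit, producing $m(k)$, while lower-order terms vanish at rate $O(n^{-1})$ by the boundedness of $m$. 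Vague convergence of $\mu_{n_j}$ combined with continuity of $x\mapsto x^k$ on the compact set $[0,1]$ then yields $m(k)=\int_0^1 x^k\,\mu(dx)$. For uniqueness, if two positive measures on $[0,1]$ have identical moments then they agree on all polynomials, hence on $C([0,1])$ by uniform polynomial density, and therefore coincide by the Riesz representation theorem. The main obstacle is the moment-convergence step above, which requires careful binomial bookkeeping together with the boundedness provided by complete monotonicity; the remaining ingredients---the forward identity, Helly's selection theorem, and Weierstrass/Riesz for uniqueness---are classical.
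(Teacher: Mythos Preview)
The paper does not supply a proof of this theorem; it is stated as a classical result and attributed to Hausdorff (1921), then used as a black box (for instance in the proof of Proposition~\ref{prop:momentSeq}). So there is no ``paper's own proof'' to compare against.

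That said, your argument is correct and is essentially the classical Hausdorff proof. The forward identity $(-1)^n\Delta^n m(k)=\int_0^1 x^k(1-x)^n\,\mu(dx)$ is standard. For the converse, your atomic measures $\mu_n$ are exactly the Hausdorff--Bernstein approximants; the total-mass identity $\mu_n([0,1])=m(0)$ and the moment identity $\int x^p\,\mu_n(dx)=n^{-p}\sum_{\ell\le p} n^{(\ell)}S(p,\ell)\,m(\ell)$ (your $\binom{n}{\ell}\ell!$ is the falling factorial $n^{(\ell)}$) are both obtained by the same shift-operator/Vandermonde manipulation, and the limit $n^{-p}n^{(\ell)}\to\mathbf{1}\{\ell=p\}$ gives $\int x^p\,\mu_n(dx)\to m(p)$. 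Helly then furnishes a subsequential weak limit on the compact interval, and continuity of $x\mapsto x^p$ passes the moments to the limit. Uniqueness via Weierstrass plus Riesz is the standard finishing step. One small remark: on the compact space $[0,1]$ you are really using weak (not merely vague) sequential compactness of uniformly bounded positive measures, so that arbitrary continuous test functions---in particular the monomials---pass to the limit; this is what Helly gives you, so the argument is fine as written.
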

		
		We have from Proposition \ref{prop:gamma_moment_seq} that $\gamma$ is a $[-1,1]$-moment sequence. In general, $\gamma$ is not a completely monotone sequence as its representing measure can have mass in $[-1,0)$. A simple example is the autocovariance sequence from an AR(1) stationary chain with a negative AR(1) coefficient. The autocovariances oscillate between positive and negative values as $k\to \infty$ and therefore cannot decrease monotonically. 
		
		\subsection*{Notations} 
		We let $\mathbb{N}$ be the set of non-negative integers $\{0,1,2,...\}$ and $\mathbb{Z}$ the set of integers $\{...,-1,0,1,...\}$. For a sequence $m$ on $\N$ or $\Z$, we define an $\ell_p$ norm for $m$ by $\|m\|_p = (\sum_{k} |m(k)|^p)^{1/p}$ for $p=1,2,\dots,$ and $\|m\|_\infty =\max_k |m(k)|$. In addition, when $p=2$, we omit the subscript and write $\|\cdot\| = \|\cdot\|_2$.
		We use $\ell_p(\N)$ (or $\ell_p(\Z))$ to denote the space of sequences on $\N$ (or $\Z$) with finite $\ell_p$ norms. In particular, $\ell_1(\mathbb{Z})$ is the space of absolutely summable sequences on $\Z$, i.e., $\ell_1(\mathbb{Z})= \{m\in\R^\Z;\sum_{k=-\infty}^{\infty}|m(k)|<\infty\}$ and $\sqs$ is the space of square summable sequences on $\Z$, i.e., $\ell_2(\mathbb{Z})= \{m\in\R^\Z;\sum_{k=-\infty}^{\infty}m^2(k)<\infty\}$. We equip $\sqs$ with a usual inner product  $\braket{x,y}=\sum_{k=-\infty}^{\infty}x(k)y(k)$ for $x,y\in\sqs$. Then $\|x\| = \sqrt{\braket{x,x}} = \|x\|_2$. Also, for $\alpha \in [-1,1]$, we define $x_\alpha =\{x_\alpha(k)\}_{k\in \Z}$ such that $x_\alpha(k) = \alpha^{|k|}$ for $k \in \Z$. Note that for $\alpha \in(-1,1)$, $x_\alpha \in \sqs$. Finally, for a measure $\mu$, we let ${\rm Supp}(\mu)$ denote the support of $\mu$.

		\section{Moment least squares estimator (Moment LSE)} \label{sec:momentSequences}
		
		We now introduce the moment least squares estimator. We first formally define moment sequences and moment spaces. 
		\begin{definition}[moment sequence and representing measure]\label{def:representingMeasure}
			We say that a sequence $m$ is an $[a,b]$-moment sequence if there exists a positive measure $\mu$ supported on $[a,b]$ for some $-\infty< a \le b < \infty$ such that the equation
			\begin{align}\label{eq:momentRep}
				m(k) = \int x^{|k|} \mu(dx)
			\end{align}
			holds for any $k\in \N$ (if $m=\{m(k)\}_{k=0}^{\infty}$ is a sequence defined on $\N$) or any $k \in \Z$ (if $m=\{m(k)\}_{k=-\infty}^{\infty}$ is a sequence defined on $\Z$). We say that $\mu$ is a representing measure for the sequence $m$.
		\end{definition}
		
		For a closed set $C \subseteq \R$, we write $\mathscr{M}_\infty (C)$ to denote the the set of sequences on $\R^\Z$ with a moment representation with a measure supported on $C$. For example, $\mathscr{M}_\infty ([a,b])$ is the set of $[a,b]$-moment sequences. By definition, we have $\mathscr{M}_\infty(I_1) \subseteq \mathscr{M}_\infty (I_2)$ if $I_1\subseteq I_2$ for two closed intervals $I_1,I_2\subseteq\R$. The support $[a,b]$ has a close relationship with the shape constraints satisfied by sequences $m \in \mathscr{M}_\infty([a,b])$. When $[a,b]=[0,1]$, $\mathscr{M}_\infty ([0,1])$ is the space of completely monotone sequences. In general, the true autocovariance $\gamma$ does not belong to $\mathscr{M}_\infty ([0,1])$, but does belong to $\mathscr{M}_\infty([-1,1])$. Additionally, for a geometrically ergodic chain, Proposition~\ref{prop:gamma_moment_seq} shows $\gamma\in\mathscr{M}_{\infty}([-1+\delta,1-\delta])$ for any $\delta\ge 0$ such that $\delta \le \delta_0$, where $\delta_0$ is the spectral gap of $Q$ in Proposition \ref{prop:gamma_moment_seq}. Throughout the remainder of the paper, we will consider projections onto the set $\mathscr{M}_\infty([-1+\delta,1-\delta])$, and thus we let $\mathscr{M}_\infty(\delta) = \mathscr{M}_\infty([-1+\delta,1-\delta])$ for notational simplicity.
		
		Now we define the moment least squares estimator $\Pi_\delta(\rinit)$ resulting from an initial autocovariance sequence estimator $\rinit\in\ell_2(\mathbb{Z})$ by
		\begin{align}\label{eq:momentLS}
			\Pi_\delta(\rinit) &=\underset{m\in \Mld{\delta}}{\arg\min}\|\rinit-m\|^2 \\
			&=\underset{m\in \Mld{\delta}}{\arg\min}\sum_{k\in\mathbb{Z}}\{\rinit(k)-m(k)\}^2.\nonumber
		\end{align}
		Note that $\Pi_\delta(\rinit)$ is the closest \textit{moment} sequence with respect to some measure supported on $\ralpha$ to the input autocovariance sequence $\rinit$, with respect to the $\ell_2$ norm $\|\cdot\|$ on $\ell_2(\mathbb{Z})$. 
		This optimization problem can be formulated as a convex quadratic problem, which we discuss further in Section \ref{sec:computation_mLSE}.

		The optimization problem \eqref{eq:momentLS} has one hyperparameter $\delta$, which needs be chosen sufficiently small so that the true autocovariance sequence $\gamma$ is a feasible solution, in the sense that $\gamma\in \Mld{\delta}$, of the optimization problem \eqref{eq:momentLS}. Note that any value of $\delta$ such that $0 \le \delta\le \delta_\gamma$ makes $\gamma$ feasible for $\delta_\gamma = 1-\sup \{|x|; x \in \textrm{Supp}(F)\}$ where $F$ is the representing measure for $\gamma$.
		Empirically, choosing $\delta$ as large as possible subject to $\delta \le \delta_\gamma$ leads to better performance because, roughly speaking, larger $\delta$ corresponds to more shape regularization. However, the method appears to work for a wide range of $\delta$ as long as $\delta$ is chosen to be positive (see Section \ref{sec:emp} for details). We also propose a practical choice of $\delta$ in Section \ref{sec:emp}. Theoretically, we showed the consistency of the proposed estimator $\Pi_\delta(\rinit)$ for any $0 < \delta \le \delta_\gamma$.
		
		For the choice of the initial autocovariance sequence estimator, any estimator $\rinit$ from a Markov chain sample $X_0,X_1,...,X_{M-1}$ of size $M$ satisfying 
		\begin{enumerate}[label=(R.\arabic*)]
			\item \label{cond:R1}(a.s. elementwise convergence) $\rinit(k) \underset{M\to\infty}{\to} \gamma(k)$ for each $k\in \Z$, $P_x$-almost surely, for any initial condition $x \in \mathsf{X}$,
			\item \label{cond:R2}(finite support) $\rinit(k) =0 $ for $k\ge n(M)$ for some $n(M)<\infty$, and
			\item \label{cond:R3}(even function with a peak at 0) $\rinit(k) = \rinit(-k)$ and $\rinit(0) \ge |\rinit (k)|$ for each $k\in \Z$,
		\end{enumerate}
		is allowed. As we demonstrate in Proposition \ref{prop:initial_estimators}, the empirical autocovariance sequence $\tilde{r}_M$ satisfies assumptions~\ref{cond:R1}--\ref{cond:R3}. In addition,~\ref{cond:R1}--\ref{cond:R3} are satisfied by any windowed empirical autocovariance sequence $\check{r}_M$ of the form $\check{r}_M(k) = \tilde{r}_M(k) w_M(|k|)$, where $w_M(k)$ is any window function which meets the following conditions \ref{cond:W1} - \ref{cond:W3}:
		\begin{enumerate}[label=(W.\arabic*)]
			\item \label{cond:W1} $w_M(0)= 1$ for all $M \in \N$, 
			\item \label{cond:W2}$|w_M(k)| \le 1$ for all $k \in \N$ and $M \in \N$, 
			\item \label{cond:W3} $w_M(k) \to 1$ for any fixed $k$ as $M\to\infty$
		\end{enumerate}
		In particular, conditions \ref{cond:W1} - \ref{cond:W3} are satisfied for some widely used window functions such as the simple truncation window $w_M(k) = I(k < b_M)$ and the Parzen window function $w_M(k) = [1 - k^q / b_M^q] I(k < b_M)$ for $q \in \{1,2,3,\dots\}$, which is the modified Bartlett window when $q=1$.
		
		In the following subsection, we provide some results relating to moment sequences, and provide an alternative characterization of moment sequences in relation to complete monotonicity.
		
		\subsection{Characterization of $[a,b]$-moment sequences}
		
		While $\gamma$ is not completely monotone when the support of the representing measure for $\gamma$ is not contained in $[0,1]$, it still exhibits infinitely many constraints. Previous studies have provided characterizations of $[a,b]$-moment sequences \citep{nla.cat-vn2337563, chandler1988moment}. Specifically, an $[a,b]$-moment sequence $m$ can be characterized equivalently by the non-negativity of a specific family of quadratic forms derived from $m$, $a$, and $b$ (e.g., Theorem 3.13 in \citealp{schmudgen2017moment}). 
		
		In Proposition \ref{prop:momentSeq}, we present an alternative characterization for an $[a,b]$-moment sequence $m$ in terms of the complete monotonicity of a transformed sequence $T(m;[a,b])$. It is important to note that while Proposition \ref{prop:momentSeq} gives insights on which (infinite number of) constraints are imposed on an estimator at the sequence level by requiring the estimator to be in the $[a,b]$-moment space $\mathscr{M}_{\infty}([a,b])$, the actual enforcement of these constraints is achieved through a mixture representation as in \eqref{eq:momentRep}. It is also technically convenient to have this alternative characterization for $[a,b]$-moment sequences because we can, e.g., verify that a sequence is an $[a,b]$-moment sequence by checking whether $T(m;[a,b])$ is completely monotone, and guarantee the uniqueness of the representing measures of $[a,b]$-moment sequences based on Theorem \ref{thm:hausdorff_moment}.
		
		For a sequence $m=\{m(k)\}_{k=0}^{\infty}$ and constants $a<b$, we define $T:\R^{\N} \to \R^{\N}$ as follows: 
		\begin{align}\label{eq:def_T}
			T(m;a,b)(k) = (b-a)^{-k}\sum_{i=0}^{k}\binom{k}{i}m(i)(-a)^{k-i}, \quad k=0,1,2,\dots.
		\end{align}
		Note $T(m;a,b)(0) = m(0)$, and when $a=0, b=1$, we have $T(m;0,1) = m$.
		
		\begin{prop}[\mbox{$[a,b]$}-moment sequences]\label{prop:momentSeq}
			For a sequence $m=\{m(k)\}_{k=0}^{\infty}$ and $a<b$, there exists a representing measure $\mu$ for $m$ supported on $[a,b]$ if and only if the sequence $T(m; a,b)$ is completely monotone. Additionally, if $T(m; a,b)$ is completely monotone, then the representing measure for $m$ is unique. 
		\end{prop}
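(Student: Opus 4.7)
The plan is to reduce the $[a,b]$-moment problem to the $[0,1]$-moment problem (complete monotonicity, Theorem~\ref{thm:hausdorff_moment}) via the affine change of variables $y = (x-a)/(b-a)$, which maps $[a,b]$ bijectively onto $[0,1]$. The definition of $T(m;a,b)$ is precisely the binomial expansion one gets from pushing moments forward under this map, so the whole argument is essentially a bookkeeping exercise plus one invocation of Hausdorff.

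First, for the ``only if'' direction, suppose $m(k) = \int_{[a,b]} x^k \mu(dx)$ for a positive measure $\mu$ on $[a,b]$. Let $\nu$ be the pushforward of $\mu$ under $x\mapsto (x-a)/(b-a)$, which is a positive measure on $[0,1]$. I would compute the $k$th moment of $\nu$ by pulling back through the change-of-variables formula:
\begin{align*}
\int_{[0,1]} y^k \nu(dy)
&= \int_{[a,b]} \Bigl(\tfrac{x-a}{b-a}\Bigr)^{k} \mu(dx)
= (b-a)^{-k}\int_{[a,b]} \sum_{i=0}^{k}\binom{k}{i} x^i (-a)^{k-i}\,\mu(dx)\\
&= (b-a)^{-k}\sum_{i=0}^{k}\binom{k}{i}(-a)^{k-i} m(i) = T(m;a,b)(k).
\end{align*}
Thus $T(m;a,b)$ admits a representing measure on $[0,1]$, and Hausdorff's theorem yields that $T(m;a,b)$ is completely monotone.

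For the ``if'' direction, assume $T(m;a,b)$ is completely monotone. By Hausdorff's theorem there is a (unique) positive measure $\nu$ on $[0,1]$ with $T(m;a,b)(k)=\int_{[0,1]} y^k\,\nu(dy)$. Let $\mu$ be the pushforward of $\nu$ under the inverse map $y\mapsto (b-a)y+a$, a positive measure supported on $[a,b]$. To verify that $\mu$ represents $m$, I would invert the binomial transform. Setting $V_k=(b-a)^k T(m;a,b)(k)$, the defining identity reads $V_k = \sum_{i=0}^{k}\binom{k}{i}m(i)(-a)^{k-i}$, a standard binomial convolution whose inverse is $m(k)=\sum_{i=0}^{k}\binom{k}{i}V_i a^{k-i} = \sum_{i=0}^{k}\binom{k}{i}(b-a)^i a^{k-i} T(m;a,b)(i)$. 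On the other hand, expanding $((b-a)y+a)^k$ under the pushforward gives $\int_{[a,b]} x^k\mu(dx)=\sum_{i=0}^k \binom{k}{i}(b-a)^i a^{k-i}\int_{[0,1]} y^i\nu(dy)$, which matches the inversion formula exactly; hence $m(k)=\int_{[a,b]} x^k\mu(dx)$.

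Finally, uniqueness transfers along the same bijection: if $\mu_1,\mu_2$ both represent $m$ on $[a,b]$, their pushforwards $\nu_1,\nu_2$ on $[0,1]$ both represent $T(m;a,b)$ by the computation in the first step, so the uniqueness clause of Hausdorff's theorem gives $\nu_1=\nu_2$, hence $\mu_1=\mu_2$ by pushing back. The only subtlety is checking the binomial inversion identity cleanly; apart from that, the proof is a direct transfer of structure across the affine map, so no serious obstacle is expected.
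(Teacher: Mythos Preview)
Your proposal is correct and follows essentially the same route as the paper: reduce to Hausdorff via the affine map $x\mapsto (x-a)/(b-a)$, pushing measures forward and back and handling uniqueness by the same transfer. The only cosmetic difference is that in the ``if'' direction the paper verifies $m(k)=\int x^k\mu(dx)$ by induction on $k$ from the recursion $m(k)=(b-a)^k T(m;a,b)(k)-\sum_{i=0}^{k-1}\binom{k}{i}m(i)(-a)^{k-i}$, whereas you write down the closed-form binomial inverse directly; both are equivalent bookkeeping.
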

		The proof of Proposition \ref{prop:momentSeq} is deferred to Supplementary Material~S4.1~\citep{berg2023efficientsuppl}. 
		Since throughout this paper we will consider sequences $m=\{m(k)\}_{k=-\infty}^{\infty}$ satisfying the symmetry relation $m(k)=m(-k)$ for each $k\in\mathbb{Z}$, we state the following corollary.
		
		\begin{cor}\label{cor:moment_thm_Z}
			Consider a sequence $m=\{m(k)\}_{k\in\Z}$ which is symmetric around $0$, i.e., $m(k) = m(-k)$ for $k\in\Z$. Additionally, consider $a,b\in\mathbb{R}$ with $a<b$. Then there exists a measure $\mu$ supported on $[a,b]$ such that $m(k)=\int x^{|k|} \mu(dx)$ for all $k\in \mathbb{Z}$ if and only if the sequence $ T( \{m(k)\}_{k\in\N} ; a,b)$ is completely monotone. Additionally, if  $ T( \{m(k)\}_{k\in\N} ; a,b)$  is completely monotone, then the measure corresponding to $m$ is unique. 
		\end{cor}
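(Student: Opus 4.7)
The plan is to derive the corollary as an essentially immediate consequence of Proposition \ref{prop:momentSeq} by exploiting the symmetry assumption $m(k) = m(-k)$ to reduce the $\Z$-indexed problem to the $\N$-indexed problem already solved.

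First, I would observe that under the symmetry hypothesis the bi-infinite representation $m(k) = \int x^{|k|} \mu(dx)$ for all $k \in \Z$ is equivalent to the one-sided representation $m(k) = \int x^k \mu(dx)$ for all $k \in \N$. One direction is trivial (restriction to $k \ge 0$). For the converse, suppose the one-sided representation holds; then for any $k \in \Z$, symmetry gives $m(k) = m(|k|) = \int x^{|k|} \mu(dx)$, where $|k| \in \N$. Thus the existence part of the $\Z$-statement is literally equivalent to the existence part of the $\N$-statement for the sequence $\{m(k)\}_{k \in \N}$, and Proposition \ref{prop:momentSeq} then provides the equivalence with complete monotonicity of $T(\{m(k)\}_{k\in\N}; a, b)$.

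For uniqueness, I would again appeal directly to Proposition \ref{prop:momentSeq}: any representing measure $\mu$ on $[a,b]$ for the $\Z$-sequence is also a representing measure on $[a,b]$ for the one-sided sequence $\{m(k)\}_{k\in\N}$ (since $|k| = k$ on $\N$), and by the uniqueness clause of Proposition \ref{prop:momentSeq} such a measure is uniquely determined by the values $\{m(k)\}_{k\in\N}$.

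There is no real obstacle here; the corollary is a bookkeeping statement that moves between the two indexing conventions. The only thing worth being careful about in the write-up is making explicit that the symmetry of $m$ is exactly what is needed for the $\N$-indexed data $\{m(k)\}_{k\in\N}$ to determine $m$ on all of $\Z$, so that no additional constraints are imposed by the extension to negative indices.
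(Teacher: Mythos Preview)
Your proposal is correct and matches the paper's approach exactly: the paper states this corollary without proof immediately after Proposition~\ref{prop:momentSeq}, treating it as the obvious reduction via symmetry that you describe. Your write-up of why the symmetry assumption makes the $\Z$-indexed and $\N$-indexed representations equivalent, and why uniqueness transfers, is precisely the intended (and only needed) argument.
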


		\subsection{Properties of the moment least squares estimator\label{sec:projection}}
		
		The moment least squares estimator (moment LSE) $\Pi_\delta(\rinit)$ from an initial autocovariance sequence estimator $\rinit$ involves a projection from $\ell_2(\mathbb{Z})$ to $\Mld{[-1+\delta,1-\delta]}$. In this section, we show the existence and uniqueness of projections from $\ell_2(\mathbb{Z})$ to a moment sequence space 
		$\Mld{C}$ where $C \subseteq [-1,1]$ is a closed set. For an $r \in \ell_2(\Z)$, define $\Pi(r;C)$ be the projection of  $r$ onto $\Mld{C}$. We present a variational characterization of the projection $\Pi(r;C)$ . Finally, we obtain results on the properties of the representing measure of $\Pi(r;C)$ . Namely, we show that for fixed sample size $M$, if $r(k)=0$ for $k\ge n(M)$ for some $n(M)<\infty$, the representing measure $\hat{\mu}_{C}$ corresponding to $\Pi(r;C)$ is discrete, with finite support set ${\rm Supp}(\hat{\mu}_{C})$ having cardinality $|{\rm Supp}(\hat{\mu}_{C})|\leq n_0$, where $n_0$ is the smallest even number with $n_0>(n(M)-1)$.  Similar discreteness and finite support set results appear in the setting of nonparametric maximum likelihood estimation for mixture models, as in~\cite{lindsay1983geometry}, as well as in the least-squares estimation of a $k$-monotone or completely monotone pmf as in~\citet{Giguelay2017-sy} and \citet{Balabdaoui2020-nn}.

		First of all, for any closed $C \subset [-1,1]$, we show that $\Mld{C}$ is a closed and convex subset of $\sqs$. Then, since $\sqs$ is a Hilbert space equipped with the inner product $\braket{v,u} = \sum_{k\in\Z} v(k)u(k)$, we obtain by the Hilbert space projection theorem the existence and uniqueness of projections from $\ell_2(\mathbb{Z})$ to $\Mld{C}$.
		\begin{prop}\label{prop:existence_rhat}
			For any closed $C\subseteq [-1,1]$, the set $\Mld{C}$ is a closed, convex subset of $\sqs$. In particular, for any given vector $r\in\sqs$, $\Pi(r;C)$ exists and is unique in $\Mld{C}$.	
		\end{prop}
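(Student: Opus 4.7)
The plan is to verify that $\Ml$ is a convex and closed subset of the Hilbert space $\ell_2(\mathbb{Z})$, and then invoke the Hilbert space projection theorem to obtain existence and uniqueness of $\Pi_\delta(r)$. Convexity follows immediately from the linearity of the moment map: if $m_1,m_2 \in \Ml$ with representing measures $F_1, F_2$ supported on $[-1+\delta,1-\delta]$, then for any $\alpha\in[0,1]$, the sequence $\alpha m_1+(1-\alpha)m_2$ has representing measure $\alpha F_1+(1-\alpha)F_2$, which is a positive regular measure supported on the same interval, so $\alpha m_1+(1-\alpha)m_2\in\mathscr{M}_\infty(\delta)$. Since $\ell_2(\mathbb{Z})$ is itself a vector space, the convex combination also lies in $\ell_2(\mathbb{Z})$, giving convexity of $\Ml$.

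The main work lies in proving closedness. Suppose $\{m_n\}_{n\ge 1}\subset\Ml$ converges to some $m\in\ell_2(\mathbb{Z})$ in the $\ell_2$ norm; I need to exhibit a representing measure for $m$ supported on $[-1+\delta,1-\delta]$. Let $F_n$ denote the representing measure of $m_n$. Since $F_n([-1+\delta,1-\delta])=m_n(0)$ and $m_n(0)\to m(0)$ by pointwise convergence (which is implied by $\ell_2$ convergence), the total masses $\{F_n(\mathbb{R})\}$ are uniformly bounded. Because every $F_n$ is supported on the fixed compact interval $[-1+\delta,1-\delta]$, Helly's selection theorem yields a subsequence $\{F_{n_j}\}$ converging weakly to some finite positive measure $F$ supported on $[-1+\delta,1-\delta]$. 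For each fixed $k\in\mathbb{Z}$, the function $x\mapsto x^{|k|}$ is bounded and continuous on this compact interval, so
\begin{equation*}
m_{n_j}(k)=\int x^{|k|}\,F_{n_j}(dx)\;\longrightarrow\;\int x^{|k|}\,F(dx).
\end{equation*}
On the other hand, $m_{n_j}(k)\to m(k)$ by $\ell_2$ (hence pointwise) convergence, and uniqueness of limits gives $m(k)=\int x^{|k|}\,F(dx)$ for every $k$. Thus $m\in\mathscr{M}_\infty(\delta)$, and $m\in\ell_2(\mathbb{Z})$ by assumption, so $m\in\Ml$, establishing closedness.

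Having shown that $\Ml$ is a nonempty (the zero sequence lies in $\Ml$), closed, convex subset of the Hilbert space $\ell_2(\mathbb{Z})$, the classical Hilbert space projection theorem immediately gives existence and uniqueness of the minimizer $\Pi_\delta(r)$ of $\|r-m\|^2$ over $m\in\Ml$ for every $r\in\ell_2(\mathbb{Z})$. The main obstacle is the closedness step, specifically ensuring that the compactness of the support interval is correctly leveraged to extract a weakly convergent subsequence of representing measures and to pass moments through the limit; this is clean precisely because the support $[-1+\delta,1-\delta]$ is fixed and compact, and the moment integrands $x^{|k|}$ are continuous there.
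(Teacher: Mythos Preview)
Your proof is correct and follows a genuinely different route from the paper. For closedness, the paper uses the shape-constraint characterization of $\mathscr{M}_\infty(\delta)$ developed in Corollary~\ref{cor:moment_thm_Z}: given $m_n\to f$ in $\ell_2$, it forms the transformed sequence $\tilde f = T(f;-1+\delta,1-\delta)$ and verifies complete monotonicity of $\tilde f$ directly, by passing the finitely many linear inequalities $(-1)^r\Delta^r T(m_n)(k)\ge 0$ to the pointwise limit (with a separate trivial case when $\delta=1$). Your argument instead works at the level of the representing measures: uniform boundedness of $F_n([-1+\delta,1-\delta])=m_n(0)$ together with the fixed compact support lets you extract a weak-$*$ limit $F$, and continuity of $x\mapsto x^{|k|}$ on the compact interval transfers all moments. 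Your approach is arguably cleaner---it avoids the case split on $\delta$ and does not rely on the Hausdorff/complete-monotonicity machinery---while the paper's approach has the virtue of exercising Corollary~\ref{cor:moment_thm_Z} and keeping the argument entirely within the sequence-space shape constraints that motivate the estimator. Both reduce to the Hilbert projection theorem in the same way once closedness and convexity are established.
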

		Note that for any $M$, an initial input autocovariance sequence $\rinit$ satisfying \ref{cond:R1} -\ref{cond:R3} is in $\ell_2(\Z)$ since $\rinit(k)=0$ for $|k|\ge n(M)$, and therefore, the moment LSE $\Pi_\delta(\rinit) = \Pi(\rinit;[-1+\delta,1-\delta])$ is well defined. In addition, the optimization problem \eqref{eq:momentLS} is convex. The proof of Proposition \ref{prop:existence_rhat} uses the alternative characterization in Corollary \ref{cor:moment_thm_Z}  of an $[a,b]$-moment sequence and is deferred to Supplementary Material S4.2~\citep{berg2023efficientsuppl}. 
		
		Next, we present a few results regarding the projection $\Pi(r;C)$ of $r \in \sqs$ onto $\Mld{C}$. Proposition~\ref{prop:rhat_variational} provides a variational characterization of the projection $\Pi(r;C)$.
		\begin{prop} \label{prop:rhat_variational}
			Let $C$ be a closed subset of $[-1,1]$, and suppose $r\in\sqs$. 
			Then for $f\in\Mld{C}$, we have $f=\Pi(r;C)$ if and only if 
			\begin{enumerate}
				\item for all $\alpha\in C\cap (-1,1)$, $\braket{f,x_\alpha} \ge \braket{r,x_\alpha}$, i.e.,
				\begin{align}\label{eq:prop:rhat_variational}
					\sum_{k=-\infty}^{\infty}f(k)\alpha^{|k|}\geq \sum_{k=-\infty}^{\infty}r(k)\alpha^{|k|},
				\end{align}
				\item  $\braket{f,f} = \braket{f,r}$, i.e., $\sum_{k=-\infty}^{\infty}f(k)^2=\sum_{k=-\infty}^{\infty}f(k)r(k)$.
			\end{enumerate}
		\end{prop}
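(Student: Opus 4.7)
The plan is to treat $\Ml$ as a closed convex cone in the Hilbert space $\sqs$ and apply the standard variational characterization for projection onto a closed convex cone, using the moment representation to reduce the test elements to the one-parameter family $\{x_\alpha\}$. Proposition~\ref{prop:existence_rhat} already gives that $\Ml$ is closed and convex in $\sqs$, and one checks directly that it is also a cone: $0\in\Ml$ corresponds to the zero measure, and if $m$ has representing measure $F$ then $cm$ has representing measure $cF$ for any $c\geq 0$. The Hilbert projection theorem characterizes $f=\Pi_\delta(r)$ by $f\in\Ml$ together with $\braket{r-f, g-f}\leq 0$ for all $g\in\Ml$; specializing to $g=tf$ with $t>0$ and letting $t$ tend to $1$ from above and below squeezes out the normality equality $\braket{r-f,f}=0$, which is precisely condition~2, and the remaining inequality then reduces to $\braket{f-r, g}\geq 0$ for all $g\in\Ml$.

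The second step is to show that this last family of inequalities is equivalent to condition~1, namely $\braket{f-r, x_\alpha}\geq 0$ for all $\alpha\in \ralpha\cap(-1,1)$. The reverse direction is immediate because $x_\alpha\in\Ml$ for such $\alpha$ (represented by the Dirac measure at $\alpha$), so condition~1 is just the cone inequality evaluated at these extreme rays. For the forward direction I would use that any $g\in\Ml$ has the form $g(k)=\int\alpha^{|k|}F(d\alpha)$ for a finite nonnegative measure $F$ on $\ralpha$; interchanging sum and integral then gives
\begin{align*}
\braket{f-r, g} = \int \braket{f-r, x_\alpha}\,F(d\alpha) \geq 0
\end{align*}
whenever condition~1 holds, since the integrand is pointwise nonnegative on the support of $F$.

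The one step that needs genuine care is justifying the Fubini interchange. For $\delta>0$ this is painless: $|\alpha^{|k|}|\leq (1-\delta)^{|k|}$ on the support of $F$, and Cauchy--Schwarz gives $\sum_k |f(k)-r(k)|(1-\delta)^{|k|} \leq \|f-r\|\cdot\|x_{1-\delta}\|<\infty$, so the double series is absolutely convergent and Fubini applies termwise against $F$. The edge case $\delta=0$ needs a small extra argument because $x_{\pm 1}\notin\sqs$; however, for $m\in\Ml\subseteq\sqs$ the representing measure cannot charge $\{\pm 1\}$ (otherwise $m(k)$ would not decay), so one can approximate $F$ from inside by $F_n := F|_{[-1+1/n,\,1-1/n]}$, apply the previous Fubini argument to each $F_n$, and pass to the limit using dominated convergence together with $\sqs$-continuity of the inner product. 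Assembling these pieces delivers both directions of the equivalence.
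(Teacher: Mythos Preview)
Your argument is correct and is the standard cone--projection proof: identify $\Ml$ as a closed convex cone, use the variational inequality $\braket{r-f,g-f}\le 0$ for the Hilbert projection, extract the equality $\braket{f-r,f}=0$ by scaling $g=tf$, and then reduce the residual inequality $\braket{f-r,g}\ge 0$ for all $g\in\Ml$ to the one-parameter family $g=x_\alpha$ via the moment representation and Fubini. The paper itself omits the proof and simply points to Proposition~2.2 of \citet{Balabdaoui2020-nn}; your write-up is exactly the kind of ``minor modification'' the paper has in mind, so there is no substantive difference in approach to compare.

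One small simplification you can make: for the Fubini step you do not need to split into the cases $\delta>0$ and $\delta=0$ with an approximation $F_n$. The paper's Lemma~\ref{lem:inner_product} already proves, for any $g\in\Mld{0}$ with representing measure $F$ and any $h\in\sqs$, that $\braket{g,h}=\int_{[-1,1]}\braket{x_\alpha,h}\,F(d\alpha)$, by bounding $\sum_k |h(k)|\int|\alpha|^{|k|}F(d\alpha)\le \|h\|\,\|\tilde g\|$ with $\tilde g(k)=\int|\alpha|^{|k|}F(d\alpha)\in\sqs$. Combined with Lemma~\ref{lem:meas_boundary} (which you correctly invoke to rule out mass at $\pm 1$), this handles all $\delta\in[0,1]$ in one stroke, and your inside approximation by $F_n$ becomes unnecessary.
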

		A similar characterization of $\Pi(r;C)$ was also presented in \citet{Balabdaoui2020-nn}. We omit the proof as the result can be obtained by a minor modification of Proposition 2.2 in \citet{Balabdaoui2020-nn}.
		
		Proposition~\ref{prop:inner_product_r} below shows that \eqref{eq:prop:rhat_variational} holds with equality for $\alpha$ in the support of the representing measure for $\Pi(r;C)$ with $|\alpha|<1$.
		
		\begin{prop}\label{prop:inner_product_r}
			Let $C$ be a closed subset of $[-1,1]$, and suppose $r\in\sqs$. 
			Let $\hat{\mu}_{C}$ denote the representing measure for $\Pi(r;C)$. Then for each $\alpha\in {\rm Supp}(\hat{\mu}_C)\cap (-1,1)$, we have\begin{align*}
				\braket{\Pi(r;C), x_{\alpha}}=\braket{r, x_{\alpha}}.
			\end{align*}
		\end{prop}
		The proof for Proposition \ref{prop:inner_product_r} essentially follows from Proposition \ref{prop:rhat_variational}, as we have $\braket{\Pi(r;C) - r, x_\alpha} \ge 0$ for all $\alpha\in C\cap (-1,1)$ and from the second condition in Proposition \ref{prop:rhat_variational}
		\begin{align*}
			\int \braket{\Pi(r;C) - r, x_\alpha} \m_C(d\alpha)  =\braket{\Pi(r;C), \Pi(r;C)} -  \braket{r, \Pi(r;C)} =0,
		\end{align*}
		which implies $\braket{\Pi(r;C) - r, x_\alpha}=0$, for $\m_C$-almost every $\alpha$. We show that this implies that $\braket{\Pi(r;C) - r, x_\alpha}=0$ for all $\alpha\in {\rm Supp}(\hat{\mu}_C)\cap (-1,1)$. The details are deferred to Supplementary Material S4.3~\citep{berg2023efficientsuppl}.
		
		Finally, we show that for an input sequence $r$ with finite support, i.e., $r(k)=0$ for $|k| \ge M$ for some $M$, then the representing measure for the projection $\Pi(r;C)$ is discrete, and the support of the representing measure contains at most a finite number of points. More concretely, we have the following result:
		
		\begin{prop}\label{prop:finiteSupport}
			Let $C$ be a closed subset of $[-1,1]$, and suppose $r\in\sqs$ satisfies $r(k)=0$ for all $k$ with $|k|>M-1$ for $M<\infty$. Let $\Pi(r;C)$ denote the projection of $r$ onto $\Mld{C}$. Let $\m_C$ denote the representing measure for $\Pi(r;C)$. Then ${\rm Supp}(\hat{\mu}_C)$ contains at most $n$ points, where $n$ is the smallest even number such that $n>(M-1)$. Additionally, the support of $\m_C$ is contained in $(-1,1)$, that is, ${\rm Supp}(\m_C) \cap \{-1,1\} = \emptyset$.
		\end{prop}
		The proof follows similar lines as in~\citet{Balabdaoui2020-nn}, but requires nontrivial modification to deal with the possible support of $\hat{\mu}_{C}$ in $[-1,0)$. We defer the proof to Supplementary Material S4.4~\citep{berg2023efficientsuppl}. In particular, a moment LSE $\Pi_\delta(\rinit)$ for any initial estimator $\rinit$ satisfying condition \ref{cond:R2} has a representing measure which is discrete and has support containing at most $n_0$ points, where $n_0$ is the smallest even number such that $n_0>\{n(M)-1\}$. The representing measure for an arbitrary element of $f \in \Ml$ is in general neither finitely supported nor discrete. Thus Proposition~\ref{prop:finiteSupport} provides a considerable simplification of the form of the representing measure of $\prinit$.
		
		\subsection{Computation of the moment least squares estimator}\label{sec:computation_mLSE}
		
		Recall that $\Pi_\delta(\rinit)$ is the minimizer $m$ of $\sum_{k\in\mathbb{Z}}\{\rinit(k)-m(k)\}^2$ such that $m\in \Mld{\delta}$. By Proposition \ref{prop:finiteSupport}, since $\hat{\mu}_\delta$ is a discrete measure, we have
		\begin{align*}
			\Pi_\delta(\rinit)(k) = \int \alpha^{|k|} \hat{\mu}_\delta (d \alpha) = \sum_{\alpha \in {\rm Supp}(\hat{\mu}_\delta)} \alpha^{|k|} \hat{\mu}_\delta(\{\alpha\}).
		\end{align*}
		For a closed set $\Theta \subseteq [-1+\delta,1-\delta]$, recall $\Pi(r; \Theta)$ is the projection of $r$ to the set of $\ell_2(\Z)$ moment sequences with representing measure supported on $\Theta$. Note we have $\Pi_\delta(r) = \Pi(r; [-1+\delta,1-\delta]) = \Pi(r; \Theta_0)$ for any $\Theta_0$ such that ${\rm Supp}(\hat{\mu}_\delta) \subseteq \Theta_0\subseteq [-1+\delta,1-\delta]$. 
		
		For a finite $\Theta = \{\alpha_1,\dots,\alpha_s\}\subset (-1,1)$, $\Pi(r; \Theta)$ can be computed by solving a simple convex quadratic program. For $m\in\mathscr{M}_{\infty}(\Theta)\cap\ell_2(\mathbb{Z})$, the least squares objective in \eqref{eq:momentLS} becomes
		\begin{align}
			\sum_{k\in\Z} (\rinit(k)-m(k))^2 
			&= \sum_{k\in\Z} (\rinit(k)-\sum_{i=1}^s \alpha_i^{|k|} w_i)^2,
			\label{eq:opt_problem_eq1}
		\end{align}
		where we define $w_i = \mu_m(\{\alpha_i\})$ for $i=1,\dots,s$ and $s=|\Theta|$ where $\mu_m$ denotes the representing measure for $m$. Define $\mathbf{w}=[w_1,\dots,w_s]\in\R^s$. Define $\mathbf{a}=[a_1,\dots,a_s] \in \R^s$ such that
		$a_i = \sum_{k\in\mathbb{Z}} \alpha_i^{|k|}\rinit(k) = \sum_{k; \rinit(k)\ne 0} \alpha_i^{|k|}\rinit(k)$ and $\mathbf{B} \in \R^{s\times s}$ such that $\mathbf{B}_{ij} = \frac{1+\alpha_i \alpha_j}{1-\alpha_i\alpha_j}$. Note that $\mathbf{B}$ can be computed easily based on $\Theta$ and $\mathbf{a}$ can be computed easily based on $\Theta$ and $\rinit$ when $\rinit$ satisfies~\ref{cond:R2}.
		Then with some algebra, we can show that $\sum_{k\in\Z} (\rinit(k)-m(k))^2 = {\rinit}^\top \rinit -2 \mathbf{a}^\top \mathbf{w} + \mathbf{w}^\top \mathbf{B} \mathbf{w}$ (see Supplementary Material S1 of~\citealp{berg2023efficientsuppl}). Therefore the optimization problem becomes
		\begin{equation}
			\begin{split}
				&\min_{\mathbf{w}} \quad  {\rinit}^\top \rinit -2 \mathbf{a}^\top \mathbf{w} + \mathbf{w}^\top \mathbf{B} \mathbf{w}\\
				&\textrm{subject to} \quad \mathbf{w}\geq 0 
			\end{split}
			\label{eq:opt_problem_eq2}    
		\end{equation}
		which is a quadratic convex problem because $\mathbf{B}$ can be shown to be a positive definite matrix (Supplementary Material S1 in~\citealp{berg2023efficientsuppl}). Note that this objective is identical to the quadratic programming formulation of the non-negative least squares problem.
		
		For computing $\Pi_{\delta}(\rinit)$, in practice, we approximate the interval $[-1+\delta,1-\delta]$ with a finely spaced finite grid of $s$ points $\Theta=\{\alpha_1,...,\alpha_s\}\subseteq[-1+\delta,1-\delta]$. We then approximate the solution $\Pi_{\delta}(\rinit) = \Pi(\rinit; [-1+\delta,1-\delta])$ by $\Pi(\rinit; \Theta)$. Of course, if $\Theta$ contains the support of $\hat{\mu}_\delta$, then $\Pi_{\delta}(\rinit) =\Pi(\rinit; \Theta)$. We used a grid of $s=1001$ $\alpha$ values in $[-1+\delta, 1-\delta]$, where we first created an equally spaced grid $\mathcal{G}$ in a log-scale from $[0,1-\delta]$ and used $\mathcal{S} = -\mathcal{G}\cup \mathcal{G}$. We used the support reduction algorithm by \citet{groeneboom2008support} (ref. page 388) to solve \eqref{eq:opt_problem_eq2} with this choice of $\Theta$. In terms of run-time of our implementation, it took about $.056$ seconds on average to obtain $\Pi_\delta(\rinit)$ for $\rinit$ from a length $M=10000$ AR1 chain and the choice of grid above, on an author's typical personal laptop operating Mac OS with a 3.2 GHz processor. The implementation is available in \texttt{https://github.com/hsong1/momentLS}.

		\section{Statistical guarantee of the moment LS estimator\label{sec:guarantee}}
		
		In this section, we analyze the statistical performance of the moment LS estimator. Specifically, we show that the moment least squares estimator $\prinit$ obtained from any eligible initial autocovariance sequence estimator $\rinit$ satisfying \ref{cond:R1}-\ref{cond:R3} is $\ell_2$-strongly consistent for the true autocovariance sequence, and the asymptotic variance estimate based on $\prinit$ is strongly consistent for the true asymptotic variance $\sigma^2$ in \eqref{eq:clt}. 
		
		First, the following Proposition shows that a wide range of estimators are allowed for the choice of the initial autocovariance sequence estimator $\rinit$, including the empirical autocovariance estimator as well as windowed autocovariance estimators.
		
		\begin{prop}\label{prop:initial_estimators}
			Assume that a Markov chain $X=\{X_0,X_1,\dots,\}$ with transition kernel $Q$ satisfies conditions~\ref{cond:harris_ergodicity}-\ref{cond:geometric_ergodicity}, and the function of interest $g$ is in $L^2(\pi)$.
			The empirical autocovariance sequence $\tilde{r}_M$, defined as in~\eqref{eq:emp1_example}, satisfies conditions \ref{cond:R1}-\ref{cond:R3} where $Y_M = M^{-1}\sum_{t=0}^{M-1}g(X_t)$. In addition, any windowed autocovariance sequence estimator $\check{r}_M$ such that $\check{r}_M (k) = \tilde{r}_M(k) w_M(|k|)$ for any window function $w_M$ satisfying \ref{cond:W1}-\ref{cond:W3} satisfies \ref{cond:R1}-\ref{cond:R3}.
		\end{prop}
		The proof is deferred to S5.1 in the Supplementary Material~\citep{berg2023efficientsuppl}.
		
		\subsection{L2 consistency of the moment LSE}
		
		We now show the strong consistency (with respect to the $\ell_2$ metric) of the moment LSE $\prinit$ for the true autocovariance sequence, that is, we show $\|\prinit-\gamma\|\overset{a.s.}{\to}0$, for any $\delta>0$ satisfying ${\rm Supp}(F) \subseteq \ralpha$.
		
		First of all, we present the following key lemma, which bounds the $\ell_2$ distance between the projection $\Pi_\delta(r)$ of $r\in\ell_2(\mathbb{Z})$, and an element $\gamma$ in $\Ml$, with a mixture of geometrically weighted differences between the input $r$ and $\gamma$. This lemma plays a crucial role in our convergence analysis. In our setting, the standard bound derived from the property of the projection
		\begin{align*}
			\|\Pi_\delta(\rinit)-\gamma\|^2\leq \|\rinit-\gamma\|^2
		\end{align*}
		for $\gamma\in \Ml$ is not helpful because we do not assume the consistency, with respect to the $\ell_2$ metric, of $\rinit$ for the true autocovariance $\gamma$. In fact, the empirical autocovariance sequence seems not to converge to $\gamma$ in the $\ell_2$ sense.  Even so, we can still show that a geometrically weighted difference between $\rinit$ and $\gamma$ converges to $0$, which leads to the convergence of $\Pi(\rinit)$ to $\gamma$ in the $\ell_2$ sense.
		
		\begin{lem}\label{lem:l2diff_bound}
			Suppose $\delta\in[0,1]$, and let $f\in\Ml$. Additionally, suppose $r\in\sqs$. Then
			\begin{align}
				0\leq \|\Pi_\delta(r)-f\|^2\leq -\int \braket{x_{\alpha},r-f}\mu_f(d\alpha) +\int \braket{x_{\alpha},r-f}\m_\delta(d\alpha).
			\end{align}
			where $\m_\delta$ is the representing measure for $\Pi_\delta(r)$ and $\mu_f$ is the representing measure for $f$.
		\end{lem}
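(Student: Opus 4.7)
The plan is to expand $\|\Pi_\delta(r) - \gamma\|^2$ as a difference of two inner products, convert each inner product into an integral against the relevant representing measure via Fubini, and then invoke Propositions \ref{prop:rhat_variational} and \ref{prop:inner_product_r} to replace $\Pi_\delta(r)$ by $r$ inside each integrand.

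Concretely, I would first write
\begin{align*}
\|\Pi_\delta(r) - \gamma\|^2
= \braket{\Pi_\delta(r) - \gamma,\, \Pi_\delta(r)} - \braket{\Pi_\delta(r) - \gamma,\, \gamma}.
\end{align*}
Using the moment representations $\Pi_\delta(r)(k) = \int \alpha^{|k|}\, \m_\delta(d\alpha)$ and $\gamma(k) = \int \alpha^{|k|}\, F(d\alpha)$, a Fubini interchange rewrites these two terms as $\int \braket{\Pi_\delta(r) - \gamma,\, x_\alpha}\, \m_\delta(d\alpha)$ and $\int \braket{\Pi_\delta(r) - \gamma,\, x_\alpha}\, F(d\alpha)$ respectively. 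For the first integral, Proposition \ref{prop:inner_product_r} yields $\braket{\Pi_\delta(r),\, x_\alpha} = \braket{r,\, x_\alpha}$ for every $\alpha \in {\rm Supp}(\m_\delta) \cap (-1,1)$; because $\Pi_\delta(r) \in \sqs$ forces $\m_\delta$ to place no mass at $\pm 1$, this identity in fact holds $\m_\delta$-almost everywhere, so the first integral equals $\int \braket{r - \gamma,\, x_\alpha}\, \m_\delta(d\alpha)$. For the second integral, condition 1 of Proposition \ref{prop:rhat_variational} gives $\braket{\Pi_\delta(r) - \gamma,\, x_\alpha} \ge \braket{r - \gamma,\, x_\alpha}$ pointwise on $\ralpha \cap (-1,1)$; since $F$ is supported in $\ralpha$ and, as $\gamma \in \sqs$, also has no mass at $\pm 1$, integrating this inequality against $F$ and multiplying by $-1$ gives $-\braket{\Pi_\delta(r) - \gamma,\, \gamma} \le -\int \braket{r - \gamma,\, x_\alpha}\, F(d\alpha)$. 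Combining the two upper bounds yields the claimed inequality, and the lower bound $0 \le \|\Pi_\delta(r) - \gamma\|^2$ is automatic.

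The main obstacle is justifying the Fubini interchange that rewrites $\braket{s,\, \int x_\alpha\, \nu(d\alpha)}$ as $\int \braket{s,\, x_\alpha}\, \nu(d\alpha)$ for $s \in \sqs$ and $\nu \in \{\m_\delta,\, F\}$. Cauchy--Schwarz gives $|\braket{s,\, x_\alpha}| \le \|s\|\, \|x_\alpha\|$ with $\|x_\alpha\|^2 = (1+\alpha^2)/(1-\alpha^2)$, and the identity
\begin{align*}
\Bigl\|\int x_\alpha\, \nu(d\alpha)\Bigr\|^2 = \iint \frac{1+\alpha\beta}{1-\alpha\beta}\, \nu(d\alpha)\, \nu(d\beta)
\end{align*}
shows that $\alpha \mapsto \|x_\alpha\|$ is $\nu$-integrable whenever the moment sequence of $\nu$ lies in $\sqs$. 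Applying this with $\nu = \m_\delta$ (whose moment sequence is $\Pi_\delta(r) \in \sqs$) and $\nu = F$ (whose moment sequence is $\gamma \in \sqs$) verifies both interchanges, after which the remainder of the argument reduces to the algebraic manipulation sketched above.
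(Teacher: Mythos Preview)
Your overall strategy---decompose $\|\Pi_\delta(r)-\gamma\|^2$ into two inner products, pass to integrals against the representing measures, then invoke Proposition~\ref{prop:inner_product_r} on the $\m_\delta$-side and Proposition~\ref{prop:rhat_variational} on the $F$-side---is exactly the paper's argument, just written with the slightly cleaner splitting $\braket{\Pi_\delta(r)-\gamma,\Pi_\delta(r)}-\braket{\Pi_\delta(r)-\gamma,\gamma}$ rather than the paper's full expansion $\braket{\gamma,\gamma}-2\braket{\Pi_\delta(r),\gamma}+\braket{\Pi_\delta(r),\Pi_\delta(r)}$. These are algebraically equivalent.

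There is, however, a gap in your Fubini justification. You claim that the identity $\|\int x_\alpha\,\nu(d\alpha)\|^2=\iint\frac{1+\alpha\beta}{1-\alpha\beta}\,\nu(d\alpha)\nu(d\beta)$, together with finiteness of the right-hand side, forces $\alpha\mapsto\|x_\alpha\|$ to be $\nu$-integrable. This implication fails: one can construct a finite measure $\nu$ on $(-1,1)$ (e.g.\ with density comparable to $(1-\alpha)^{-1/2}/|\log(1-\alpha)|$ near $1$) whose moment sequence lies in $\sqs$ but for which $\int\|x_\alpha\|\,\nu(d\alpha)=\int\sqrt{(1+\alpha^2)/(1-\alpha^2)}\,\nu(d\alpha)=\infty$. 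The finiteness of the double integral does not control the diagonal. The correct route, which the paper records as Lemma~\ref{lem:inner_product}, is to verify Tonelli directly: set $\tilde m(k)=\int|\alpha|^{|k|}\,\nu(d\alpha)$, observe $\tilde m(2k)=m(2k)$ and $\tilde m(2k+1)\le\tilde m(2k)$, so $\|\tilde m\|^2\le 3m(0)^2+4\sum_{k\ge1}m(2k)^2<\infty$, whence $\sum_k|s(k)|\tilde m(k)\le\|s\|\,\|\tilde m\|<\infty$ and Fubini applies. With this correction your proof goes through and matches the paper's.
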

		\begin{proof}
			Clearly $0\leq \|\Pi_\delta(r)-f\|^2$. We have,
			\begin{align}\label{eq:lem2-eq1}
				\|f-\Pi_\delta(r)\|^2=\braket{f,f}-2\braket{\Pi_\delta(r),f}+\braket{\Pi_\delta(r),\Pi_\delta(r)}.
			\end{align}
			First, for the third term in \eqref{eq:lem2-eq1}, by Proposition \ref{prop:inner_product_r} and Lemma 4 in the Supplementary Material~\citep{berg2023efficientsuppl}, we have
			\begin{align*}
				\braket{\Pi_\delta(r),\Pi_\delta(r)}&=\int\braket{x_{\alpha},\Pi_\delta(r)}\m_\delta(d\alpha)\\
				&=\int\braket{x_{\alpha},r}\m_\delta(d\alpha)\\
				&=\int\{\braket{x_{\alpha},r-f}+\braket{x_{\alpha},f}\}\m_\delta(d\alpha)\\
				&=\int\braket{x_{\alpha},r-f}\m_\delta(d\alpha)+\braket{\Pi_\delta(r),f},
			\end{align*} where the second equality follows from $\braket{x_{\alpha},\Pi_\delta(r)}=\braket{x_{\alpha},r}$ for all $\alpha\in {\rm Supp}(\m)$. Thus, \eqref{eq:lem2-eq1} becomes,
			\begin{align*}
				\|f-\Pi_\delta(r)\|^2=\braket{f,f}-\braket{\Pi_\delta(r),f}+\int\braket{x_{\alpha},r-f}\m_\delta(d\alpha).
			\end{align*}
			Now, for the second term in~\eqref{eq:lem2-eq1},
			\begin{align*}
				\braket{\Pi_\delta(r),f}&=\int\braket{x_{\alpha},\Pi_\delta(r)}\mu_f(d\alpha)\\
				&\geq \int\braket{x_{\alpha},r}\mu_f(d\alpha)\\
				&=\int\{\braket{x_{\alpha},r-f}+\braket{x_{\alpha},f}\}\mu_f(d\alpha)\\
				&=\int\braket{x_{\alpha},r-f}\mu_f(d\alpha)+\braket{f,f}.
			\end{align*} 
			where for the second inequality we use Proposition \ref{prop:inner_product_r} which states $\braket{x_{\alpha},\Pi_\delta(r)}\ge\braket{x_{\alpha},r}$ for all $\alpha \in [-1+\delta,1-\delta]\cap(-1,1)$, as well as Lemma 2 in the Supplementary Material~\citep{berg2023efficientsuppl}.
			Therefore, we obtain,
			\begin{align*}
				\|f-\Pi_{\delta}(r)\|^2 \leq -\int \braket{x_{\alpha},r-f}\mu_f(d\alpha)+\int \braket{x_{\alpha},r-f} \m_\delta(d\alpha).
			\end{align*}
		\end{proof}
		
		The next two propositions, Proposition \ref{prop:xalpha_conv} and \ref{prop:finite_muhat}, serve as the basis for proving the moment LS estimator's $\ell_2$ consistency by proving the uniform convergence of the geometrically weighted difference between $\rinit$ and $\gamma$ and the finiteness of the representing measure of $\Pi(\rinit)$. 
		\begin{prop}\label{prop:xalpha_conv}
			Let $\rinit$ denote an initial autocovariance sequence estimator satisfying \ref{cond:R1}-\ref{cond:R3}. Let $\mathcal{K}$ denote a nonempty compact set with $\mathcal{K}\subseteq (-1,1)$. Then we have
			\begin{align}
				\sup_{\alpha \in \mathcal{K}} |\braket{\rinit-\gamma, x_\alpha}| \to 0 \quad P_x\mbox{-almost surely,}
			\end{align}
			as $M\to\infty$, for each initial condition $x\in\mathsf{X}$.
		\end{prop}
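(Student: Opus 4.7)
}

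The plan is to reduce the uniform control of $\braket{\rinit - \gamma, x_\alpha}$ over $\alpha \in \mathcal{K}$ to two pieces: a finite truncation that I can handle with the pointwise a.s.\ convergence of \ref{cond:R1}, and a geometric tail that I can bound uniformly using \ref{cond:R3} together with the crude bound $|\gamma(k)| \le \gamma(0)$. The key observation is that compactness of $\mathcal{K} \subseteq (-1,1)$ furnishes a single $\rho = \max_{\alpha \in \mathcal{K}}|\alpha| < 1$ that dominates $|\alpha|^{|k|}$ uniformly over $\mathcal{K}$, which is what makes the tail summable.

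First I would fix $\rho = \max_{\alpha \in \mathcal{K}} |\alpha| < 1$ and write
\[
\braket{\rinit - \gamma, x_\alpha} = \sum_{|k| \le K} \bigl(\rinit(k) - \gamma(k)\bigr)\alpha^{|k|} + \sum_{|k| > K}\bigl(\rinit(k) - \gamma(k)\bigr)\alpha^{|k|}
\]
for an integer $K$ to be chosen. For the tail, using \ref{cond:R3} I have $|\rinit(k)| \le \rinit(0)$ and using the moment representation \eqref{eq:spectralMeasure} I have $|\gamma(k)| = |\int x^{|k|}F(dx)| \le F([-1,1]) = \gamma(0)$. On the probability-one event where $\rinit(0) \to \gamma(0)$ (guaranteed by \ref{cond:R1} with $k=0$), there is a random but almost surely finite $M_0$ such that $\rinit(0) \le \gamma(0) + 1$ for all $M \ge M_0$. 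Hence for $M \ge M_0$,
\[
\sup_{\alpha \in \mathcal{K}}\Bigl| \sum_{|k| > K}\bigl(\rinit(k) - \gamma(k)\bigr)\alpha^{|k|}\Bigr| \;\le\; \bigl(2\gamma(0) + 1\bigr) \sum_{|k| > K}\rho^{|k|},
\]
which can be made smaller than any prescribed $\varepsilon/2$ by choosing $K$ large, independently of $M$.

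For the finite part, I would take the countable intersection over $k \in \mathbb{Z}$ of the a.s.\ events in \ref{cond:R1} (still a probability-one event) and work on this event together with the event above. Bounding $|\alpha^{|k|}| \le 1$ uniformly for $\alpha \in \mathcal{K}$, I obtain
\[
\sup_{\alpha \in \mathcal{K}} \Bigl| \sum_{|k| \le K}\bigl(\rinit(k) - \gamma(k)\bigr)\alpha^{|k|} \Bigr| \;\le\; \sum_{|k| \le K}|\rinit(k) - \gamma(k)|,
\]
which is a finite sum whose terms each vanish a.s.\ as $M \to \infty$, so the whole sum is eventually below $\varepsilon/2$. Combining with the tail bound yields $\sup_{\alpha \in \mathcal{K}}|\braket{\rinit - \gamma, x_\alpha}| < \varepsilon$ for all sufficiently large $M$ on a probability-one event, which gives the claim.

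The main obstacle is really just the bookkeeping for the tail: I need the tail bound to hold uniformly in $M$, which forces me to control $\rinit(0)$ by a deterministic-like quantity. The use of \ref{cond:R3} together with \ref{cond:R1} at $k=0$ is what makes this work — without the symmetric peak condition, one could not bound the contribution of large-lag terms from $\rinit$ in a way that respects the geometric decay. Apart from that, the argument is a standard truncation-plus-compactness step and involves no delicate tools.
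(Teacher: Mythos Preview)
Your argument is correct. It is also cleaner than the paper's proof, and the route is genuinely different, so a short comparison is worthwhile.

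The paper proceeds in two stages. First it shows pointwise convergence $\braket{x_\alpha,\rinit-\gamma}\to 0$ a.s.\ for each fixed $\alpha\in(-1,1)$ via the same truncation idea you use. It then upgrades this to uniform convergence over $\mathcal{K}$ by an $\epsilon$-net argument: it covers $[-1+\delta_0,1-\delta_0]\supseteq\mathcal{K}$ with finitely many small intervals and controls the oscillation within each interval via an auxiliary Lipschitz-type bound
\[
|\braket{r,x_\alpha-x_\beta}|\le \frac{2r(0)}{(1-|\alpha|)(1-|\beta|)}\,|\alpha-\beta|,
\]
applied separately to $r=\rinit$ and $r=\gamma$.

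Your approach bypasses the net argument entirely by exploiting compactness at the outset: setting $\rho=\max_{\alpha\in\mathcal{K}}|\alpha|<1$ makes the tail bound $\sum_{|k|>K}\rho^{|k|}$ uniform in $\alpha$ from the start, so a single truncation step delivers uniform convergence directly. This is more elementary and avoids the auxiliary lemma altogether. The paper's equicontinuity route is a bit more general in spirit (it would adapt, say, to weight sequences $x_\alpha$ not dominated by a single geometric envelope), but for the statement at hand your argument is shorter and loses nothing.
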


		\begin{prop}\label{prop:finite_muhat}
			For a given $\delta>0$ and an initial autocovariance sequence estimator $\rinit$ satisfying \ref{cond:R1}-\ref{cond:R3}, let $\m_{\delta,M}$ denote the representing measure for $\prinit$. Then there exists a constant $C_{\delta,\gamma}<\infty$ with $C_{\delta,\gamma}$ depending only on $\gamma$ and $\delta$ such that \begin{align*}
				\underset{M\to\infty}{\lim\sup}\;\hat{\mu}_{\delta,M}([-1+\delta,1-\delta])\leq C_{\delta,\gamma}
			\end{align*} $P_x$-almost surely for any $x\in\mathsf{X}$. In particular, $\hat{\mu}_{\delta,M}([-1+\delta,1-\delta])$ remains bounded almost surely.
		\end{prop}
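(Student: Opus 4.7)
The plan starts from the observation that $\hat{\mu}_{\delta,M}(\ralpha) = \int 1\cdot\hat{\mu}_{\delta,M}(d\alpha) = \prinit(0)$, so it suffices to bound $\prinit(0)$. I would do this by sandwiching $\|\prinit\|$ between a lower bound in terms of $\prinit(0)$ and an upper bound coming from Lemma~\ref{lem:l2diff_bound}.

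For the lower bound, I claim that every $f\in\Ml$ with representing measure $\mu$ satisfies $\|f\|^2 \ge c_\delta\, f(0)^2$, where $c_\delta := (1-(1-\delta)^2)/(1+(1-\delta)^2) > 0$. Indeed, by Fubini and the geometric-series identity (valid because $|\alpha\beta|\le (1-\delta)^2<1$ when $\delta>0$),
\begin{align*}
\|f\|^2 = \sum_{k\in\Z}\!\left(\int\alpha^{|k|}\mu(d\alpha)\right)^{\!2} = \int\!\!\int \frac{1+\alpha\beta}{1-\alpha\beta}\,\mu(d\alpha)\,\mu(d\beta).
\end{align*}
Since $t\mapsto(1+t)/(1-t)$ is increasing and $\alpha\beta \ge -(1-\delta)^2$ whenever $\alpha,\beta\in\ralpha$, the integrand is bounded below by $c_\delta$, giving $\|f\|^2 \ge c_\delta\, f(0)^2$.

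For the upper bound, set $\epsilon_M := \sup_{\alpha\in\ralpha} |\braket{x_\alpha, \rinit - \gamma}|$, which tends to $0$ almost surely by Proposition~\ref{prop:xalpha_conv} (applied to the compact set $\ralpha\subset(-1,1)$). Applying Lemma~\ref{lem:l2diff_bound}, and using that both $F$ (supported on $[-1+\delta_0,1-\delta_0]\subseteq\ralpha$) and $\hat{\mu}_{\delta,M}$ are supported on $\ralpha$,
\begin{align*}
\|\prinit - \gamma\|^2 \le \epsilon_M \bigl[F(\ralpha) + \hat{\mu}_{\delta,M}(\ralpha)\bigr] = \epsilon_M \bigl[\gamma(0) + \prinit(0)\bigr].
\end{align*}
Combining this with the lower bound and the triangle inequality $\|\prinit\|\le\|\gamma\| + \|\prinit - \gamma\|$, and setting $a_M := \prinit(0)$, I obtain
\begin{align*}
\sqrt{c_\delta}\,a_M \le \|\gamma\| + \sqrt{\epsilon_M(\gamma(0) + a_M)}.
\end{align*}
Since $\epsilon_M\to 0$ a.s., treating this as a quadratic in $\sqrt{a_M}$ (or equivalently isolating the $\sqrt{a_M}$ term and squaring) produces an explicit constant $C_{\delta,\gamma}$ depending only on $\gamma$ and $\delta$ with $\limsup_{M\to\infty} a_M \le C_{\delta,\gamma}$ almost surely.

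The main obstacle is the self-referential structure of the upper bound: the control on $\|\prinit - \gamma\|^2$ already involves the very quantity $\prinit(0)$ that I want to bound. The resolution is the reverse inequality $\|f\|^2 \ge c_\delta\, f(0)^2$, whose positivity depends crucially on $\delta>0$; were $\delta=0$ permitted, the representing measure could concentrate mass arbitrarily near $\pm 1$, $c_\delta$ would degenerate to $0$, and the total mass could indeed blow up.
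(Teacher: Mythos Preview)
Your argument is correct, and in fact your lower bound $\|f\|^2\ge c_\delta f(0)^2$ is exactly the paper's: they write $C_0:=\inf_{\alpha,\alpha'\in\ralpha}\braket{x_\alpha,x_{\alpha'}}$, and a moment's computation shows $C_0=c_\delta$. The difference is in the upper bound. The paper avoids the detour through Lemma~\ref{lem:l2diff_bound} and the triangle inequality by using Proposition~\ref{prop:inner_product_r} directly: since $\braket{\prinit,x_\alpha}=\braket{\rinit,x_\alpha}$ on the support of $\hat\mu_{\delta,M}$, one has
\[
\|\prinit\|^2=\int\braket{\rinit,x_\alpha}\,\hat\mu_{\delta,M}(d\alpha)\le\Bigl(\sup_{\alpha\in\ralpha}|\braket{\rinit,x_\alpha}|\Bigr)\,\hat\mu_{\delta,M}(\ralpha),
\]
which, combined with the lower bound, gives $\hat\mu_{\delta,M}(\ralpha)\le C_0^{-1}\sup_\alpha|\braket{\rinit,x_\alpha}|$ with no implicit quadratic to solve. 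The supremum is then controlled by splitting into $|\braket{\rinit-\gamma,x_\alpha}|$ (handled by Proposition~\ref{prop:xalpha_conv}) and $|\braket{\gamma,x_\alpha}|\le\gamma(0)(2-\delta)/\delta$.

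One substantive point: your route through Lemma~\ref{lem:l2diff_bound} requires $\gamma\in\Ml$, i.e.\ $\delta\le\delta_0$, which you tacitly assume when you say $F$ is supported on $[-1+\delta_0,1-\delta_0]\subseteq\ralpha$. The proposition as stated, and the paper's proof, hold for \emph{any} $\delta>0$; the paper's upper bound only needs $|\gamma(k)|\le\gamma(0)$, not $\gamma\in\Ml$. This is harmless for the downstream applications (Theorems~\ref{thm:as_l2conv} and~\ref{thm:as_conv_avar} already impose $\delta\le\delta_0$), but your version proves slightly less than what is claimed.
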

		
		The proofs for Propositions \ref{prop:xalpha_conv} and \ref{prop:finite_muhat} are in Supplementary Material S5.2 and S5.3~\citep{berg2023efficientsuppl}.
		Finally, we present the main result of this section in Theorem~\ref{thm:as_l2conv} below, which shows that the moment LSE is $\ell_2$ consistent for the true autocovariance sequence $\gamma$. This result is the consequence of the key inequality in Lemma \ref{lem:l2diff_bound} as well as the uniform convergence of $\braket{x_\alpha, \rinit-\gamma}$ and finiteness of the representing measure of $\Pi_\delta(\rinit)$ in Proposition \ref{prop:xalpha_conv} and \ref{prop:finite_muhat}.
		
		\begin{thm}[$\ell_2$-consistency of Moment LSEs]\label{thm:as_l2conv} Suppose $X_{0},X_1,...,$ is a Markov chain with transition kernel $Q$ satisfying~~\ref{cond:harris_ergodicity}-\ref{cond:geometric_ergodicity}, and suppose $g:\mathsf{X}\to\mathbb{R}$ satisfies~\ref{cond:integrability}. Let $\gamma$ denote the autocovariance sequence as defined in Proposition~\ref{prop:gamma_moment_seq}, and let $F$ denote the representing measure for $\gamma$. Suppose $\delta>0$ is chosen so that $F$ is supported on $\ralpha$. Let $\rinit$ be an initial autocovariance sequence estimator satisfying conditions \ref{cond:R1} - \ref{cond:R3}. 
			Then \begin{align*}
				\|\gamma-\prinit\|^2 \underset{M\to\infty}{\to} 0, \,\,\,P_x\mbox{-a.s.}
			\end{align*}
			for each initial condition $x\in \mathsf{X}$.
		\end{thm}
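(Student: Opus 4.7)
The plan is to apply the key inequality of Lemma \ref{lem:l2diff_bound} with $r=\rinit$, and then control the two integrals on the right-hand side separately via the uniform convergence result of Proposition \ref{prop:xalpha_conv} together with the mass bounds of Proposition \ref{prop:finite_muhat}. Let $\m_{\delta,M}$ denote the representing measure of $\prinit$ and $F$ the representing measure of $\gamma$. By hypothesis, both $\m_{\delta,M}$ and $F$ are supported on the compact set $\mathcal{K}=\ralpha \subset (-1,1)$, so Lemma \ref{lem:l2diff_bound} yields
\begin{align*}
0 \leq \|\prinit - \gamma\|^2 \leq -\int_{\mathcal{K}} \braket{x_\alpha, \rinit-\gamma}\, F(d\alpha) + \int_{\mathcal{K}} \braket{x_\alpha, \rinit-\gamma}\, \m_{\delta,M}(d\alpha).
\end{align*}

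Next, I would bound the absolute value of each of the two integrals by the product $\bigl(\sup_{\alpha\in\mathcal{K}}|\braket{x_\alpha,\rinit-\gamma}|\bigr)\cdot (\text{total mass of the measure})$. For the $F$ integral, the total mass $F(\mathcal{K})=\gamma(0)$ is a fixed finite constant. For the $\m_{\delta,M}$ integral, Proposition \ref{prop:finite_muhat} guarantees that, almost surely, $\limsup_{M\to\infty}\m_{\delta,M}(\mathcal{K})\leq C_{\delta,\gamma}<\infty$; in particular, $\m_{\delta,M}(\mathcal{K})$ is almost surely bounded along the sequence $M\to\infty$. Combining these, we obtain the almost-sure bound
\begin{align*}
\|\prinit-\gamma\|^2 \leq \Bigl(\sup_{\alpha\in\mathcal{K}} |\braket{x_\alpha, \rinit-\gamma}|\Bigr)\bigl(\gamma(0) + \m_{\delta,M}(\mathcal{K})\bigr).
\end{align*}

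The final ingredient is Proposition \ref{prop:xalpha_conv} applied to the compact set $\mathcal{K}\subset(-1,1)$, which gives $\sup_{\alpha\in\mathcal{K}}|\braket{x_\alpha,\rinit-\gamma}|\to 0$ almost surely. Multiplying a sequence converging to $0$ almost surely by an almost surely bounded sequence yields an almost sure limit of $0$, so $\|\gamma-\prinit\|^2 \overset{a.s.}{\to}0$, as claimed.

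The main obstacle I anticipate is ensuring the two sets in play align correctly: the uniform convergence in Proposition \ref{prop:xalpha_conv} holds on any compact $\mathcal{K}\subset(-1,1)$, and it is essential here that both $F$ and every $\m_{\delta,M}$ live on the same compact $\ralpha$. The hypothesis that $\delta>0$ is positive and that $F$ is supported on $\ralpha$ is exactly what makes this step work; if we only had $\delta=0$, the supports would extend to the endpoints $\pm 1$ where $\alpha^{|k|}$ does not decay and Proposition \ref{prop:xalpha_conv} would fail. Beyond that alignment, the remaining steps are routine applications of the preceding results.
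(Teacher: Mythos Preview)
Your proposal is correct and follows essentially the same route as the paper: apply Lemma~\ref{lem:l2diff_bound}, bound each integral by $\sup_{\alpha\in\ralpha}|\braket{x_\alpha,\rinit-\gamma}|$ times the respective total mass, then invoke Proposition~\ref{prop:xalpha_conv} for the sup and Proposition~\ref{prop:finite_muhat} for the mass of $\m_{\delta,M}$. The only small omission is that you do not explicitly check $\gamma\in\Ml$ (needed to invoke Lemma~\ref{lem:l2diff_bound}); the paper handles this via Lemma~\ref{lem:l1moments}, which gives $\gamma\in\ell_1(\Z)\subset\ell_2(\Z)$ from $\gamma\in\mathscr{M}_\infty(\delta)$ with $\delta>0$.
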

		
		\begin{proof}
			From Proposition \ref{prop:gamma_moment_seq} and by the choice of $\delta$, we have $\gamma \in \mathscr{M}_\infty(\delta)$ for $\delta>0$. Then Lemma 3 in the Supplementary Material gives that $\gamma \in \ell_1(\Z)$, and therefore $\gamma \in \ell_2(\Z)$. Thus, we have $\gamma\in \Mld{\delta}$. Additionally, $\rinit\in \ell_{2}(\mathbb{Z})$ since $\rinit$ satisfies~\ref{cond:R2}. Therefore, we can apply the result of Lemma \ref{lem:l2diff_bound}, and we have the following inequality
			\begin{align*}
				\|\gamma-\prinit\|^2 
				&\leq -\int_{[-1,1]} \braket{x_{\alpha},\rinit-\gamma}F(d\alpha)+\int_{[-1,1]} \braket{x_{\alpha},\rinit-\gamma}\m_{\delta,M}(d\alpha).
			\end{align*}
			where $\m_{\delta,M}$ is the representing measure for $\prinit$. Note ${\rm Supp}(F) \subseteq \ralpha$ by the assumption on $\delta$. Additionally, ${\rm Supp}(\m_{\delta,M}) \subseteq \ralpha$ since $\prinit\in \Ml$. Therefore, we have for any $M$
			\begin{align*}
				\|\gamma-\prinit\|^2 
				&\leq \left(\sup_{\alpha \in [-1+\delta, 1-\delta]} |\braket{x_{\alpha},\rinit-\gamma}| \right)\{F([-1,1]) + \m_{\delta,M}([-1,1])\}
			\end{align*}
			and thus
			\begin{align*}
				\limsup_{M\to\infty} \|\gamma-\prinit\|^2 &\leq \limsup_{M\to\infty} \left(\sup_{\alpha \in [-1+\delta, 1-\delta]} |\braket{x_{\alpha},\rinit-\gamma}| \right)F([-1,1]) \\
				&+ \limsup_{M\to\infty} \left(\sup_{\alpha \in [-1+\delta, 1-\delta]} |\braket{x_{\alpha},\rinit-\gamma}| \right) \limsup_{M\to\infty} \m_{\delta,M}([-1,1]).
			\end{align*}
			Let the initial condition for the chain $x\in \mathsf{X}$ be given.
			From Proposition \ref{prop:xalpha_conv}, we know that $\left(\sup_{\alpha \in [-1+\delta, 1-\delta]} |\braket{x_{\alpha},r-\gamma}| \right)\to 0$ $P_x$-a.s. Also we have $F([-1,1]) =\gamma(0) <\infty$ and $ \limsup_{M\to\infty} \m_{\delta,M}([-1,1]) \leq C_{\delta,\gamma}<\infty$ $P_x$-a.s. from Proposition \ref{prop:finite_muhat}. Therefore, we have $\limsup_{M\to\infty} \|\gamma-\prinit\|^2  =0$ $P_x$-almost surely. Thus, $\|\gamma -\prinit\|^2 \to 0$ $P_x$-almost surely as $M\to \infty$, as desired.
		\end{proof}
		
		An important consequence of Proposition \ref{prop:xalpha_conv}, \ref{prop:finite_muhat}, and Theorem \ref{thm:as_l2conv} is the measure convergence of $\hat{\mu}_{\delta,M}$ to the true representing measure $F$. Recall that for a sequence of measures $\{\nu_n\}_{n\in \N}$ and $\nu$ on $\R$, $\nu_n$ \textit{converges vaguely} to $\nu$ if and only if $\int f d\nu_n \to \int f d\nu$ for all  $f \in C_0(\R)$ [e.g., \citealp{folland1999real}], where $C_0(\R)$ is the space of continuous functions that vanish at infinity, i.e. $f \in C_0(\R)$ iff $f$ is continuous and the set $\{x; |f(x)| \ge \epsilon\}$ is compact for every $\epsilon>0$.
		
		\begin{prop}[vague convergence of $\m_{\delta,M}$]\label{prop:vague_convergence_muhat}
			Assume the same conditions as in Theorem \ref{thm:as_l2conv}. For each initial condition $x\in \mathsf{X}$, we have $P_x(\m_{\delta,M} \to F \mbox{ vaguely, as }M\to\infty)=1$, where $\m_{\delta,M}$ and $F$ are the representing measures for $\prinit$ and $\gamma$, respectively.
		\end{prop}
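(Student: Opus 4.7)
The plan is to prove vague convergence through a subsequence/Helly-type compactness argument combined with moment identification. The crucial ingredients already in hand are: every $\hat{\mu}_{\delta,M}$ is supported on the compact set $K := [-1+\delta,1-\delta]$ (because $\prinit \in \Ml$ by construction); the total masses $\hat{\mu}_{\delta,M}(K)$ are bounded above in the $\limsup$ by $C_{\delta,\gamma}$ almost surely (Proposition \ref{prop:finite_muhat}); and $\prinit(k) \to \gamma(k)$ for every $k \in \Z$ almost surely, which follows either from \ref{cond:R1} or from the $\ell_2$-consistency in Theorem \ref{thm:as_l2conv}. I would restrict attention to the probability-one event $\Omega_0$ on which all three of these statements hold simultaneously, and argue pathwise on $\Omega_0$.

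Fix a sample point in $\Omega_0$ and let $M_j \to \infty$ be an arbitrary subsequence. Since the sequence $\{\hat{\mu}_{\delta,M_j}\}$ is a sequence of finite Borel measures with supports contained in the fixed compact set $K$ and with uniformly bounded total masses (for $j$ large), Helly's selection theorem yields a further subsequence $M_{j_l}$ along which $\hat{\mu}_{\delta,M_{j_l}}$ converges vaguely to some finite Borel measure $\nu$ on $\R$. Because every member of the sequence is supported on $K$, by testing against nonnegative $C_c(\R)$ functions supported off $K$ one sees that $\nu$ is also supported on $K$. To identify the limit, pick any bump $h \in C_c(\R)$ with $h \equiv 1$ on $K$. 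Then for each $k \in \N$ the function $f_k(x) = x^k h(x)$ lies in $C_c(\R)$, and since all measures are concentrated on $K$,
\begin{align*}
\int x^{k}\,\hat{\mu}_{\delta,M_{j_l}}(dx) = \int f_k\,d\hat{\mu}_{\delta,M_{j_l}} \; \longrightarrow \; \int f_k\,d\nu = \int x^k\,\nu(dx).
\end{align*}
But the left-hand side equals $\prinit(k)$ (at $M = M_{j_l}$), which converges to $\gamma(k) = \int x^{k} F(dx)$ on $\Omega_0$. Hence $\nu$ and $F$ agree on all moments, and since both are finite measures supported on the compact interval $K$, the Hausdorff moment problem on $K$ is determinate (polynomials are dense in $C(K)$ by Stone--Weierstrass, equivalently by the uniqueness in Corollary \ref{cor:moment_thm_Z}), so $\nu = F$.

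Thus every subsequence of $\{\hat{\mu}_{\delta,M}\}$ admits a further subsequence converging vaguely to the same limit $F$, and the standard subsequence principle gives vague convergence $\hat{\mu}_{\delta,M} \to F$ along the entire sequence. This holds on $\Omega_0$, a set of probability one. The proof is largely routine once Theorem \ref{thm:as_l2conv} and Proposition \ref{prop:finite_muhat} are in hand, and the only real care needed is in (i) passing from vague convergence (a $C_c$ statement) to moment convergence, which is done via the bump-function trick using that the measures sit inside a fixed compact set, and (ii) confirming determinacy of the moment problem on $K$ to rule out any other limit measure. I do not anticipate a significant obstacle beyond assembling these pieces correctly.
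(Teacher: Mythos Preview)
Your argument is correct, but it takes a genuinely different route from the paper's own proof. The paper proceeds directly: for a fixed $f\in C_0(\R)$, it approximates $f$ uniformly on $[-1+\delta,1-\delta]$ by a polynomial $p_N$ via Weierstrass, bounds the approximation error using the total-mass bound from Proposition~\ref{prop:finite_muhat}, and then rewrites $\int p_N\,d\hat\mu_{\delta,M}-\int p_N\,dF$ as the inner product $\langle v_N,\prinit-\gamma\rangle$ for a finite-support coefficient vector $v_N$, which is controlled by $\|v_N\|\,\|\prinit-\gamma\|$ and hence by Theorem~\ref{thm:as_l2conv}. Your approach instead invokes Helly compactness to extract subsequential vague limits, identifies any such limit with $F$ via moment convergence and determinacy of the Hausdorff problem on the compact support, and then applies the subsequence principle. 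Both routes lean on the same two ingredients (mass bound and $\ell_2$ consistency), but the paper's version is more direct and makes the role of $\ell_2$ convergence quantitatively visible through Cauchy--Schwarz, whereas your version cleanly separates the compactness step from the identification step and only needs pointwise moment convergence $\prinit(k)\to\gamma(k)$, which is weaker than full $\ell_2$ convergence.

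One small correction: condition~\ref{cond:R1} asserts $\rinit(k)\to\gamma(k)$, not $\prinit(k)\to\gamma(k)$, so the pointwise convergence of the projected moments should be attributed solely to Theorem~\ref{thm:as_l2conv} (since $|\prinit(k)-\gamma(k)|\le\|\prinit-\gamma\|$). Otherwise the pieces assemble exactly as you describe.
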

		This proposition is a direct consequence of the a.s. $\ell_2$ convergence of $\Pi_\delta(\rinit)$ to $\gamma$ and Lemma 7 in the Supplementary Material S2~\citep{berg2023efficientsuppl}.

		\subsection{Strong consistency of the asymptotic variance estimator based on the moment LSE}
		
		In this subsection, we present the strong consistency result for the asymptotic variance estimator based on the moment least squares estimators. It is well known that for a stationary, $\psi$-irreducible, geometrically ergodic, and reversible Markov chain and for a square integrable $g$, the central limit theorem holds [e.g., see Corollary 6 in \citealp{Haggstrom2007-sy} ], i.e.,
		\begin{align}
			\sqrt{M} (Y_M - \mu) \overset{d}{\to}  N(0,\sigma^2(\gamma)), 
		\end{align} with
		\begin{align}\label{eq:sigma_equiv}
			\sigma^2(\gamma) = \lim_{M\to\infty} M {\rm Var} (Y_M)  = \sum_{k \in \Z} \gamma(k) = \int \frac{1+\alpha}{1-\alpha} F(d\alpha) <\infty, 
		\end{align} 
		where $F$ denotes the representing measure associated with $\gamma$.
		
		The main theorem for this subsection is Theorem \ref{thm:as_conv_avar}, which shows that an asymptotic variance estimate based on the moment least squares estimator $\sigma^2(\prinit) = \sum_{k \in \Z} \prinit(k)$ is strongly consistent for $\sigma^2(\gamma)$ for any $\rinit$ which satisfies conditions \ref{cond:R1} - \ref{cond:R3}.

		\begin{thm}[strong consistency of asymptotic variance estimators based on Moment LSEs]\label{thm:as_conv_avar}
			Assume the same conditions as in Theorem \ref{thm:as_l2conv}.
			Let $\sigma^2(\gamma) = \ \sum_{k\in \Z} \gamma(k)$ be the asymptotic variance based on the true autocovariance sequence $\gamma$. We let  $\sigma^2(\prinit) =  \sum_{k\in \Z} \prinit(k)$ be an estimate of $\sigma^2(\gamma)$ based on the moment least squares estimator $\prinit$. We have $\sigma^2(\prinit) \to \sigma^2(\gamma)$ $P_x$-a.s., for each initial condition $x \in \mathsf{X}$, as $M\to\infty$.
		\end{thm}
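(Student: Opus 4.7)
My plan is to rewrite both $\sigma^2(\gamma)$ and $\sigma^2(\prinit)$ as integrals of the same continuous function against the representing measures $F$ and $\hat{\mu}_{\delta,M}$, and then appeal to the vague convergence $\hat{\mu}_{\delta,M}\to F$ established in Proposition \ref{prop:vague_convergence_muhat}. The key observation is that, because $\delta>0$, both measures are supported inside the compact set $\ralpha$ on which the natural integrand is continuous and bounded.

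First, I would derive the analogue of \eqref{eq:sigma_equiv} for the estimator. Since $\prinit\in\Ml$, Lemma \ref{lem:l1moments} gives $\prinit\in\ell_1(\Z)$, so $\sigma^2(\prinit)=\sum_{k\in\Z}\prinit(k)$ is well defined. Since $\hat{\mu}_{\delta,M}$ is supported in $\ralpha$ and $\sum_{k\in\Z}|\alpha|^{|k|}=(1+|\alpha|)/(1-|\alpha|)\le (2-\delta)/\delta$ uniformly on that set, Fubini's theorem applies to yield
\begin{align*}
\sigma^2(\prinit)=\sum_{k\in\Z}\int \alpha^{|k|}\,\hat{\mu}_{\delta,M}(d\alpha)=\int h(\alpha)\,\hat{\mu}_{\delta,M}(d\alpha),\qquad h(\alpha):=\frac{1+\alpha}{1-\alpha}.
\end{align*}
The corresponding representation $\sigma^2(\gamma)=\int h(\alpha)\,F(d\alpha)$ is already recorded in \eqref{eq:sigma_equiv}.

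Next, I would transfer the vague convergence to convergence against $h$. The function $h$ is continuous on the compact interval $\ralpha$, so it can be extended to a function $\tilde{h}\in C_c(\R)\subseteq C_0(\R)$ that agrees with $h$ on $\ralpha$. Since both ${\rm Supp}(F)$ and ${\rm Supp}(\hat{\mu}_{\delta,M})$ lie inside $\ralpha$, we have
\begin{align*}
\sigma^2(\prinit)=\int \tilde{h}\,d\hat{\mu}_{\delta,M},\qquad \sigma^2(\gamma)=\int \tilde{h}\,dF.
\end{align*}
Proposition \ref{prop:vague_convergence_muhat} gives $\int \tilde{h}\,d\hat{\mu}_{\delta,M}\to\int \tilde{h}\,dF$ almost surely, which is the desired conclusion $\sigma^2(\prinit)\to \sigma^2(\gamma)$ a.s.

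I do not anticipate a hard step here: the main content was already established in Theorem \ref{thm:as_l2conv} and Proposition \ref{prop:vague_convergence_muhat}. The only mild subtlety is that $h(\alpha)=(1+\alpha)/(1-\alpha)$ is not globally in $C_0(\R)$ (it blows up at $\alpha=1$), so one cannot apply vague convergence to $h$ itself; the fix is exactly the compact-support reduction above, which relies on $\delta>0$. Alternatively, one could mimic the Weierstrass approximation argument used in the proof of Proposition \ref{prop:vague_convergence_muhat}, approximating $h$ uniformly on $\ralpha$ by polynomials and using the $\ell_2$ consistency from Theorem \ref{thm:as_l2conv} together with the uniform boundedness of $\hat{\mu}_{\delta,M}(\ralpha)$ from Proposition \ref{prop:finite_muhat}; this would give an essentially self-contained proof without invoking the vague convergence proposition.
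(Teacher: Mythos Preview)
Your proposal is correct and matches the paper's own proof essentially line for line: the paper also rewrites both variances via Lemma~\ref{lem:l1moments} (together with Lemma~\ref{lem:spectralRepAvar}) as $\int \frac{1+\alpha}{1-\alpha}\,d\hat{\mu}_{\delta,M}$ and $\int \frac{1+\alpha}{1-\alpha}\,dF$, extends $(1+\alpha)/(1-\alpha)$ from $\ralpha$ to a function in $C_0(\R)$ (by an explicit piecewise-linear extension), and then invokes Proposition~\ref{prop:vague_convergence_muhat}. Your remark that the extension step is needed precisely because $h$ blows up at $\alpha=1$ is exactly the point, and your suggested alternative via Weierstrass approximation would also work but is not the route the paper takes here.
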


		\begin{proof}
			Let $\hat{\sigma}^2_M = \sigma^2(\prinit)$ and $\sigma^2 = \sigma^2(\gamma)$ for notational simplicity. Lemma 3 and Lemma~5 in the Supplementary Material give that $\hat{\sigma}^2_M = \int_\ralpha \frac{1+\alpha}{1-\alpha} \m_{\delta,M}(d\alpha)$, and we have $\sigma^2 =  \int_\ralpha \frac{1+\alpha}{1-\alpha} F(d\alpha)$ from \eqref{eq:sigma_equiv}. Thus, we have
			\begin{align*}
				|\hat{\sigma}^2_M - \sigma^2|= \left\lvert \int_\ralpha \frac{1+\alpha}{1-\alpha} \m_{\delta,M}(d\alpha) -  \int_\ralpha \frac{1+\alpha}{1-\alpha} F(d\alpha) \right\rvert .
			\end{align*}
			We can obtain $f(\alpha) \in C_0(\R)$ such that $f(\alpha) = \frac{1+\alpha}{1-\alpha}$ on $\ralpha$ by extending the two endpoints of $(1+\alpha)/(1-\alpha)$ at $\alpha \in \{ -1+\delta, 1-\alpha\}$ to $0$ linearly so that $f(\alpha)=0$ for $|\alpha|\ge 1$. More concretely, define $f:\mathbb{R}\to\mathbb{R}$ by
			\begin{align*}
				f(\alpha) = \begin{cases}
					\frac{1+\alpha}{1-\alpha}&\alpha \in \ralpha\\
					\frac{\alpha+1}{2-\delta} & -1 \le \alpha \le -1+\delta\\
					\frac{-(2-\delta)\alpha + 2-\delta}{\delta^2 } & 1-\delta \le \alpha \le 1 \\
					0 & \alpha <-1 \,\, {\rm or }\,\, \alpha >1.
				\end{cases}
			\end{align*} Then $f\in C_{0}(\R)$ and $f(\alpha)=\frac{1+\alpha}{1-\alpha}$ for $\alpha\in \ralpha$. Then, since ${\rm Supp}(\m_{\delta,M}), {\rm Supp}(F) \subseteq \ralpha$, we have \begin{align*}
				|\hat{\sigma}^2_M - \sigma^2|= \left\lvert \int f(\alpha) \m_{\delta,M}(d\alpha) -  \int  f(\alpha) F(d\alpha) \right\rvert \to 0,\;\;P_x\mbox{-a.s}.,
			\end{align*}
			for any $x \in \mathsf{X}$ by the almost sure vague convergence of $\m_{\delta,M}$ to $F$ in Proposition \ref{prop:vague_convergence_muhat}.
		\end{proof}

		\section{Empirical studies\label{sec:emp}}
		The goal of this section is two fold: first, we empirically illustrate some of the theoretical aspects discussed in the previous section, in particular, the $\ell_2$ sequence consistency and asymptotic variance consistency of Moment LSEs. Second, we compare the performance of our method to the performance of other current state-of-the-art methods for autocovariance sequence estimation and asymptotic variance estimation. In Section~\ref{subsec:hyperparameters}, we propose a method for tuning the hyperparameter $\delta$ for moment LSEs. In Section \ref{sec:emp_sim}, we use two simulation settings: one from a Metropolis-Hastings algorithm (\citet{metropolis1953equation} and~\citet{hastings1970monte}) with a discrete state space, and the other from a stationary AR(1) chain. In Section \ref{sec:emp_bayes}, we use a Bayesian probit regression.
		
		\subsection{Settings}
		\subsubsection{Settings for simulated chains}
		\paragraph*{Metropolis-Hastings chain}
		We consider a Metropolis-Hastings chain on the discrete state space $(\mathsf{X},2^{\mathsf{X}})$ where $\mathsf{X}=\{1,2,...,d\}$, so that $d=100$ states are possible. The stationary distribution for the simulation was constructed by normalizing a length $d$ random vector $U=[U_1,U_2,...,U_{d}]^T$ with $U_i\overset{iid}{\sim} {\rm Uniform}(0,1)$, so that $\pi(\{i\})=U_{i}/(\sum_{i'=1}^{d}U_{i'})$. Each row of the proposal distribution $P$ was constructed in the same manner, with $P(i,\{j\})=V_{ij}/\sum_{j=1}^{d}V_{ij}$ for random variables $V_{ij}\overset{iid}{\sim}{\rm Uniform}(0,1)$. The Metropolis-Hastings algorithm was used to construct a transition kernel $Q$ corresponding to the proposal distribution $P$. Finally, a function $g:\mathsf{X}\to\mathbb{R}$ was constructed via $g = [g_1,\dots,g_{d}]$ with $g_j \overset{iid}{\sim} N(0,1)$.  We generated a Markov chain $X_0,X_1,...$ with stationary distribution $\pi$ according to $Q$. 
		
		Since in this example, the transition kernel $Q$ is on a small discrete state space, it is possible to compute the eigenvalues $\lambda_i$ and eigenvectors $\phi_i$ corresponding to $\lambda_i$ for $i=1,\dots,d$ numerically. Therefore, the true autocovariance sequence $\gamma$, the representing measure $F$ for $\gamma$, and the asymptotic variance $\sigma^2(\gamma)$ can be all computed explicitly. More concretely, let $\lambda_1\geq \lambda_2\geq\cdots \geq\lambda_d$ denote the eigenvalues of $Q$. Suppose the eigenvectors $\phi_i$ are normalized so that $\braket{\phi_i,\phi_j}_\pi = 1[i=j]$. Note we have $\lambda_1=1$ and $\phi_1 = \mathbf{1}_{d}$ since $Q\mathbf{1}_{d} = \mathbf{1}_{d}$. We can write $g$ and $\bar{g} = g-E_\pi[g(X_0)]\mathbf{1}_{d}$ as
			$g(k) = \sum_{i=1}^d \braket{g,\phi_i}_\pi \phi_i(k)$ and
			$\bar{g}(k) = \sum_{i=2}^d \braket{g,\phi_i}_\pi \phi_i(k)$
		since $\braket{g,\phi_1}_\pi = E_\pi[g(X_0)]$. Then, 
			$\gamma(k)=\braket{Q_0^{|k|} g, g}_\pi = \braket{Q^{|k|}\bar{g}, \bar{g}}_\pi = \sum_{i=2}^{d}\braket{g,\phi_i}_\pi^2\lambda_i^{|k|}$,
		and thus the representing measure for $\gamma$ is 
		\begin{align}
			F=\sum_{i=2}^{d} \braket{g,\phi_i}_\pi^2\delta_{\lambda_i}\label{eq:F_discrete}
		\end{align} where $\delta_{a}$ denotes a unit point mass at $a$. Finally, we have
			$\sigma^2(\gamma) = \sum_{i=2}^{d} \braket{g,\phi_i}_\pi^2 \frac{1+\lambda_i}{1-\lambda_i}$.

			

		\paragraph*{Autoregressive chain} We also consider the autoregressive chain with the identity function $g(x)=x$ as in Example~\ref{exmp:ar1exmp}. We let $\tau^2=1$, and consider both positively and negatively correlated cases by setting $\rho=0.9$ and $\rho=-0.9$ in each case, respectively.

		\subsubsection{Descriptions of estimators}
		We investigated the following autocovariance sequence estimators:
		
		\begin{enumerate}
			\setlength\itemsep{0em}
			\item {\bfseries (Empirical)} the empirical autocovariance sequence $\{\tilde{r}_M(k)\}_{k\in\Z}$,
			\item {\bfseries (Bartlett)} the windowed empirical autocovariance sequence $\check{r}_M(k) = w_M(|k|) \tilde{r}_M(k)$ with $w_M(k) = (1-k/b^{\rm (Bart)}_{M}) I(k<b^{\rm (Bart)}_{M})$ with threshold $b^{\rm (Bart)}_{M}$, and 
			\item {\bfseries (MomentLS(Emp) and MomentLS(Bartlett))} our moment least squares estimators with the empirical autocovariance sequence $\Pi_\delta(\tilde{r}_M)$ and the windowed empirical autocovariance sequence $\Pi_\delta(\check{r}_M)$ as initial input sequences.
		\end{enumerate}
		For all three sequence estimators (Empirical, Bartlett, and MomentLS), asymptotic variance estimates were obtained by summing up the sequence estimators over all $k \in \Z$. In the case of Empirical and Bartlett estimators, this amounts to summing up the non-zero terms in the estimated autocovariance sequences $\tilde{r}_M$ or $\check{r}_M$. 
		For MomentLS estimators, for each input sequence $\rinit \in \{\tilde{r}_M, \check{r}_M\}$ and given $\delta>0$, the sequence estimates were computed following steps outlined in Section \ref{sec:computation_mLSE}. To elaborate further, we start by creating a grid $\Theta=\{\alpha_1,\dots,\alpha_s\} \subseteq [-1+\delta,1-\delta]$. We then solve the optimization problem \eqref{eq:opt_problem_eq2} to obtain $\hat{\mathbf{w}}=[\hat{\mu}_\delta(\{\alpha_1\}),\dots,\hat{\mu}_\delta(\{\alpha_s\})]^\top$. The momentLS sequence estimate for $\gamma(k)$ is $\Pi_{\delta}(\rinit;\Theta)(k) = \sum_{\alpha; \hat{\mu}_\delta(\{\alpha\})>0} \alpha^{|k|} \hat{\mu}_\delta(\{\alpha\})$. The asymptotic variance estimate is \begin{align*}
			\sigma^2(\Pi_\delta(\rinit;\Theta)) = \sum_{k\in\Z} \Pi_\delta(\rinit;\Theta) (k)= \sum_{\alpha; \hat{\mu}_\delta(\{\alpha\})>0} \frac{1+\alpha}{1-\alpha} \hat{\mu}_{\delta} (\{\alpha\}).
		\end{align*}
		The choice of $\delta$ is described in the next subsection \ref{subsec:hyperparameters}.
		
		For the comparison of asymptotic variance estimation performance, in addition to asymptotic variance estimates from the aforementioned estimators, we considered batch means, overlapping batch means, and initial sequence estimators. Let $Y_M=M^{-1}\sum_{t=0}^{M-1}g(X_t)$. For $i\leq M-b$, define the batch mean starting at $i$ with batch length $b$ by $Y_{b}(i)=b^{-1}\sum_{k=0}^{b-1}g(X_{i+k})$. Then the batch means, overlapping batch means, and initial sequence estimators are defined as
		
		\begin{enumerate}
			\setlength\itemsep{0em}
			\setcounter{enumi}{3}
			\item {\bfseries (BM)} the batch mean estimator $\hat{\sigma}^2_{BM}$ with batch size $b^{\rm (BM)}_{M}$, 
			\begin{align*}
				\hat{\sigma}^2_{BM}= \frac{b_M}{a_M-1} \sum_{k=0}^{a_M-1} \{Y_{b_M^{(BM)}}(kb_M^{(BM)}) - Y_M\}^2  ,
			\end{align*}
			where $a_M = \lfloor M/b^{\rm (BM)}_{M} \rfloor$ is the number of batches,
			\item {\bfseries (OLBM)} the overlapping batch mean estimator $\hat{\sigma}^2_{OLBM}$ with batch size $b_M^{\rm (OLBM)}$, 
			\begin{align*}
				\hat{\sigma}^2_{OLBM}= \frac{M b_M^{\rm (OLBM)}}{(M-b_M^{\rm (OLBM)})(M-b_M^{\rm (OLBM)}+1)} \sum_{j=0}^{M-b_M^{\rm (OLBM)}+1} \{Y_{b_M^{\rm (OLBM)}}(j) - Y_M\}^2  ,
			\end{align*}
			\item {\bfseries (Init)} the initial (positive,monotone,convex) sequence estimator $\hat{\sigma}^{2}_{\rm init,{\rm type}}$ computed as
			\begin{align*}
				\hat{\sigma}^2_{\rm init,{\rm type}}= -\tilde{r}_{M}(0)+2\sum_{k=0}^{T-1} \hat{\Gamma}_{M}^{\rm (type)}(k)
			\end{align*}
			for $\rm type \in \{pos,mono,conv\}$, where $\hat{\Gamma}_M(k) = \tilde{r}_M(2k)+\tilde{r}_M(2k+1)$, $T:=\min\{k \in \N; \hat{\Gamma}_M(j) <0,\}$ is the first time point where $\hat{\Gamma}_M(k)$ becomes negative, and $\hat{\Gamma}_{M}^{\rm (pos)}(k)$, $\hat{\Gamma}_{M}^{\rm (mono)}(k)$, and $\hat{\Gamma}_{M}^{\rm (conv)}(k)$ are defined for $k<T$ as
			\begin{itemize}
				\item $\hat{\Gamma}_{M}^{\rm (pos)}(k)=\hat{\Gamma}_M(k)$,
				\item $\hat{\Gamma}_{M}^{\rm (mono)}(k) = \min_{j\le k} \hat{\Gamma}^{\rm (pos)}_M(j)$, and
				\item $\hat{\Gamma}_{M}^{\rm (conv)}(k)$ is the $k$th element of the greatest convex minorant of $\hat{\Gamma}_M(0),\dots,\hat{\Gamma}_M(T-1)$
			\end{itemize}
			for $k=0,\dots,T-1$.
			
		\end{enumerate}
		The asymptotic variance estimator from the empirical autocovariance sequence is always $0$, i.e., $\sigma^2(\tilde{r}_M) = \sum_{k \in \Z} \tilde{r}_M(k)=0$ [e.g., \citealp{brockwell2009time}], and therefore is inconsistent for $\sigma^2(\gamma)$ whenever $\sigma^2(\gamma)>0$. The asymptotic variance estimator from a windowed empirical autocovariance sequence is also sometimes called a spectral variance estimator since it corresponds to an estimated spectral density function at frequency $0$.

		\subsubsection{Choice of hyperparameters\label{subsec:hyperparameters}}
		
		Hyperparameters are required for the Bartlett windowed estimators, BM, OLBM, and Moment LSEs.  A batch size $b_M$ needs to be specified a priori for the Bartlett windowed sequence estimate, BM, and OLBM, and $\delta$ determining the set $\Ml$ onto which the initial autocovariance sequence $\rinit$ is projected must be specified for the MomentLS estimators. 
		
		For BM and OLBM, we used oracle hyperparameter settings when possible. From~\citet{flegal2010batch}, for the BM and OLBM methods, the mean-squared-error optimal batch sizes for estimating $\sigma^2(\gamma)$ are
		\begin{align}\label{eq:opt_batchsizes}
			b^{\rm (BM)}_{M} = \left(\frac{\Gamma^2 M}{\sigma^2(\gamma)}\right)^{1/3} \quad \mbox{and} \quad b_{M}^{\rm (OLBM)} = \left(\frac{8\Gamma^2 M}{3\sigma^2(\gamma)}\right)^{1/3}
		\end{align}
		respectively, where $\Gamma = -2\sum_{s=1}^\infty s\gamma(s)$. Since the spectral variance estimator based on the Bartlett window is asymptotically equivalent to the overlapping batch mean estimator \citep{Damerdji1991-oj}, we let $b^{\rm (Bart)}_M = b_{M}^{\rm (OLBM)}$. If oracle hyperparameters cannot be obtained because $\gamma$ is unknown, we used the batch size tuning method implemented in the R package {\bfseries mcmcse}~\citep{liu2021batch}.
		
		For MomentLS estimators, we consider an oracle and data-driven choice of $\delta$. 
		An oracle choice of $\delta$ for MomentLS would be $\delta_\gamma = 1-\sup\{|x|; x\in {\rm Supp}(F)\}$ for the representing measure $F$ for the autocovariance sequence $\gamma$. For the data-driven choice of $\delta$, we tune $\delta$ based on a modification of an adaptive bandwidth selection method proposed by \citet{politis2003adaptive}. 
		
		\citet{politis2003adaptive} proposed an empirical rule of picking a lag $\hat{m}$ at which to truncate the autocovariance sequence. Under the assumption of uniform convergence of the empirical autocorrelations $\hat{\rho}_M(k) = \tilde{r}_M(k)/\tilde{r}_M(0)$ such that
		\begin{align}\label{eq:rho_bound}
			\max_{k=0,\dots,M-1} |\hat{\rho}_M(k) - \rho(k)|= O_{P}(\sqrt{\log M/M})   
		\end{align}
		(ref. eq (10) in \cite{politis2003adaptive}), \citet{politis2003adaptive} proposed the use of an estimator $\hat{m}$ satisfying $\hat{m} /\log (M) \to -1/(2\log|\alpha|)$ in probability, for stationary discrete-time process $X_1,\dots,X_M$ with an exponentially-decaying autocovariance sequence satisfying $\gamma(k) = C\alpha^{|k|}, |k|>k_0$ for some $k_0<\infty$ and $|\alpha|<1$.
		
		In our setting, $\gamma(k) = \int \alpha^{|k|} F(d\alpha)$ is a mixture of $\alpha^{|k|}$. Recall that any fixed choice of $\delta>0$ such that ${\rm Supp}(F) \subseteq [-1+\delta,1-\delta]$ is valid to guarantee the a.s. sequence and asymptotic variance estimator convergences in Theorems \ref{thm:as_l2conv} and \ref{thm:as_conv_avar}. In particular, any fixed $\delta\leq \delta_\gamma$ is a valid choice for a moment LS estimator. Note that $\delta_{\gamma}$ can be larger than the spectral gap of the transition kernel $Q$.
		With a modification of the empirical rule in \cite{politis2003adaptive}, we propose to use a data-driven $\tilde{\delta}_M$ such that $\tilde{\delta}_M <\delta_\gamma $ with high probability under the condition of \eqref{eq:rho_bound}. 
		
		Compared to the empirical rule by \citet{politis2003adaptive}, our proposed rule focuses only on even lags of empirical autocorrelations. More concretely, we first choose $\hat{m}$ such that 
		\begin{align}\label{eq:mhat}
			\hat{m} = \min\{t\in 2\mathbb{N}; \hat{\rho}_M(t+2) \leq c_M\sqrt{\log M/M}\}
		\end{align}
		for some $c_M \ge 0$. 
		This change is motivated by the fact that for reversible chains, $\gamma(k)$ is always nonnegative for even $k$, and the magnitude of $\gamma(k)$ can be arbitrarily small for odd $k$ due to the potential cancellations of $\alpha^k$ terms from positive and negative $\alpha$ values. To illustrate this point, consider a simple example with $\gamma(k) = (-0.9)^{k}+0.9^{k}$ for $k=0,1,2,\dots$; it is clear that $\gamma(k)=0$ for any odd $k$. 
		
		Once we have determined $\hat{m}$, we let
		\begin{align}\label{eq:delta_hat}
			\hat{\delta}_M= \max\{1-\exp\{-\log(M)/(2\hat{m})\} ,1/M\}.
		\end{align}
		Under the condition \eqref{eq:rho_bound}, we show that $1-\exp\{-\log(M)/(2\hat{m})\}$ is not asymptotically larger than $\delta_\gamma$ for any choice of $c_M \ge 0$, and converges to $\delta_\gamma$ in probability as $M\to \infty$ if we choose $c_M$ so that $c_M \to \infty$ such that $c_M= O(\log(M))$ (see Supplementary Material S6 of~\citealp{berg2023efficientsuppl}). Therefore for any positive $c_\delta<1$, $\tilde{\delta}_M = c_{\delta}\hat{\delta}_M$ should serve as an asymptotically conservative choice for $\delta_\gamma$. We choose $c_\delta<1$ in $\tilde{\delta}_M = c_{\delta}\hat{\delta}_M$, since for $c_{\delta}=1$, the probability of $\tilde{\delta}_M > \delta_\gamma$ may not go to $0$, even in the case that $\hat{\delta}_M$ converges to $\delta_\gamma$ in probability. We note that whereas the~\citet{politis2003adaptive} procedure allows for nonincreasing $c_M=c$, we were unable to verify $\hat{m}/\log(M)\overset{p}{\to}-1/\{2\log(1-\delta_{\gamma})\}$ without the condition $c_M\to\infty$. 
		
		Additionally, since $\hat{\delta}_M$ is random, the finite sample performance of momentLS estimators is influenced by the variability of $\hat{\delta}_{M}$.  We use an averaging procedure in order to reduce the variability of $\hat{\delta}_M$, in which $\hat{\delta}$ estimates from separate segments of the observed chain $\{g(X_t)\}_{t=0}^{M-1}$ are averaged. Specifically, we partition the observed series $\{g(X_t)\}_{t=0}^{M-1}$ into $L$ equal length splits, and compute the empirical autocovariances for each split in the following way. Let $B = \lfloor M/L \rfloor$. The $k$th autocovariance from the $l$th split, for $l=1,\dots,L$, is computed as
		\begin{align*}
			\tilde{r}_{M/L}^{(l)} (k) = 
			\begin{cases}
				\frac{1}{B} \sum_{t=0}^{B-1-k} \tilde{g}(X_t)\tilde{g}(X_{t+k}) & l=1\\
				\frac{1}{B} \sum_{t=(l-1)B-k}^{lB-1-k} \tilde{g}(X_t)\tilde{g}(X_{t+k}) & l>1\\
			\end{cases}
		\end{align*}
		where we recall $\tilde{g}(X_t) = g(X_t) - M^{-1}\sum_{t=0}^{M-1}g(X_t)$.  Then we computed $\hat{\delta}_{M/L}^{(l)}$ using $\{\tilde{r}_{M/L}^{(l)} (k)\}_{k=0}^{B-1}$ for $l=1,\dots,L$. Finally, we used 
		\begin{align*}
			\tilde{\delta}_M = 0.8 \frac{1}{L}\sum_{l=1}^L \hat{\delta}_{M/L}^{(l)}
		\end{align*}
		with the choice $L=5$ as the input for the Moment LS estimators in the experiments.

		It is worth mentioning that in Theorems \ref{thm:as_l2conv} and \ref{thm:as_conv_avar}, the provided almost sure convergence guarantees are applicable to a Moment LS estimator $\Pi_\delta(\rinit)$ with a valid, non-random $\delta$. Also, while uniform convergence of empirical autocovariance sequences has been studied and the uniform bound \eqref{eq:rho_bound} has been established for certain stationary time series whose examples include IID chains and the AR(1) chain of Example~\ref{exmp:ar1exmp} with $g(x)=x$, see e.g., \citet{hong1982autocorrelation, kavalieris2008uniform}, it is still an open question to establish similar results for a general geometrically ergodic Markov chain with arbitrary initial condition. We leave it as a future work to provide a full justification for moment LS estimators with this tuned choice of $\delta$.

		\subsection{Empirical illustration of the convergence properties of Moment LSEs}
		We recall that the convergence guarantees in Theorems~\ref{thm:as_l2conv} and \ref{thm:as_conv_avar} apply for Moment LS estimates with $\delta$ chosen such that $\delta>0$ and ${\rm Supp}(F) \subseteq [-1+\delta,1-\delta]$, where $F$ is the representing measure for the autocovariance sequence. Here, we empirically explore convergence of both the autocovariance sequence and the asymptotic variance estimators at varying $\delta$ levels, including cases in which the support of $F$ is \textit{not} contained in $[-1+\delta,1-\delta]$. This latter setting is not covered by our Theorems~\ref{thm:as_l2conv} and~\ref{thm:as_conv_avar}, and in this case we expect the projection to $\Ml$ to lead to bias in the corresponding Moment LSE.
		
		For $\delta$ chosen such that ${\rm Supp}(F)\subseteq [-1+\delta,1-\delta]$, Figures \ref{fig:discreteDeltaComp} and \ref{fig:ar(1)deltaComp} show that Moment LSEs lead to consistent estimates for both the autocovariance sequence (with respect to the $\ell_2$ distance) and the asymptotic variance $\sigma^2(\gamma)$. Larger values of $\delta$ (subject to ${\rm Supp}(F)\subseteq[-1+\delta,1-\delta]$) lead to relatively better performance in the estimation of both the autocovariance sequence and the asymptotic variance, although the rates of convergence at different values of $\delta$ appear to be similar. 
		
		When $\delta=0$, the moment LS estimator appears to be consistent for the true autocovariance sequence with respect to the $\ell_2$ norm distance, but inconsistent with respect to the asymptotic variance (Figure~\ref{fig:discreteDeltaComp}). On $(-1,1)$, the function $\alpha \to \frac{1+\alpha}{1-\alpha}$ is unbounded and can no longer be uniformly approximated by polynomials of finite degree. Thus the $\ell_2$ sequence convergence property at $\delta=0$ does not transfer, as in Theorem~\ref{thm:as_conv_avar} with $\delta>0$, to convergence of the estimated asymptotic variance.
		
		In the setting where $\delta>0$ is chosen so large that ${\rm Supp}(F)$ is not contained in $[-1+\delta,1-\delta]$, we observe an apparent bias variance trade-off. Our results in this setting suggest that an optimal choice of $\delta$ will strike a balance between the increase in variability expected in projecting to larger sets $\Ml$ for small $\delta$, and the increase in approximation error expected when $\gamma\notin \Ml$ for large $\delta$. In the discrete state space Metropolis-Hastings example, $\delta\leq 0.355$ is required for ${\rm Supp}(F)\subseteq[-1+\delta,1-\delta]$, yet for the smaller sample sizes in our study $\delta=0.5$ leads to the best performance out of all values of $\delta$ considered for estimating both the autocovariance sequence  and the asymptotic variance (Figure~\ref{fig:discreteDeltaComp}). We suspect that the improved performance for $\delta=0.5$ results from decreased variance, and that the bias introduced by restricting the support of $\hat{\mu}_{\delta,M}$ to $[-0.5,0.5]$ is not too large since the representing measure $F$ in this example has a substantial amount of mass between $[-0.5,0.5]$. On the other hand, in the AR(1) example with $\rho=0.9$, the representing measure $F$ has no support within $[-0.8,0.8]$, and the setting $\delta=0.2$ leads to poor performance, suggesting that the bias introduced at this value of $\delta$ overcomes any gains in performance due to variance reduction.

		\begin{figure}[ht]
			\centering
			\includegraphics[width=.8\linewidth]{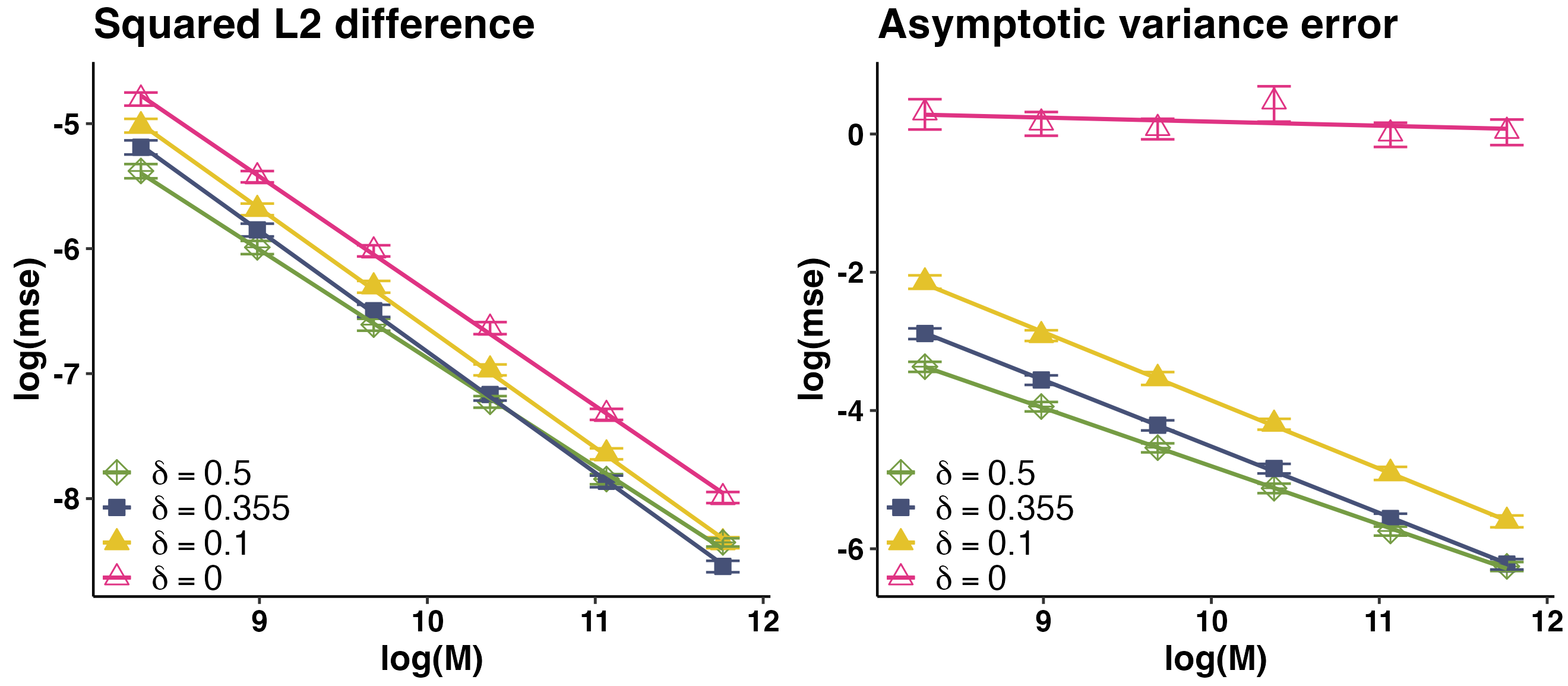}
			\caption{\textit{Metropolis-Hastings example. The support of the representing measure for $\gamma$ is contained in $[-.645,.645]$, i.e., the valid $\delta$ range is $0<\delta\le.355$.}\label{fig:discreteDeltaComp}}
		\end{figure}
		\begin{figure}[ht]
				\centering
				\includegraphics[width=.8\linewidth]{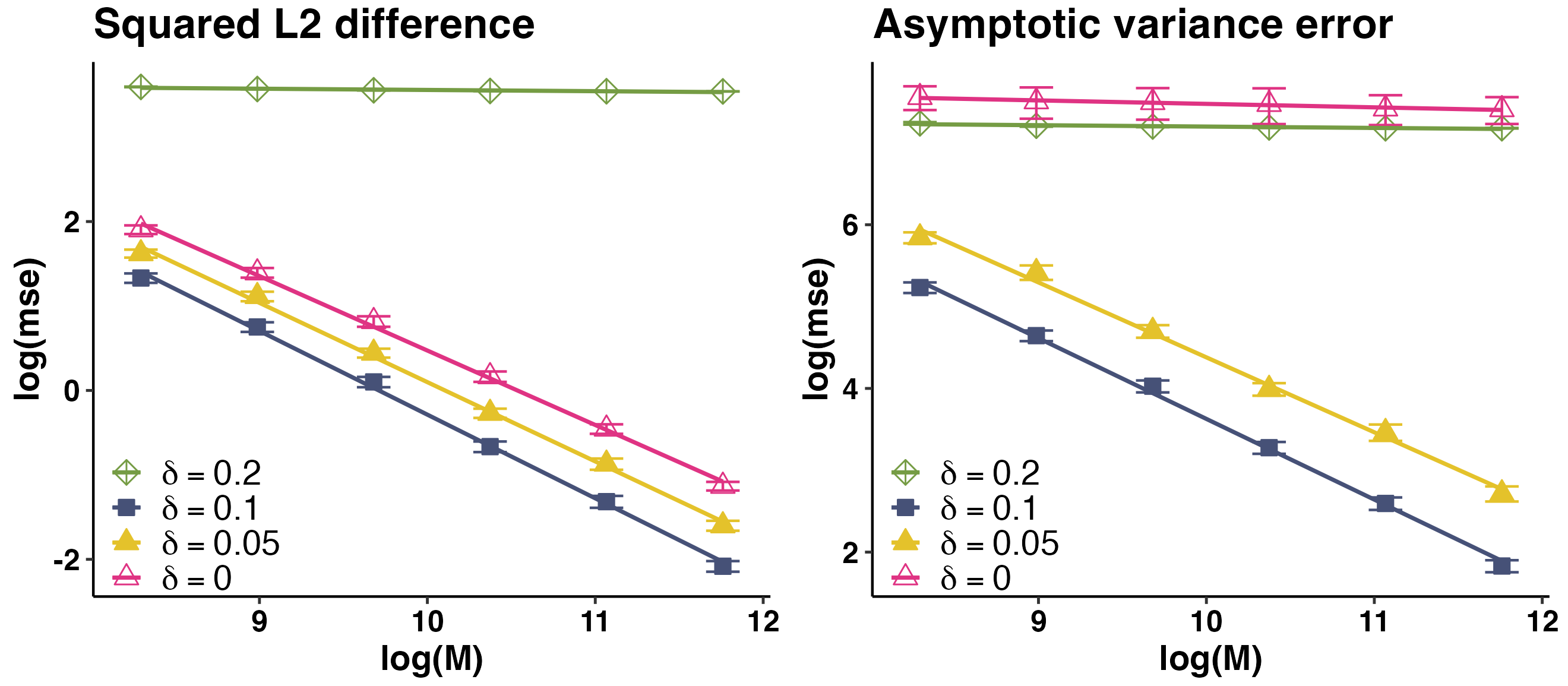}
				\caption{\textit{AR(1) example with a positive correlation ($\rho=0.9$). The representing measure has a single support point at $.9$. The valid $\delta$ range is $0<\delta\le.1$.}\label{fig:ar(1)deltaComp}}
			
		\end{figure} 
		
		\subsection{Comparison with other state-of-the-art estimators for simulated chains}\label{sec:emp_sim}
		This subsection compares the performance of our method to the performance of other current state-of-the-art methods for autocovariance sequence estimation and asymptotic variance estimation using two simulated chains. 
		
		We computed the squared $\ell_2$ autocovariance sequence error $\|\hat{r} - \gamma\|_2^2$ (when eligible) and the squared asymptotic variance error $(\hat{\sigma}^2 - \sigma^2(\gamma))^2$ for $B=400$ simulations from each method with varying chain lengths $M\in \{4000,8000,16000,32000,64000,128000\}$. All simulations were performed using R software \citep{rlang}. We used the \textbf{mcmcse} package \citep{mcmcse_R} for computing BM and OLBM estimators and the \textbf{mcmc} package \citep{mcmc_R} for computing initial positive, monotone, and convex sequence estimators. 
		
		The average squared $\ell_2$ autocovariance sequence error and average squared asymptotic variance estimation error are reported in Figures \ref{fig:discretel2avarComp}-\ref{fig:ar(1)l2AVarComp} and in tables in Supplementary Material S7~\citep{berg2023efficientsuppl}. In these results,
		\begin{itemize}
			\setlength\itemsep{0em}
			\item MomentLS(Tune,Emp) and MomentLS(Tune-Incr,Emp) refer to the moment LS estimators with the empirical autocovariance used for $\rinit$ and with $\delta$ chosen using the tuning procedure in Section~\ref{subsec:hyperparameters}, with the choices $c_M=0$ and $c_M=0.01\sqrt{\log M}$ in \eqref{eq:mhat}, and
			\item MomentLS(Orcl,Emp), MomentLS(Orcl,Brtl) refer to the moment LS estimates with oracle hyperparameter $\delta=\delta_{\gamma}$ and the empirical and Bartlett windowed autocovariances as inputs respectively.
		\end{itemize}
		
		We excluded the initial positive and monotone sequence estimators from the plots, since these generally performed similarly to or worse than the initial convex sequence estimator. To avoid overcrowding the plots, we also excluded the empirical estimator for the squared $\ell_2$ error and the empirical, Bartlett, and MomentLS(Orcl,Brtl) estimators for the asymptotic variance error from Figures \ref{fig:discretel2avarComp} - \ref{fig:ar(1)l2AVarComp}. 
		We also reported only the MomentLS(Tune,Emp) results and excluded the MomentLS(Tune-Incr,Emp) results in Figures \ref{fig:discretel2avarComp} - \ref{fig:ar(1)l2AVarComp} because both sets of results were very similar. 
		Tables that include these results can be found in Supplementary Material Section S7~\citep{berg2023efficientsuppl}.
		
		\paragraph*{Metropolis-Hastings chain}
		
		The first plot in Figure~\ref{fig:discretel2avarComp} displays the squared $\ell_2$ error $\|\hat{r}-\gamma\|^2$ for several estimators $\hat{r}$. Notably, the moment LSEs using the empirical autocovariance sequence as $\rinit$ perform best out of the estimators considered for all sample sizes, with both the data driven and oracle tuning of $\delta$.
		The moment LSE with the Bartlett windowed sequence as the input sequence (MomentLS(Orcl,Brtl)) has reduced $\ell_2$ sequence error relative to the original Bartlett windowed autocovariance sequence (Bartlett). MomentLS(Orcl,Brtl) appear to converge slower than for the Moment LSEs with the empirical autocovariance sequence as input. This decrease in convergence rate may be due to information loss from the thresholding of higher lag autocovariances in the Bartlett window sequences, which prevents information at higher lags from being used at all, in contrast to the empirical autocovariance sequence, where information from all lags can be used.
		
		The second plot in Figure \ref{fig:discretel2avarComp} compares mean squared errors for the asymptotic variance estimation. The MomentLSEs using the empirical autocovariance sequence as input again perform best out of the considered estimators. While the performance of the moment LSE with the data-driven selection of $\delta$ and that of the initial convex sequence estimator appear to be quite similar, the former shows a slightly superior performance, especially for larger values of $M$.
		The batch means estimator (BM) appears to perform slightly worse than the overlapping batch means estimator (OLBM).
		
		Figure~\ref{fig:discreteCovarianceComp} shows a plot of the true, empirical, and moment LS estimated covariances for lags $k=0,...,100$ based on a single simulation with sample size $M=8000$. The empirical and moment LS estimated covariances are similar for very small $k$, but for larger $k$ the empirical autocovariances clearly have large fluctuations about the true covariances relative to the moment LS covariances. These fluctuations apparently account for the large squared $\ell_2$ error $\|\gamma-\tilde{r}_{M}\|^2$ of the empirical estimator.

		\begin{figure}[ht]
			\centering
			\includegraphics[width=.8\linewidth]{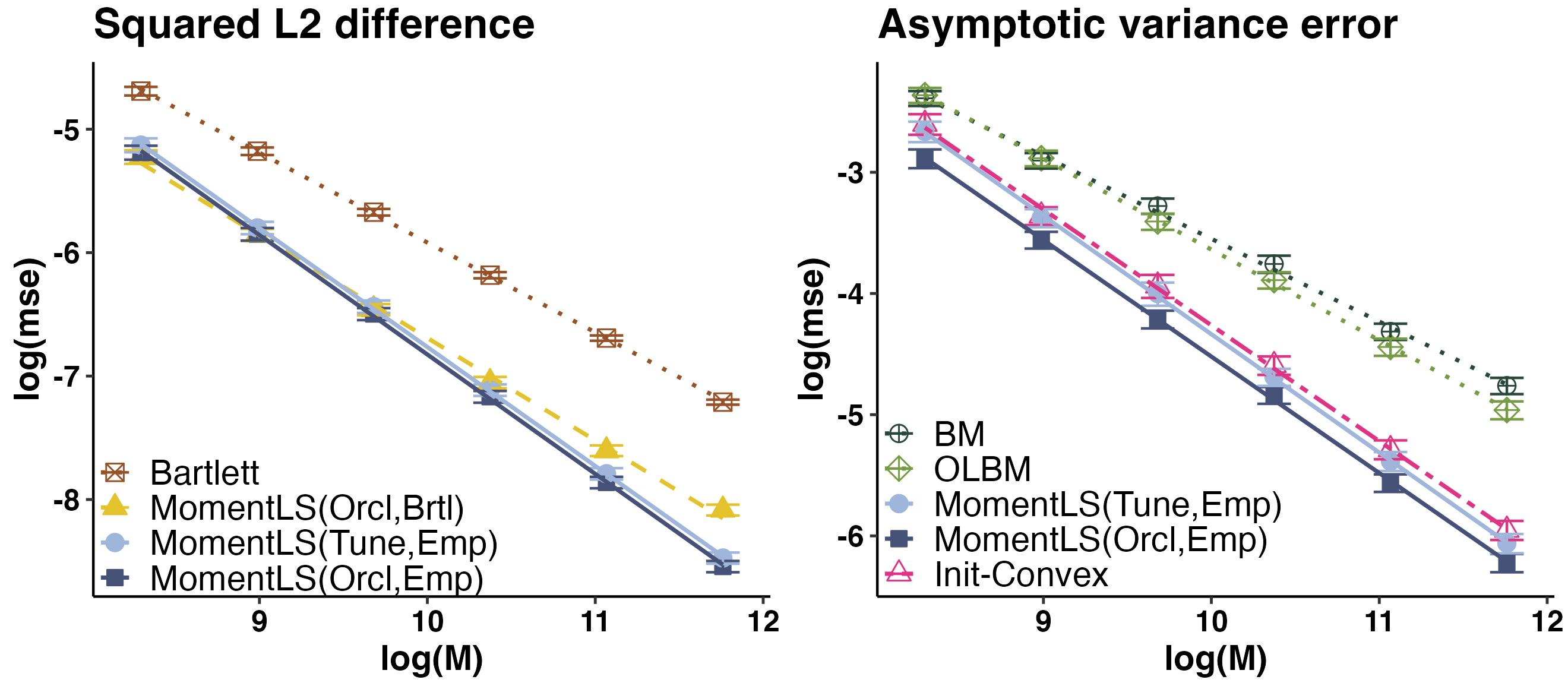}
			\caption{\textit{Plots for the discrete state space Metropolis-Hastings example. The first plot shows squared $\ell_2$ error $\|\gamma-\hat{r}\|^2$ and the second plot shows mean squared error for the asymptotic variance estimation. The error bars represent $1$ standard error from $B=400$ simulations.}} \label{fig:discretel2avarComp}
		\end{figure}

		\begin{figure}[ht]
			\centering
			\includegraphics[width=.45\linewidth]{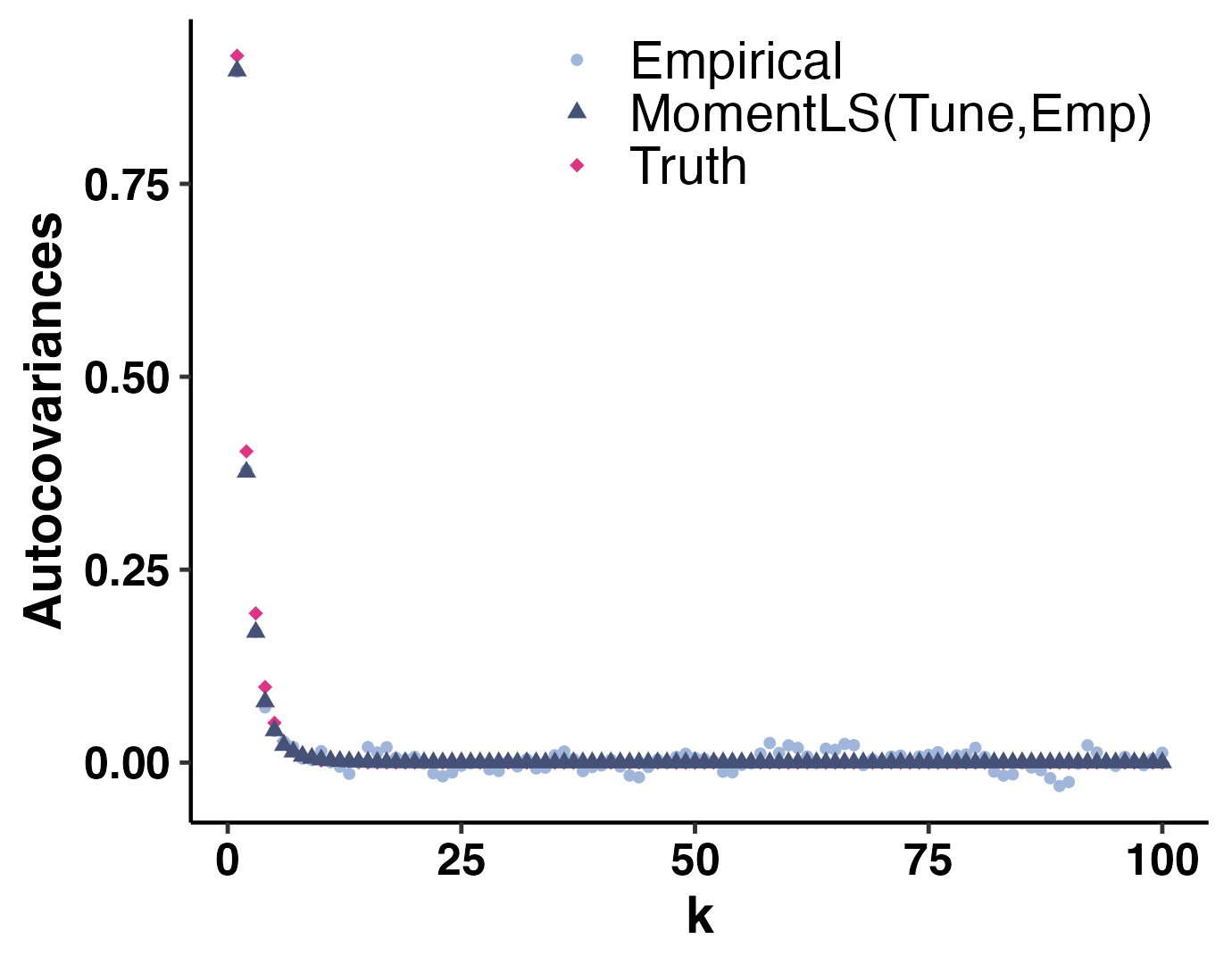}
			\caption{\textit{For the discrete state space Metropolis-Hastings example, a comparison of true, empirical, and moment LS estimated autocovariances from a single simulation with $M=8000$.}\label{fig:discreteCovarianceComp}}
		\end{figure}
		
		\paragraph*{Autoregressive chain}
		In Figure \ref{fig:ar(1)l2AVarComp}, we see generally comparable patterns in both of the AR(1) chain settings as in the discrete Metropolis-Hastings scenario. The estimated autocovariance sequences from the MomentLSEs with empirical autocovariances as the input sequences generally perform the best of the considered estimators in terms of squared $\ell_2$ error and mean squared error for estimation of the asymptotic variance. In the $\rho=-0.9$ setting, the performance of the initial convex sequence estimator appears to be quite poor relative to the other estimators. Similarly to Figure~\ref{fig:discreteCovarianceComp} for the Metropolis-Hastings example, Figure~\ref{fig:ar(1)covarianceComp} clearly shows the benefit of imposing shape constraints on the autocovariance sequence estimation, as the moment LS estimates $\Pi_{\delta}(\tilde{r}_{M})(k)$ are much closer to the true autocovariance sequence than the empirical autocovariances $\tilde{r}_M(k)$, especially for large lags $k$.
		
		\begin{figure}[ht]
			\centering
			\begin{subfigure}[t]{.8\textwidth}
				\centering
				\includegraphics[width=.96\linewidth]{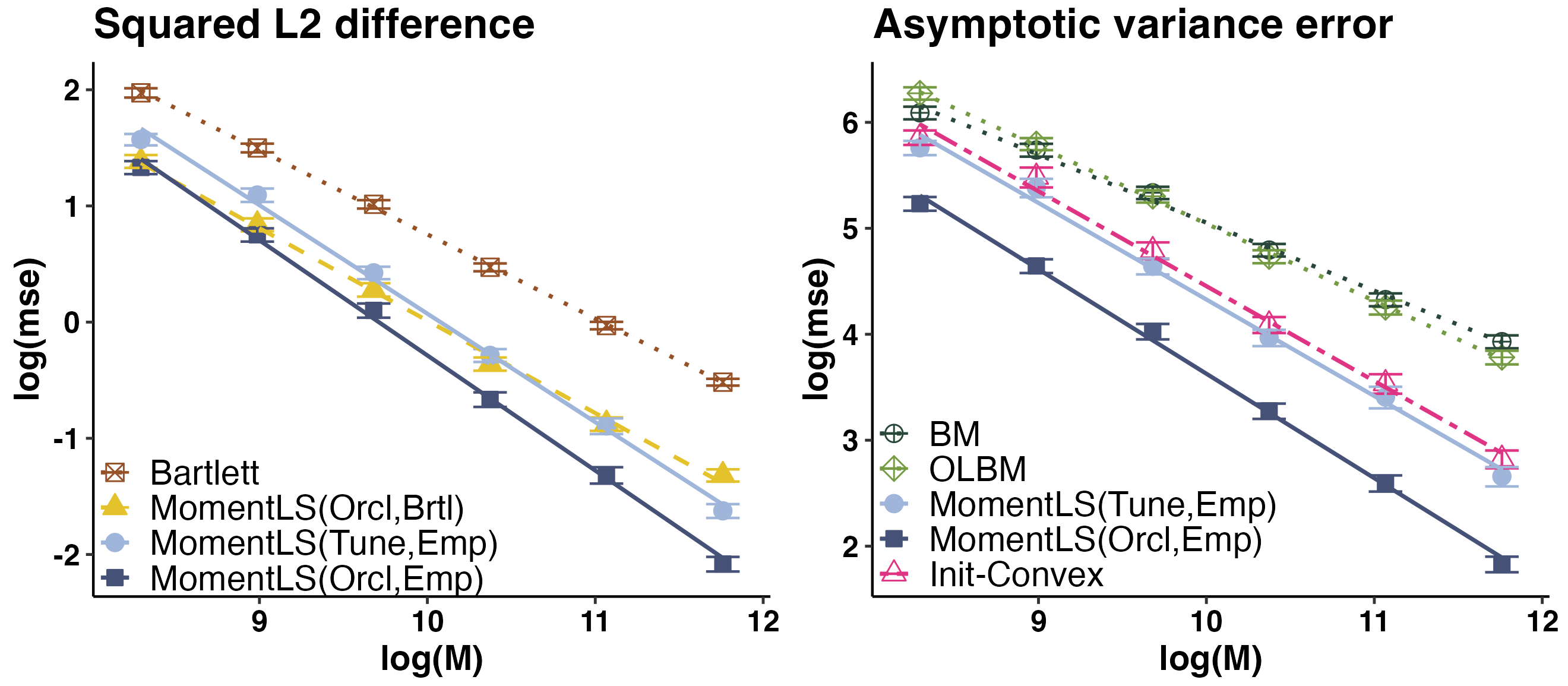}
				\caption{$\rho=0.9$\label{fig:l20.9}}
			\end{subfigure}%
			\\
			\begin{subfigure}[t]{0.8\textwidth}
				\centering
				\includegraphics[width=.96\linewidth]{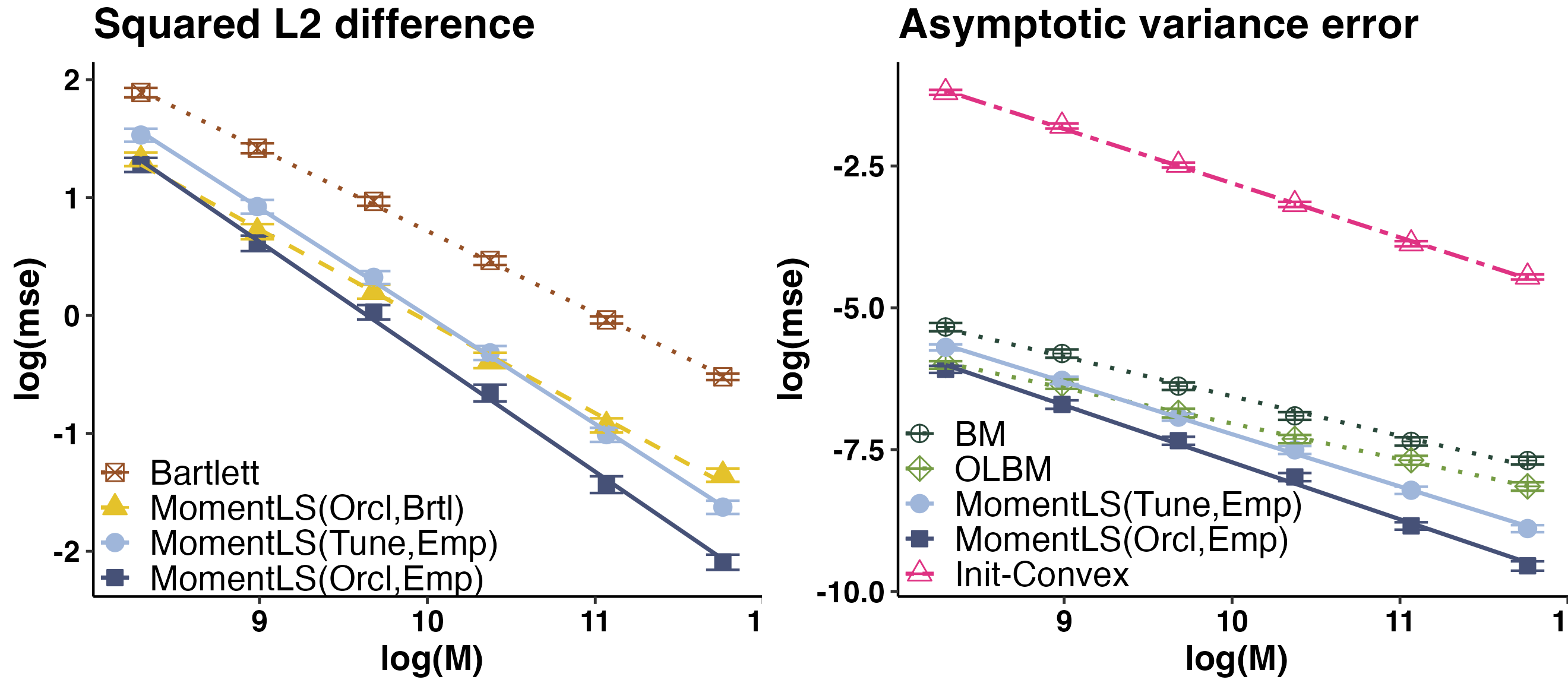}
				\caption{$\rho=-0.9$\label{fig:l2-0.9}}
			\end{subfigure}%
			\caption{\textit{Plots for the autoregressive example. Plots in the left column show squared $\ell_2$ error $\|\gamma-\hat{r}\|^2$ at $\rho=0.9$ and $\rho=-0.9$. Plots in the right column show mean squared error for the asymptotic variance estimation at $\rho=0.9$ and $\rho=-0.9$. The error bars represent $1$ standard error from $B=400$ simulations.}\label{fig:ar(1)l2AVarComp}}
		\end{figure}

		\begin{figure}[ht]
			\centering
			\begin{subfigure}[t]{0.45\textwidth}
				\centering
				\includegraphics[width=.96\linewidth]{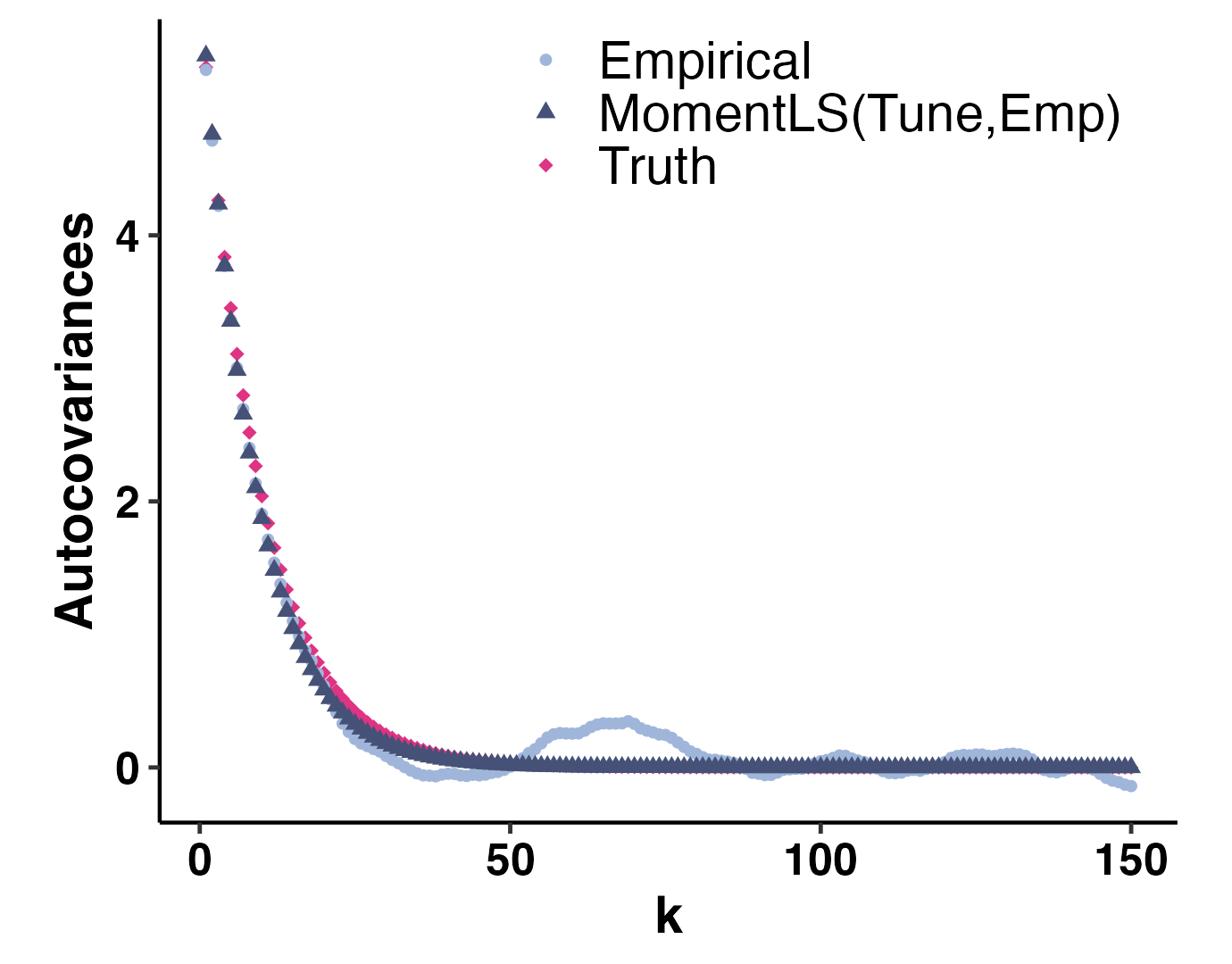}
				\caption{$\rho=0.9$ \label{fig:covariances0.9}}
			\end{subfigure}%
			~ 
			\begin{subfigure}[t]{0.45\textwidth}
				\centering
				\includegraphics[width=.95\linewidth]{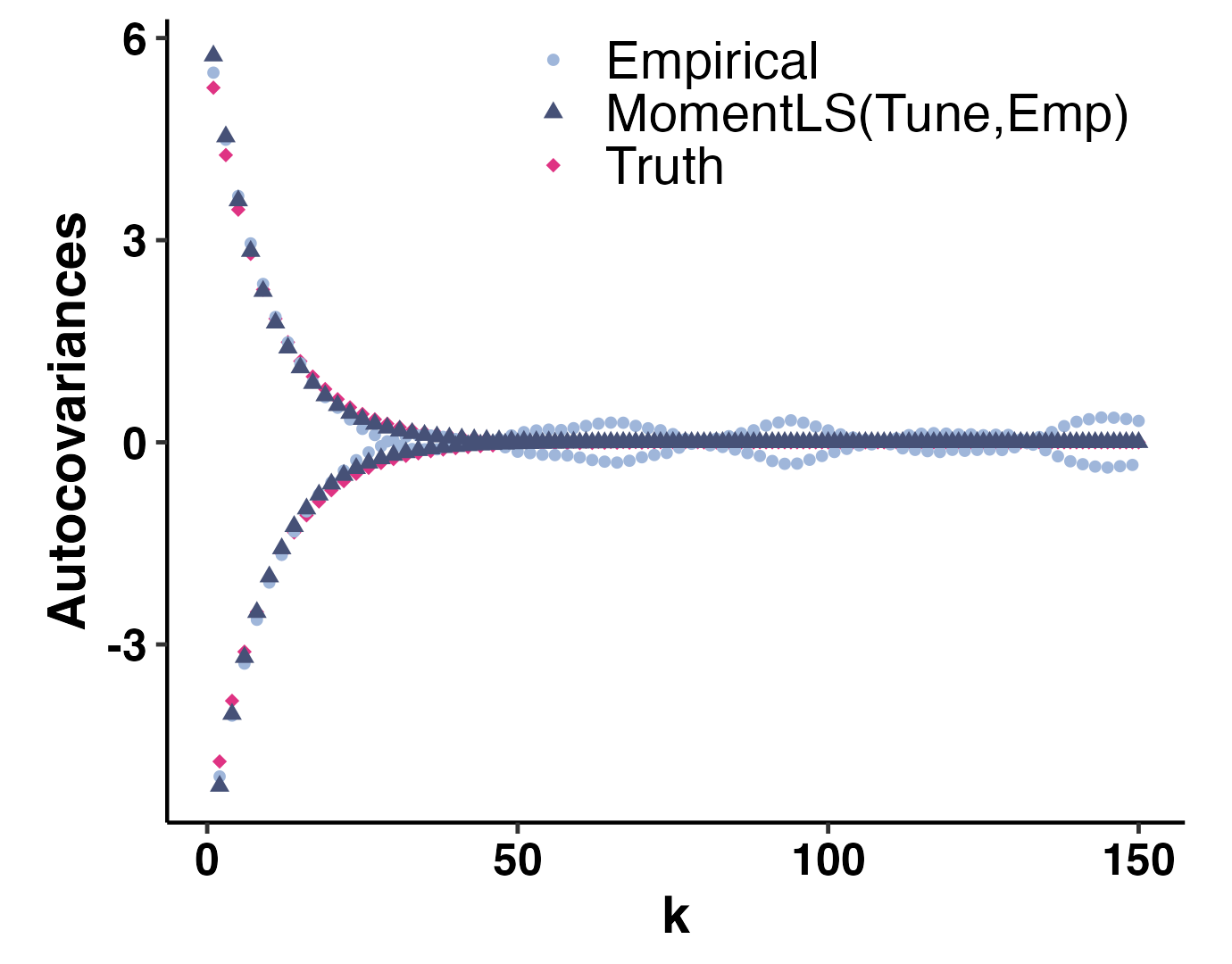}
				\caption{$\rho=-0.9$ \label{fig:covariances-0.9}}
			\end{subfigure}
			\caption{\textit{For the autoregressive example with (\subref{fig:covariances0.9}) $\rho=0.9$ and (\subref{fig:covariances-0.9}) $\rho=-0.9$, a comparison of true, empirical, and moment LS estimated autocovariances from a single simulation with $M=8000$.}\label{fig:ar(1)covarianceComp}}
		\end{figure}
		
		\subsection{Bayesian probit regression}\label{sec:emp_bayes}
		
		In this section, we illustrate the effectiveness of our method in a more realistic Bayesian probit regression model. 
		We first compare the estimated asymptotic variances from the competing methods. In addition to this, as we mentioned in the Introduction, an asymptotic variance estimator is needed to quantify uncertainty in the MCMC estimates and to effectively terminate the chain based on the perceived precision of the MCMC estimates. We conduct two experiments in this regard: first, we construct confidence intervals based on the estimated asymptotic variances of competing methods for a fixed length chain and compare their coverage probabilities; and second, we compare the coverage probabilities of competing methods for a variable length chain, where for each method the chain length is determined by a fixed-width rule.
		
		We consider the Glass identification data from the UCI machine learning repository. The dataset contains $214$ examples of the chemical analysis of 7 different types of glass. We aim to predict the first glass type based on its $9$ chemical properties $\mathbf{x}=(x_1,\dots,x_9) \in \R^{9}$. For the $i$th observation, we let $Y_i=1$ if it is of the first glass type. We suppose
		\begin{align*}
			Pr(Y_i=1) = \Phi ( \beta_0 + \sum_{j=1}^9 \beta_j x_{ij})
		\end{align*}
		and assign independent $N(0,1)$ priors on $\beta = (\beta_0,\dots,\beta_9)$. 
		
		We sample $\{\beta(t)\}_{t=0}^{M-1}$ from the posterior distribution $\beta| \{Y_i\}_{i=1}^{214} \sim \pi(\cdot)$ using the data augmentation Gibbs sampler of \cite{chib1993}. This sampler is displayed in Algorithm \ref{alg:ACsampler}. We let $\mathbf{X} \in \R^{n\times 10}$ be the design matrix where each row of $\mathbf{X}$ is $[1,\mathbf{x}_i]$. The marginal chain $\{\beta(t)\}_{t \ge 0}$, which we consider here, is reversible with respect to the posterior $\pi$~\citep[see, e.g.,][]{wongKongLiu,robert2004monte}. Additionally, the $\{\beta(t)\}_{t \ge 0}$ chain has been shown to be geometrically ergodic \citep{chakraborty2017convergence}. 
		
		\begin{algorithm}
			\caption{\cite{chib1993} sampler}\label{alg:ACsampler}
			\begin{algorithmic}
				\State 1. Draw independent $z_1,\dots,z_n$ with $z_i \sim TN(\mathbf{x}_i^\top \beta, 1, y_i)$, $i=1,\dots,n$.
				\State \quad Let $\mathbf{z} = [z_1,\dots,z_n]$.
				\State 2. Draw $\beta \sim N_p((\mathbf{X}^\top \mathbf{X} + I_n)^{-1}\mathbf{X}^\top \mathbf{z},(\mathbf{X}^\top \mathbf{X} + I_n)^{-1}).$
			\end{algorithmic}
		\end{algorithm}

		To compare estimated asymptotic variances and coverage probabilities from the competing methods, we need accurate reference estimates of posterior mean and asymptotic variance for each coefficient.  
		Since both quantities are unknown, we independently generated a long chain $\{\beta_{\rm long}(t)\}_{t=0}^{M_1-1}$ with $M_1 = 5 \times 10^6$ iterations to estimate posterior mean and also $B=1000$ independent chains $\{\beta_{\rm par}^{(b)}(t)\}_{t=0}^{M_2-1}$ with $M_2=5 \times 10^4$ to estimate asymptotic variance. Specifically, we use $\beta_{\textrm{orcl},j} = M_1^{-1}\sum_{t=0}^{M_1-1}\beta_{\textrm{long},j}(t)$ to estimate the posterior mean of the $j$th coefficient, and use
		$\sigma^2_{\textrm{orcl},j}=M\sum_{b=1}^{1000}(\bar{\beta}_{\textrm{par},j}^{(b)}-\bar{\bar{\beta}}_{\textrm{par},j})^2$ to estimate the asymptotic variance for the $j$th coefficient, where $\bar{\beta}_{\textrm{par},j}^{(b)}$ refers to the sample mean value of $\beta_j$ from the $b$th chain and $\bar{\bar{\beta}}_{\textrm{par},j}=\frac{1}{1000}\sum_{b=1}^{1000}\bar{\beta}_{\textrm{par},j}^{(b)}$ refers to the sample mean of $\bar{\beta}_{\textrm{par},j}^{(b)}$.
		
		Table \ref{tbl:chain_properties} shows some estimated summary properties for the chains from \citet{chib1993} sampler, including the estimated posterior mean $\beta_{\rm orcl}$, asymptotic variance $\sigma^2_{\textrm{ orcl}}$, Monte Carlo standard error (MCSE) for $\beta_{\rm orcl}$, as well as the estimated multiplier for the effective sample size $M_{\rm eff}/M = 1/(1+2\sum_{t\in\Z} \rho(t))$, lag 1 autocorrelation $\rho(1)$, and $\delta_\gamma$, the gap between $1$ and the largest support point (in magnitude) for the representing measure of $\gamma$. Note that a smaller value of $\delta_\gamma$ implies slower mixing, as the spectral gap should be at least as small as $\delta_\gamma$. In the table, MCSE$_j = \sigma_{\textrm{orcl},j}/\sqrt{M_1}$, $M_{\rm eff}/M$ was estimated based on $\sigma^2_{\textrm{orcl}}$ and the lag 0 empirical autocovariances from the parallel chains, $\rho(1)$ was estimated based on the empirical autocovariances at lag 0 and 1 of the long chain $\{\beta_{\textrm{long}}(t)\}_{t=0}^{M_1-1}$, and $\delta_\gamma$ was estimated by $\hat{\delta}_{M_1}$ in \eqref{eq:delta_hat}, also using the long chain $\{\beta_{\textrm{long}}(t)\}_{t=0}^{M_1-1}$. For many of the coefficients, the estimated gap $\delta_{\gamma}$ is relatively small.
		
		\begin{table}[ht]
			\caption{\textit{Some estimated summary properties of the chains from the \cite{chib1993} sampler.  }}\label{tbl:chain_properties}
			\centering
			\setlength\tabcolsep{3pt}
			\begin{tabular}{ccccccc}
				\hline
				Coef & $\beta_{\rm orcl}$ & $\sigma^2_{\rm orcl}$ & MCSE & $M_{\rm eff}/M$ & $\rho(1)$ & $\delta_\gamma$ \\ 
				\hline
				$\beta_0$ & -1.262 & 3.965 & 8.91$\times 10^{-4}$ & 0.013 & 0.912 & 0.025 \\ 
				$\beta_1$ & 0.301 & 0.337 & 2.56$\times 10^{-4}$ & 0.268 & 0.553 & 0.114 \\ 
				$\beta_2$ & -0.198 & 1.187 & 4.87$\times 10^{-4}$ & 0.102 & 0.351 & 0.050 \\ 
				$\beta_3$ & 1.555 & 3.055 & 7.82$\times 10^{-4}$ & 0.111 & 0.257 & 0.039 \\ 
				$\beta_4$ & -0.768 & 1.611 & 5.68$\times 10^{-4}$ & 0.062 & 0.599 & 0.040 \\ 
				$\beta_5$ & 0.451 & 0.772 & 3.93$\times 10^{-4}$ & 0.155 & 0.339 & 0.058 \\ 
				$\beta_6$ & -0.016 & 7.863 & 1.25$\times 10^{-3}$ & 0.025 & 0.708 & 0.042 \\ 
				$\beta_7$ & 0.047 & 0.966 & 4.40$\times 10^{-4}$ & 0.347 & 0.217 & 0.114 \\ 
				$\beta_8$ & 0.080 & 9.235 & 1.36$\times 10^{-3}$ & 0.019 & 0.791 & 0.027 \\ 
				$\beta_9$ & -0.103 & 0.056 & 1.06$\times 10^{-4}$ & 0.216 & 0.567 & 0.088 \\ 
				\hline
			\end{tabular}
		\end{table}

		\paragraph*{Comparison of asymptotic variance estimates}
		We first compare the asymptotic variance estimates $\hat{\sigma}_{jM}$ obtained by BM, OLBM, Init-Convex, and MomentLS, for each coefficient $\beta_j$, $j=0,\dots,9$. 
		
		\begin{table}[ht]
			\setlength\tabcolsep{3pt} 
			\centering
			
			\caption{\textit{Estimated mean squared relative errors (s.e.) for asymptotic variance estimates for the Glass data Bayesian probit regression with $B=400$ simulations. For each simulation, we generated a length $M=16000$ chain for $\beta$. The method with the smallest estimated mean squared errors is highlighted in bold for each coefficient.}}\label{tbl:asymp_var_error}
			\begin{tabular}{ccccc}
				\hline
				Coef & BM & OLBM & MomentLS.Tune.Emp. & Init.Convex \\ 
				\hline
				$\beta_0$ & 0.102 (0.003) & 0.089 (0.003) & \textbf{0.048 (0.004)} & 0.062 (0.006) \\ 
				$\beta_1$ & 0.008 (0.000) & 0.007 (0.000) & \textbf{0.003 (0.000)} & 0.004 (0.000) \\ 
				$\beta_2$ & 0.299 (0.002) & 0.268 (0.002) & 0.036 (0.002) & \textbf{0.033 (0.002)} \\ 
				$\beta_3$ & 0.446 (0.002) & 0.415 (0.002) & 0.069 (0.003) & \textbf{0.046 (0.002)} \\ 
				$\beta_4$ & 0.195 (0.002) & 0.172 (0.002) & \textbf{0.029 (0.002)} & 0.030 (0.002) \\ 
				$\beta_5$ & 0.216 (0.002) & 0.193 (0.002) & 0.039 (0.002) & \textbf{0.031 (0.002)} \\ 
				$\beta_6$ & 0.235 (0.003) & 0.204 (0.003) & \textbf{0.024 (0.002)} & 0.026 (0.002) \\ 
				$\beta_7$ & 0.113 (0.001) & 0.101 (0.001) & 0.054 (0.001) & \textbf{0.035 (0.001)} \\ 
				$\beta_8$ & 0.229 (0.004) & 0.201 (0.004) & \textbf{0.052 (0.005)} & 0.061 (0.005) \\ 
				$\beta_9$ & 0.020 (0.001) & 0.017 (0.001) & 0.011 (0.001) & \textbf{0.011 (0.001)} \\ 
				\hline
			\end{tabular}
		\end{table}
		
		Table \ref{tbl:asymp_var_error} shows the mean squared relative errors $\{(\hat{\sigma}_j^2 - \sigma_{\textrm{ orcl},j}^2)/\sigma_{\textrm{ orcl},j}^2\}^2$ from $B=400$ simulated chains of length $M=16000$.
		Generally, both moment LS and initial convex sequence estimators perform better than the batch means and overlapping batch means estimators. The Moment LS estimator and initial convex sequence estimator perform quite similarly.

		\paragraph*{Comparison of coverage probabilities}
		
		We compare the coverage probabilities of the confidence intervals
		\begin{align}\label{eq:confInt}
			\bar{\beta}_{jM} \pm t_{\alpha/2,M-1}\frac{\hat{\sigma}_{jM}}{\sqrt{M}}
		\end{align}
		for each coefficient $\beta_j$, $j=0,\dots,9$, using $\hat{\sigma}_{jM}$ produced by BM, OLBM, Init-Convex, and MomentLS. 
		For comparison, we also consider Oracle coverage probabilities based on the estimated ``true"  asymptotic variances $\sigma^2_{\textrm{orcl},j}$ as in the previous section.

		Table \ref{tab1:coverage_p} shows the estimated coverage probabilities for 95\% confidence intervals \eqref{eq:confInt} from length $M=16000$ chains based on the asymptotic variances from the four methods (BM, OLBM, Init-Convex, and MomentLS) as well as using the Oracle asymptotic variance estimate. We used $B=1000$ independent simulations. From Table \ref{tab1:coverage_p}, we observe that the coverage percentages for the BM and OLBM methods tend to be lower than the nominal 95\% coverage probability. The moment LS and initial convex sequence estimates show more similar behavior, with the initial convex sequence estimates achieving coverage closest to the nominal $95\%$ more often.
		
		\begin{table}[ht]
			\caption{\textit{Estimated coverage probabilities for the Glass data Bayesian probit regression with $B=1000$ simulations. For each simulation, we generated a length $M=16000$ chain for $\beta$. The method whose coverage probability is closest to 95\% (excluding the Oracle) is highlighted in bold for each coefficient.}}\label{tab1:coverage_p}
			\centering
			\setlength\tabcolsep{3pt} 
			
			\begin{tabular}{lcccccccccc}
				\hline
				Estimator & $\beta_0$ & $\beta_1$ & $\beta_2$ & $\beta_3$ & $\beta_4$ & $\beta_5$ & $\beta_6$ & $\beta_7$ & $\beta_8$ & $\beta_9$ \\ 
				\hline
				BM & 0.88 & \textbf{0.93} & 0.81 & 0.73 & 0.85 & 0.81 & 0.84 & 0.89 & 0.84 & \textbf{0.92} \\ 
				OLBM & 0.89 & \textbf{0.93} & 0.83 & 0.74 & 0.86 & 0.82 & 0.85 & 0.89 & 0.85 & \textbf{0.92} \\ 
				MomentLS(Tune,Emp) & \textbf{0.94} & \textbf{0.93} & \textbf{0.93} & 0.91 & \textbf{0.93} & 0.91 & \textbf{0.94} & 0.92 & \textbf{0.93} & \textbf{0.92} \\ 
				Init-Convex & 0.93 & \textbf{0.93} & \textbf{0.93} & \textbf{0.92} & \textbf{0.93} & \textbf{0.92} & \textbf{0.94} & \textbf{0.93} & \textbf{0.93} & \textbf{0.92} \\ 
				Oracle & 0.94 & 0.93 & 0.94 & 0.94 & 0.95 & 0.95 & 0.95 & 0.95 & 0.95 & 0.93 \\ 
				\hline
			\end{tabular}
		\end{table}
		
		We also compared the coverage probabilities in the context of fixed-width methodology~\citep{jones2006fixed}. The idea of fixed-width rules is to terminate the simulation once a desirable confidence interval half-width $\epsilon$ for an MCMC estimate is achieved. For a specified accuracy $\epsilon$, we terminate the chain the first time the following inequality holds:
		\begin{align}\label{eq:fixed_width}
			\max \left\{t_{\alpha/2,M-1} \frac{\hat{\sigma}_{jM}}{\sqrt{M}}, \text { for } j=0,1,2,\dots,9\right\}+p(M) \leq \epsilon    
		\end{align}
		where $p(M)=\epsilon I\left(M \leq M^*\right)+M^{-1}$ and $M^*$ is a desirable minimum chain length. The role of $p(M)$ is to ensure that the simulation is not terminated too prematurely. 
		\citet{glynn1992asymptotic} established that if a functional central limit theorem holds and if a strongly consistent asymptotic variance estimator is used, the $1-\alpha$ confidence interval whose chain length $M$ is chosen based on the fixed-width rule \eqref{eq:fixed_width} is asymptotically valid as $\epsilon \to 0$.
		
		We simulated $B=1000$ chains using the fixed-width rules based on the BM, OLBM, Init-Convex, and Moment LS asymptotic variance estimates. As before, the Oracle row of the table refers to coverage probability and sample size selection based on the reference asymptotic variance values $\sigma^2_{\textrm{orcl},j}$ for each coefficient. We began each simulation with a minimum chain length of $M^*$, and if the criterion \eqref{eq:fixed_width} is not satisfied, an additional 10\% of the current number of iterations were performed before checking the criterion again. We computed the 95\% confidence intervals based on the simulated chains (with random lengths) and checked whether the constructed confidence intervals included the true posterior mean or not. We used $\epsilon = 0.05$ and the minimum chain length $M^* = 1000$.
		
		Table \ref{tbl2:coverage_p_fixed_width} reports the coverage probabilities. 
		We observe a similar result as in the previous comparison. BM and OLBM tend to produce too liberal intervals. Moment LS and initial sequence estimates seem to achieve coverage probability closest to the nominal level on average, with the initial sequence estimates achieving coverage closer to nominal more often.
		
		\begin{table}[ht]
			\centering
			\caption{\textit{Average chain length at termination and coverage probabilities for the Glass data Bayesian probit regression with $B=1000$ simulations using fixed-width methods. The first column displays the mean (s.e.) chain length at termination. The method whose coverage probability is closest to 95\% (excluding the Oracle) is highlighted in bold for each coefficient.}}\label{tbl2:coverage_p_fixed_width}
			\setlength\tabcolsep{3pt} 
			
			\begin{tabular}{lccccccccccc}
				\hline
				Estimator & M (s.e.) & $\beta_0$ & $\beta_1$ & $\beta_2$ & $\beta_3$ & $\beta_4$ & $\beta_5$ & $\beta_6$ & $\beta_7$ & $\beta_8$ & $\beta_9$ \\ 
				\hline
				BM & 4,227 (40) & 0.82 & 0.94 & 0.75 & 0.63 & 0.80 & 0.79 & 0.74 & 0.88 & 0.77 & 0.91 \\ 
				OLBM & 4,563 (42) & 0.83 & 0.94 & 0.76 & 0.66 & 0.80 & 0.81 & 0.75 & 0.90 & 0.80 & 0.91 \\ 
				MomentLS(Tune,Emp) & 9,850 (70) & 0.93 & \textbf{0.95} & \textbf{0.93} & 0.87 & \textbf{0.93} & 0.89 & \textbf{0.94} & \textbf{0.93} & 0.92 & \textbf{0.94} \\ 
				Init-Convex & 10,022 (76) & \textbf{0.94} & \textbf{0.95} & \textbf{0.93} & \textbf{0.90} & 0.92 & \textbf{0.91} & \textbf{0.94} & \textbf{0.93} & \textbf{0.93} & \textbf{0.94} \\ 
				Oracle & 10,832 (0) & 0.95 & 0.94 & 0.96 & 0.94 & 0.94 & 0.94 & 0.95 & 0.95 & 0.94 & 0.94 \\ 
				\hline
			\end{tabular}
		\end{table}

		We note that in this section we have treated asymptotic variance estimation for the coefficient vector $\beta$ in a component-wise fashion. It can be beneficial to also consider output analysis tools that take cross-covariance between components into consideration~\citep[e.g.,][]{vats2019multivariate}. In this regard, extending the current framework to estimate the asymptotic variance matrix for multivariate functions of the Markov chain state, as in \citet{dai2017multivariate, vats2018strong}, is of interest.

		\section{Conclusion}
		
		In this work, we proposed a novel shape-constrained estimator for the autocovariance sequence from a reversible Markov chain. To the best of our knowledge, this is the first work in which the spectral representation of the autocovariance sequence is exploited to estimate the autocovariance sequence subject to infinitely many shape constraints. We have carried out a thorough analysis of the proposed Moment LS estimator, including its characterization and theoretical guarantees. Especially, we showed the strong consistency of the autocovariance sequence estimate from the Moment LS estimator in terms of an $\ell_2$ error metric, convergence of the representing measure of the Moment LS estimator to the true representing measure, and the strong consistency of an estimate of the Markov chain CLT asymptotic variance based on our autocovariance sequence estimator. Our theoretical results hold for reversible and geometrically ergodic Markov chains. Finally, we empirically validated our theoretical findings and demonstrated the effectiveness of the proposed estimator compared to existing autocovariance estimators in both simulated and real data settings, including batch means, spectral variance estimators, and initial sequence estimators.

		\section{Acknowledgements}
		HS and SB gratefully acknowledge support from NSF DMS-2311141.

		\ifnum\pageoption=1 
		\bibliographystyle{plainnat}
		\bibliography{bib}
		\fi
		
		\ifnum\pageoption=2 
		\putbib 
	\end{bibunit}
	\fi 
	
	\ifnum\pageoption>1
	\newpage
	\ifnum\pageoption=2\begin{bibunit}\fi
		\setcounter{page}{1}
		\renewcommand{\thepage}{S\arabic{page}} 
		\renewcommand{\thesection}{S\arabic{section}}  
		\renewcommand{\thetable}{S\arabic{table}}  
		\renewcommand{\thefigure}{S\arabic{figure}}
		\renewcommand{\theequation}{S-\arabic{equation}}
		\setcounter{equation}{0}
		\setcounter{section}{0}
		\setcounter{table}{0}
		\spacingset{1.5}
		\begin{center}
			{\Large\bf Supplement to ``\Title"}\\
			\vspace{1em}
			{\large Hyebin Song and Stephen Berg}\\
			{\large Department of Statistics, Pennsylvania State University}
		\end{center}
		
		\section{Computation of Moment LS estimators}\label{sec:computation}
		In this section, we provide some details in obtaining the convex optimization problem in \eqref{eq:opt_problem_eq2}. Recall
		\begin{align}
			\sum_{k\in\Z} (\rinit(k)-m(k))^2 
			&= \sum_{k\in\Z} (\rinit(k)-\sum_{i=1}^s \alpha_i^{|k|} w_i)^2\nonumber\\
			&= \sum_{k;\in \Z} \rinit(k)^2 -2\sum_{k\in\Z} \rinit(k)\left(\sum_{i=1}^s \alpha_i^{|k|} w_i\right)  + \sum_{k\in \Z}\left(\sum_{i=1}^s \alpha_i^{|k|} w_i\right)^2. \label{eq:quadProg1} 
		\end{align} and the definitions of $\mathbf{w}$, $\mathbf{a}$, and $\mathbf{B}$.
		
		The first term in~\eqref{eq:quadProg1} is simply ${\rinit}^\top \rinit$ for an input vector $\rinit$.
		For the second term, we have
		\begin{align*}
			\sum_{k\in\Z} \rinit(k)\left(\sum_{i=1}^s \alpha_i^{|k|} w_i\right) 
			&=\sum_{|k|\le T_0 -1} \rinit(k)\left(\sum_{i=1}^s \alpha_i^{|k|} w_i\right) \\
			&=\sum_{i=1}^s\left(\sum_{|k|\le T_0 -1} \rinit(k) \alpha_i^{|k|} \right) w_i = \mathbf{a}^\top \mathbf{w}.
		\end{align*}
		
		For the third term, we have
		\begin{align*}
			\sum_{k\in \Z}\left(\sum_{i=1}^s \alpha_i^{|k|} w_i\right)^2  
			&=\sum_{k\in \Z}\sum_{i=1}^s \sum_{j=1}^s \alpha_i^{|k|} \alpha_j^{|k|} w_i w_j\\
			&=\sum_{i,j=1}^{s} \frac{1+\alpha_i \alpha_j}{1-\alpha_i\alpha_j} w_i w_j = \mathbf{w}^\top \mathbf{B} \mathbf{w}.
		\end{align*}
		Therefore we have
		\begin{align*}
			\sum_{k\in\Z} (\rinit(k)-m(k))^2  = {\rinit}^\top \rinit -2 \mathbf{a}^\top \mathbf{w} + \mathbf{w}^\top \mathbf{B} \mathbf{w}
		\end{align*}
		as desired. Since we minimize over $m\in\mathscr{M}_{\infty}(\Theta)\cap\ell_2(\mathbb{Z})$, we require $$\mathbf{w} = [\mu_m(\{\alpha_1\}),\dots,\mu_{m}(\{\alpha_s\})] \ge 0$$ elementwise. Finally, we note that $\mathbf{B}$ is a positive definite matrix because $\mathbf{w}^\top \mathbf{B} \mathbf{w} =0 $ implies that 
		$\sum_{i=1}^s \alpha_i^{|k|} w_i  = 0$ for all $k \in \Z$. By choosing at least $s$ distinct $|k|$, we obtain $\mathbf{w} = 0$.

		\section{A few technical Lemmas}\label{sec:lemmas}
		
		\begin{lem}\label{lem:meas_boundary}
			Suppose $f \in \Mld{0}$, and let $F$ be the representing measure for $f$, i.e., $f(k) = \int  x^{|k|} F(dx)$. Then $F(\{-1,1\})=0$. That is, the measure $F$ does not have any point mass on $-1$ or $1$.
			
		\end{lem}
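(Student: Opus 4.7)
The plan is to argue by contradiction: suppose $F(\{-1\}) + F(\{1\}) > 0$, and exhibit a subsequence of $f$ that is bounded below by a positive constant, contradicting $f \in \ell_2(\mathbb{Z})$. This directly exploits the tension between $\ell_2$-summability and the existence of a point mass at the endpoints, where the integrand $x^{|k|}$ does not decay.

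Concretely, I would decompose the representing measure as $F = a \delta_1 + b \delta_{-1} + F'$, where $a = F(\{1\})$, $b = F(\{-1\})$, and $F'$ is the restriction of $F$ to $(-1,1)$. Since $F$ is a positive finite measure on $[-1,1]$, $F'$ is a positive measure on $(-1,1)$ with $F'([-1,1]\setminus\{-1,1\}) < \infty$. Plugging this into $f(k) = \int x^{|k|} F(dx)$ yields
\begin{align*}
f(k) = a + b(-1)^{|k|} + \int_{(-1,1)} x^{|k|} F'(dx).
\end{align*}
Restricting attention to even indices $k = 2n$ with $n \in \mathbb{Z}$, the term $b(-1)^{|k|}$ equals $b$, and the integrand $x^{2n}$ is nonnegative on $(-1,1)$, so $f(2n) \geq a + b$ for every $n \in \mathbb{Z}$.

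If $a + b > 0$, this forces $\sum_{k \in \mathbb{Z}} f(k)^2 \geq \sum_{n \in \mathbb{Z}} (a+b)^2 = \infty$, contradicting $f \in \ell_2(\mathbb{Z})$. Hence $a = b = 0$, i.e.\ $F(\{-1,1\}) = 0$, as claimed. I do not anticipate any real obstacle here — the only care needed is the clean decomposition of $F$ into its atomic part at $\pm 1$ and its restriction to the open interval, and the observation that even powers of $x$ kill the sign oscillation from a possible atom at $-1$, so that the two atomic contributions add constructively rather than cancel along the even-lag subsequence.
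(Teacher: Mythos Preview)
Your proof is correct and takes essentially the same approach as the paper's: both use that $f(2k) \geq F(\{-1,1\})$ because $x^{2k}=1$ on $\{-1,1\}$ and $x^{2k}\geq 0$ on $(-1,1)$. The paper simply observes $F(\{-1,1\}) \leq \int x^{2k}\,F(dx) = f(2k) \to 0$ directly, rather than phrasing it as a contradiction via the explicit atomic decomposition.
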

		\begin{proof}
			For any $k \in \Z$, 
			\begin{align*}
				F(\{-1,1\}) \le \int_{[-1,1]} x^{2k} F(dx) = f(2k).
			\end{align*}
			Since $f \in \sqs$, we have $f(2k)\to 0$ as $k\to \infty$. Thus, $F(\{-1,1\})  = 0$.
			
		\end{proof}
		
		\begin{lem}\label{lem:l1moments}
			Suppose $f \in \mathscr{M}_{\infty}(\delta)$ for some $\delta>0$. Then $f \in \ell_1(\Z)$.
			
		\end{lem}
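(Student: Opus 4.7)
The plan is to bound each $|f(k)|$ by a geometric term $(1-\delta)^{|k|}$ times a constant, then sum the resulting two-sided geometric series. The key observation is that the support condition on the representing measure gives a uniform bound on $|x|^{|k|}$ over the support.

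More concretely, since $f \in \mathscr{M}_\infty(\delta)$, there exists a positive measure $F$ with ${\rm Supp}(F) \subseteq [-1+\delta, 1-\delta]$ such that $f(k) = \int x^{|k|} F(dx)$ for every $k \in \Z$. For every $x \in {\rm Supp}(F)$ we have $|x| \le 1-\delta$, so by the triangle inequality for integrals,
\begin{align*}
|f(k)| \le \int |x|^{|k|} F(dx) \le (1-\delta)^{|k|} F([-1+\delta,1-\delta]) = (1-\delta)^{|k|} f(0),
\end{align*}
where the final equality uses that $F([-1+\delta,1-\delta]) = \int x^0 F(dx) = f(0)$.

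I would then sum over $k \in \Z$ using the formula for a two-sided geometric series:
\begin{align*}
\sum_{k \in \Z} |f(k)| \le f(0) \sum_{k \in \Z} (1-\delta)^{|k|} = f(0)\left(1 + 2\sum_{k=1}^{\infty}(1-\delta)^k\right) = f(0)\cdot\frac{2-\delta}{\delta},
\end{align*}
which is finite because $\delta > 0$. No step here looks difficult; the only thing worth double-checking is that the bound on $|x|$ is valid uniformly on the support (which it is by the definition of $\mathscr{M}_\infty(\delta)$) and that $f(0) < \infty$ (which follows because $F$ is a finite positive measure, as $f(0) = F([-1+\delta,1-\delta])$ is an integral of the constant function $1$ against a regular measure on a compact set).
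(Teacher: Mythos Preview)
Your proof is correct and essentially the same as the paper's: both exploit the fact that $|x|\le 1-\delta$ on ${\rm Supp}(F)$ to obtain geometric decay and arrive at the identical bound $\sum_{k\in\Z}|f(k)|\le \frac{2-\delta}{\delta}f(0)$. The only cosmetic difference is ordering---you bound each $|f(k)|$ pointwise by $(1-\delta)^{|k|}f(0)$ and then sum, whereas the paper first applies the triangle inequality inside the sum, swaps sum and integral via Tonelli, and then bounds $\frac{1+|x|}{1-|x|}$ by its supremum on $[-1+\delta,1-\delta]$.
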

		\begin{proof}
			
			Let $F$ denote the representing measure for $f$. We have
			\begin{align*}
				\sum_{k \in \Z} |f(k)| 
				&= \sum_{k \in \Z} \lvert \int  x^{|k|} F(dx) \rvert\\
				& \le \sum_{k \in \Z}  \int  |x|^{|k|} F(dx) \\
				& =   \int  \sum_{k \in \Z}|x|^{|k|} F(dx) 
			\end{align*}
			where the last equality is due to Tonelli's theorem. Also since 
			\begin{align*}
				\sum_{k \in \Z}|x|^{|k|} = 1 + 2\sum_{k\ge 1} |x|^{k} = \frac{1+|x|}{1-|x|},
			\end{align*}
			we have,
			\begin{align*}
				\sum_{k \in \Z} |f(k)|  \le \int \frac{1+|x|}{1-|x|} F(dx) \le \sup_{x \in \ralpha} \left( \frac{1+|x|}{1-|x|}\right)\int 1 F(dx) =\frac{2-\delta}{\delta} f(0) <\infty.
			\end{align*} where we used the fact $0\leq \int 1F(dx)<\infty$ since $f\in \mathscr{M}_{\infty}(\delta)$ implies $F$ is a finite, regular measure. Thus $f\in \ell_1(\mathbb{Z})$. 
			
		\end{proof}
		
		\begin{lem}\label{lem:inner_product}
			Suppose $f\in \Mld{0}$ with $f(k)=\int x^{|k|}F(dx)$ and $g\in\ell_2(\mathbb{Z})$. Then
			$$\braket{f,g}=\int \braket{x_{\alpha},g}F(d\alpha)=\int_{[-1,1]}\braket{x_{\alpha},g}F(d\alpha).$$	
		\end{lem}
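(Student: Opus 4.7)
My plan is to establish the identity by a Fubini--Tonelli interchange of sum and integral, once an appropriate absolute-integrability bound is verified. First I would observe that by Lemma \ref{lem:meas_boundary}, $F(\{-1,1\})=0$, so $F$ is effectively supported on $(-1,1)$; consequently, for $F$-almost every $\alpha$ we have $x_\alpha\in\ell_2(\mathbb{Z})$ and $\braket{x_\alpha,g}=\sum_k\alpha^{|k|}g(k)$ is a well-defined, absolutely convergent inner product (via Cauchy--Schwarz in $\ell_2(\mathbb{Z})$).

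Next, to apply Fubini to the double sum/integral
\[
\sum_{k\in\mathbb{Z}}\int g(k)\alpha^{|k|}F(d\alpha),
\]
I would verify absolute integrability by showing, via Tonelli on non-negatives, that
\[
\sum_{k\in\mathbb{Z}}\int|g(k)||\alpha|^{|k|}F(d\alpha)=\sum_{k\in\mathbb{Z}}|g(k)|\,h(|k|)<\infty,
\]
where $h(k):=\int|\alpha|^k F(d\alpha)\ge 0$. By Cauchy--Schwarz on sequences, this is bounded by $\|g\|_2\,\|\tilde h\|_{\ell_2(\mathbb{Z})}$ (with $\tilde h(k)=h(|k|)$), so it suffices to show $h\in\ell_2(\mathbb{N})$.

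The main technical step is controlling $\|h\|_{\ell_2(\mathbb{N})}$. The even-indexed entries are easy: $h(2k)=\int\alpha^{2k}F(d\alpha)=f(2k)$, so $\sum_k h(2k)^2\le\|f\|_2^2<\infty$. For the odd-indexed entries, I would apply Cauchy--Schwarz in $L^2(F)$ to the factorization $|\alpha|^{2k+1}=|\alpha|^k\cdot|\alpha|^{k+1}$ to get $h(2k+1)^2\le h(2k)\,h(2k+2)=f(2k)\,f(2k+2)$; summing and applying Cauchy--Schwarz on sequences (or the AM--GM inequality) yields $\sum_k h(2k+1)^2\le\|f\|_2^2<\infty$. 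This is the step I expect to be the main obstacle, since the hypothesis $f\in\ell_2(\mathbb{Z})$ only directly constrains the even moments of $|\alpha|$, and the odd moments must be dominated by adjacent even moments using log-convexity.

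Once $h\in\ell_2(\mathbb{N})$ is established, Fubini gives
\[
\int\sum_{k\in\mathbb{Z}}g(k)\alpha^{|k|}F(d\alpha)=\sum_{k\in\mathbb{Z}}g(k)\int\alpha^{|k|}F(d\alpha)=\sum_{k\in\mathbb{Z}}g(k)f(k)=\braket{f,g},
\]
while the left side equals $\int\braket{x_\alpha,g}F(d\alpha)$. Since $\mathrm{Supp}(F)\subseteq[-1,1]$, this integral coincides with $\int_{[-1,1]}\braket{x_\alpha,g}F(d\alpha)$, completing the proof.
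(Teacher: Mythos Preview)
Your proposal is correct and follows essentially the same route as the paper: write $\braket{f,g}$ as a double sum/integral, reduce the Fubini hypothesis via Cauchy--Schwarz to showing the sequence $h(k)=\int|\alpha|^k F(d\alpha)$ lies in $\ell_2$, and control $h$ using $h(2k)=f(2k)$. The only difference is in bounding the odd-indexed terms: the paper simply uses monotonicity, observing that $|\alpha|^{2k+1}\le|\alpha|^{2k}$ on $[-1,1]$ gives $h(2k+1)\le h(2k)=f(2k)$ directly, whereas you invoke the log-convexity bound $h(2k+1)^2\le h(2k)h(2k+2)$ via Cauchy--Schwarz in $L^2(F)$; both work, but the monotonicity argument is shorter.
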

		\begin{proof} We have \begin{align*}
				\braket{f,g}&=\sum_{k\in\mathbb{Z}}f(k)g(k)\\
				&=\sum_{k\in\mathbb{Z}}g(k)\int_{[-1,1]}\alpha^{|k|}F(d\alpha)\\
				&=\sum_{k\in\mathbb{Z}}\int_{[-1,1]}g(k)\alpha^{|k|}F(d\alpha).
			\end{align*}

			We will show that $\sum_{k\in\mathbb{Z}}\int_{[-1,1]}|g(k)\alpha^{|k|}|F(d\alpha) <\infty$. Then, the desired result follows from Fubini's theorem, since
			\begin{align*}
				\sum_{k\in\mathbb{Z}}\int_{[-1,1]}g(k)\alpha^{|k|}F(d\alpha) =\int_{[-1,1]} \sum_{k\in\mathbb{Z}}g(k)\alpha^{|k|}F(d\alpha) =\int_{[-1,1]}\braket{x_{\alpha},g}F(d\alpha).
			\end{align*}

			We have
			\begin{align*}
				\sum_{k\in\mathbb{Z}}\int_{[-1,1]}|g(k)\alpha^{|k|}|F(d\alpha)=\sum_{k\in\mathbb{Z}}|g(k)|\int_{[-1,1]}|\alpha|^{|k|}F(d\alpha) = \sum_{k\in\mathbb{Z}}|g(k)|\tilde{f}(k) \le \|g\|\|\tilde{f}\|,
			\end{align*} 
			where we define $\tilde{f}(k)=\int_{[-1,1]}|\alpha|^{|k|}F(d\alpha)$ and we use Cauchy-Schwarz for the last inequality. First, since $g\in\sqs$, $\|g\| <\infty.$ For $\|\tilde{f}\|$, we have $\tilde{f}(k)=\tilde{f}(-k)$, and for $0\leq k_1<k_2$, we have $\tilde{f}(k_1)\geq \tilde{f}(k_2)$. Additionally, for $n=2k$ we have $\tilde{f}(n)=f(n)$. Thus \begin{align*}
				\|\tilde{f}\|_2^2 = \sum_{k\in\mathbb{Z}}\tilde{f}(k)^2&=\tilde{f}(0)^2+2\tilde{f}(1)^2+2\sum_{k=2}^{\infty}\tilde{f}(k)^2\leq 3f(0)^2+4\sum_{k=1}^{\infty}f(2k)^2<\infty
			\end{align*} since $f\in \ell_2(\mathbb{Z})$. 
		\end{proof}
		
		\begin{cor}\label{cor:double_inner_product}
			For $f,g\in\Mld{0}$ with $f(k)=\int_{[-1,1]}x^{|k|}F(dx)$ and $g(k)=\int_{[-1,1]}x^{|k|}dG(x)$, \begin{align*}
				\braket{f,g}&=\int_{[-1,1]}\int_{ [-1,1]}\braket{x_{\alpha_1},x_{\alpha_2}}F(d\alpha_1)G(d\alpha_2)\\
				&=\int_{[-1,1]}\int_{[-1,1]}\frac{1+\alpha_1\alpha_2}{1-\alpha_1\alpha_2}F(d\alpha_1)G(d\alpha_2).
			\end{align*} Additionally, the order of integration in both expressions can be interchanged.
		\end{cor}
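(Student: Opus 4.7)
My plan is to apply Lemma \ref{lem:inner_product} twice in sequence and then justify the exchange of the order of integration via Tonelli's theorem. First, because $f \in \Mld{0}$ has representing measure $F$ and $g \in \Mld{0} \subseteq \sqs$, Lemma \ref{lem:inner_product} gives
\begin{align*}
\braket{f,g} = \int_{[-1,1]} \braket{x_{\alpha_1},g}\, F(d\alpha_1).
\end{align*}
Lemma \ref{lem:meas_boundary} (which applies because $f \in \sqs$) shows $F(\{-1,1\})=0$, so the integration can be restricted to $\alpha_1 \in (-1,1)$, where $x_{\alpha_1} \in \sqs$.

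Next, for each fixed $\alpha_1 \in (-1,1)$, I apply Lemma \ref{lem:inner_product} again, this time with $g \in \Mld{0}$ playing the role of the moment sequence (with representing measure $G$) and $x_{\alpha_1} \in \sqs$ playing the role of the $\ell_2$ sequence. Using symmetry of the inner product, this yields
\begin{align*}
\braket{x_{\alpha_1},g} = \braket{g,x_{\alpha_1}} = \int_{[-1,1]} \braket{x_{\alpha_2}, x_{\alpha_1}}\, G(d\alpha_2).
\end{align*}
A geometric series computation shows that $\braket{x_{\alpha_1},x_{\alpha_2}} = \sum_{k\in\Z}(\alpha_1\alpha_2)^{|k|} = (1+\alpha_1\alpha_2)/(1-\alpha_1\alpha_2)$ whenever $|\alpha_1\alpha_2|<1$, and the set where $|\alpha_1\alpha_2|=1$ is $F\otimes G$-null by Lemma \ref{lem:meas_boundary} applied to both $F$ and $G$. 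Substituting produces
\begin{align*}
\braket{f,g} = \int_{(-1,1)}\int_{(-1,1)} \frac{1+\alpha_1\alpha_2}{1-\alpha_1\alpha_2}\, G(d\alpha_2)\, F(d\alpha_1),
\end{align*}
which is one of the two iterated integrals claimed.

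The remaining task, and the only mild subtlety, is the interchange of integration. The integrand $\frac{1+\alpha_1\alpha_2}{1-\alpha_1\alpha_2} = \braket{x_{\alpha_1},x_{\alpha_2}}$ is nonnegative on $(-1,1)\times(-1,1)$ (directly from its series representation, or by observing that $\sum_k(\alpha_1\alpha_2)^{|k|}$ is a sum of the geometric series of $|\alpha_1\alpha_2|<1$ terms handled carefully around sign). Tonelli's theorem then applies to a jointly measurable, nonnegative integrand, so the order of integration can be interchanged, and the common value equals the finite quantity $\braket{f,g}$ (finiteness follows either from the above identity together with Cauchy-Schwarz $|\braket{f,g}| \leq \|f\|\|g\|<\infty$, or from the iterated form already derived). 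This establishes both the equality with the $\braket{x_{\alpha_1},x_{\alpha_2}}$ form and the final closed-form expression, in either order of integration. I expect no substantive obstacle beyond careful bookkeeping around the boundary $\{|\alpha_1\alpha_2|=1\}$ and invoking Lemmas \ref{lem:meas_boundary} and \ref{lem:inner_product} at the right point.
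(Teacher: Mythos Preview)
Your proposal is correct and follows essentially the same approach as the paper: apply Lemma~\ref{lem:inner_product} once to unfold $f$, use Lemma~\ref{lem:meas_boundary} to restrict to $(-1,1)$, apply Lemma~\ref{lem:inner_product} again to unfold $g$, and then invoke nonnegativity of $\braket{x_{\alpha_1},x_{\alpha_2}}=(1+\alpha_1\alpha_2)/(1-\alpha_1\alpha_2)$ on $(-1,1)^2$ to justify interchanging the order of integration via Tonelli. The paper's proof is almost identical in structure and detail.
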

		
		\begin{proof} 
			Note than since both $f,g \in \Mld{0}$, by Lemma \ref{lem:meas_boundary}, $F(\{-1,1\}), G(\{-1,1\})=0$. We have \begin{align*}
				\braket{f,g}&=\int_{[-1,1]}\braket{x_{\alpha_1},g}F(d\alpha_1)\\
				&=\int_{(-1,1)}\braket{x_{\alpha_1},g}F(d\alpha_1)\\
				&=\int_{(-1,1)}\int_{[-1,1]}\braket{x_{\alpha_1},x_{\alpha_2}}G(d\alpha_2)F(d\alpha_1)\\
				&=\int_{[-1,1]}\int_{[-1,1]}\braket{x_{\alpha_1},x_{\alpha_2}}G(d\alpha_2)F(d\alpha_1)\\
				&=\int_{(-1,1)}\int_{(-1,1)}\braket{x_{\alpha_1},x_{\alpha_2}}G(d\alpha_2)F(d\alpha_1)\\
				&=\int_{(-1,1)}\int_{(-1,1)}\frac{1+\alpha_1\alpha_2}{1-\alpha_1\alpha_2}G(d\alpha_2)F(d\alpha_1)\\
				&=\int_{[-1,1]}\int_{[-1,1]}\frac{1+\alpha_1\alpha_2}{1-\alpha_1\alpha_2}G(d\alpha_2)F(d\alpha_1)<\infty.
			\end{align*} Since $\braket{x_{\alpha_1},x_{\alpha_2}}=\frac{1+\alpha_1\alpha_2}{1-\alpha_1\alpha_2}\geq 0$ for all $\alpha_1,\alpha_2\in [-1,1]$, we can interchange the order of integration.
		\end{proof}
		A by-product of the Corollary is $\braket{f,g}\geq 0$ for all $f,g\in\Mld{0}$.
		
		\begin{lem} \label{lem:spectralRepAvar}
			Suppose $f \in \mathscr{M}_{\infty}(0)\cap \ell_1(\mathbb{Z})$ with $f(k) = \int  \alpha^{|k|} F(d\alpha)$. Define $$\sigma^2(f)=\sum_{k=-\infty}^{\infty}f(k).$$ Then
			\begin{align*}
				\sigma^2(f) = \int \frac{1+\alpha}{1-\alpha} F(d\alpha).
			\end{align*}
		\end{lem}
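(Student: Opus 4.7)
The plan is to write $\sum_{k}f(k)$ by expressing each $f(k)$ as a moment integral, exchanging the order of summation and integration via Fubini, and recognizing the inner series as a geometric series summing to $\frac{1+\alpha}{1-\alpha}$ on $(-1,1)$. The only substantive step is justifying the Fubini exchange, which reduces to an absolute integrability check.

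First, since $f\in\ell_1(\Z)$ forces $f(k)\to 0$, the sequence $f$ is bounded and thus $f\in\ell_2(\Z)$. So $f\in\Mld{0}$, and Lemma~\ref{lem:meas_boundary} gives $F(\{-1,1\})=0$, so $F$ is effectively supported on $(-1,1)$. For every such $\alpha$, the geometric series identity gives
\begin{align*}
\sum_{k\in\Z}\alpha^{|k|} = 1+2\sum_{k\ge 1}\alpha^{k} = \frac{1+\alpha}{1-\alpha},
\end{align*}
with the dominator $\sum_{k\in\Z}|\alpha|^{|k|}=\frac{1+|\alpha|}{1-|\alpha|}$.

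Next I would verify
\begin{align*}
\sum_{k\in\Z}\int |\alpha|^{|k|}\,F(d\alpha) < \infty,
\end{align*}
which permits Fubini. Split the sum according to parity of $|k|$. For $|k|=2m$, the identity $|\alpha|^{2m}=\alpha^{2m}$ gives $\int |\alpha|^{2m}F(d\alpha)=f(2m)\ge 0$, and $\sum_{m\in\N}f(2m)\le \|f\|_{1}<\infty$. For $|k|=2m+1$, the pointwise bound $|\alpha|^{2m+1}\le|\alpha|^{2m}$ on $[-1,1]$ gives $\int|\alpha|^{2m+1}F(d\alpha)\le f(2m)$, so the odd contribution is likewise controlled by $\|f\|_{1}$. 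Once absolute integrability is established, Fubini's theorem yields
\begin{align*}
\sum_{k\in\Z}f(k) = \sum_{k\in\Z}\int \alpha^{|k|}F(d\alpha) = \int \sum_{k\in\Z}\alpha^{|k|}\,F(d\alpha) = \int \frac{1+\alpha}{1-\alpha}\,F(d\alpha),
\end{align*}
as desired.

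The main obstacle is precisely the integrability check: although $f\in\ell_1(\Z)$ controls $\sum_k|f(k)|$, the dominating function $\frac{1+|\alpha|}{1-|\alpha|}$ is not obviously $F$-integrable from this alone, since $F$ may concentrate near $\pm 1$. The key observation that unlocks the argument is the even-index identity $\int|\alpha|^{2m}F(d\alpha)=f(2m)$, which ties integrability of the geometric dominator directly back to the $\ell_1$ hypothesis on $f$; the odd terms are then tamed by the trivial bound $|\alpha|\le 1$.
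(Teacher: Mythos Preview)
Your proof is correct and follows essentially the same route as the paper: both verify the Fubini hypothesis by observing that $\int|\alpha|^{2m}F(d\alpha)=f(2m)$ at even lags and then dominate the odd lags via $|\alpha|^{2m+1}\le|\alpha|^{2m}$, reducing absolute integrability to $f\in\ell_1(\Z)$. Your explicit invocation of Lemma~\ref{lem:meas_boundary} to rule out mass at $\pm1$ is a nice clarification that the paper leaves implicit when it evaluates the inner series as $\frac{1+\alpha}{1-\alpha}$.
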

		\begin{proof}
			
			We have \begin{align*}
				&\sum_{k=-\infty}^{\infty}\int |\alpha|^{|k|}F(d\alpha)
				=f(0)+2\sum_{k=1}^{\infty}\int |\alpha|^{k}F(d\alpha)
				\leq 3f(0)+4\sum_{j=1}^{\infty}f(2j)<\infty
			\end{align*} where the first inequality follows from \begin{align*}
				\int |\alpha|F(d\alpha)\leq \int 1F(d\alpha)
			\end{align*} and \begin{align*}
				\int |\alpha|^{2k+1}F(d\alpha)\leq \int \alpha^{2k}F(d\alpha), \quad k\geq 1,
			\end{align*} and the second inequality follows from $f\in \ell_1(\mathbb{Z})$. Thus, from Fubini's theorem, we have \begin{align*}
				\sigma^2(f)&=\sum_{k=-\infty}^{\infty}\int \alpha^{|k|}F(d\alpha)\\
				&=\int\sum_{k=-\infty}^{\infty} \alpha^{|k|}F(d\alpha)\\
				&=\int\frac{1+\alpha}{1-\alpha}F(d\alpha).
			\end{align*}
			
		\end{proof}
		\begin{lem}\label{lem:closed}
			Let $I$ be a closed interval in $[-1,1]$. The space $\Mld{I}$ is closed.
		\end{lem}
		\begin{proof}
			Let $a = \inf I$ and $b = \sup I$.  Consider a sequence of vectors $\{m_n\} \subseteq \Mld{[a,b]}$ where $\|m_n-f\|\to0$ for some $f=\{f(k)\}_{k=-\infty}^{\infty}$. We show that $f \in \Mld{[a,b]}$.

			First, $f\in \sqs$ since 
			\[\|f\|\leq \|m_n\|+\|f-m_n\|<\infty\] for large enough $n$. 
			
			Next, we show that $f \in \mathscr{M}_\infty([a,b])$. Note for any $j \in \Z$, $|m_n(j)-f(j)|\leq \|m_n-f\|\to 0$. 
			We consider two cases where Case I: $b-a=0$ and Case II: $b-a>0$,
			
			{\bfseries Case I:} we have $f(0) = \lim_n m_n(0)$. Additionally, $m_n(k) = m_n(0)a^{|k|}$ for $k\ne 0$, and thus, $f(k) = \lim_{n}m_n(k)=f(0)a^{|k|}$ for $k\ne 0$.  Then for the measure $\mu$ with point mass at $a$ with mass $f(0)$, i.e., $\mu = f(0) \delta_a$, we have $f(k) = \int x^{|k|} \mu(dx)$ for $k\in \Z$, and $\mu$ is supported on $\{a\}$. Thus $f\in\mathscr{M}_{\infty}([a,b])\cap\ell_2(\mathbb{Z})$.
			
			{\bfseries Case II:} we define $\tilde{f}$ on $\N$ as $\tilde{f}(k)=T(f; a,b)(k)$. We have for any $r,k\in \N$,
			\begin{align*}
				(-1)^{r}\Delta^{r}\tilde{f}(k)
				&=(-1)^r \Delta^r \underset{n\to\infty}{\lim}T(m_n;a,b)(k)\\
				&=\underset{n\to\infty}{\lim} (-1)^r \Delta^r T(m_n;a,b)(k)\\
				&\ge 0.
			\end{align*} where the last inequality holds since $m_n$ are $[a,b]$-moment sequences. Thus $\tilde{f}$ is completely monotone, so by Corollary \ref{cor:moment_thm_Z}, $f$ is an $[a,b]$-moment sequence. Since in addition $f\in \sqs$, we have $f\in \Mld{[a,b]}$. Thus $\Mld{[a,b]}$ is closed.
		\end{proof}
		
		\begin{lem}\label{lem:vague_convergence}
			Let $I$ be a closed interval in $[-1,1]$. Consider a sequence of moment sequences $\{f_n\}_{n\in\N}$ and $f$ such that $\{f_n\} \subseteq \Mld{I}$ and $\|f_n - f\| \to 0$. Then $f\in \Mld{I}$. Let $\mu_n$ and $\mu$ be the representing measures for $f_n$ and $f$ respectively. Then, we have $\mu_n \to \mu$ vaguely. 
		\end{lem}
		\begin{proof}
			First of all, $f \in \Mld{I}$ follows from Lemma \ref{lem:closed}. Let $\epsilon>0$ and $h\in C_0(\mathbb{R})$ given. We want to show that $|\int h(\alpha)\mu_n(d\alpha) - \int h(\alpha)\mu(d\alpha)|\le \epsilon$ for a sufficiently large $n$. Also,  since $\|f_n-f\|\to 0$, we have $f_n(0) \to f(0)$. In other words, $\mu_n(I) \to \mu(I)$. 
			
			Now we approximate $h$ on $I$. Since $h$ is continuous, there exists a sequence of polynomials $p_N(\alpha)=\sum_{k=0}^{N}c_k\alpha^k$ which uniformly approximates $h(\alpha)$ on $I$, by the Weierstrass approximation theorem.  Let $B = \mu(I) + \sup_n \mu_n(I)$.  Suppose $B = 0$. That is, $f(0)=0$ and $f_n(0)=0$ for all $n$. Therefore, both $\mu_n$ and $\mu$ are null measures, and the conclusion trivially holds. Now suppose $B > 0$. We choose  $N<\infty$ so that 
			\begin{align}
				\sup_{\alpha \in I} |h(\alpha) - p_N(\alpha)| \le \frac{\epsilon}{2B}.
			\end{align}
			We have,
			\begin{align*}
				&\lvert\int h(\alpha) \mu_n(d\alpha) -\int h(\alpha) \mu(d\alpha)\rvert\\
				&=\lvert \int_{I} h(\alpha)  \mu_n(d\alpha)- \int_{I} h(\alpha)  \mu(d\alpha)\rvert\\
				&= \left\lvert \int_{I} \{h(\alpha) -p_N(\alpha)\} \mu_n(d\alpha)+  \int_{I} p_N(\alpha) \mu_n(d\alpha)  \right.\nonumber\\
				&\qquad \qquad \left. -  \int_{I} \{h(\alpha)-p_N(\alpha)\} \mu(d\alpha) -  \int_{I} p_N(\alpha)  \mu(d\alpha) \right\rvert .\nonumber\\
				&\leq\underbrace{ \int_{I} \lvert h(\alpha) -p_N(\alpha) \rvert  \mu_n(d\alpha) +\int_{I} \lvert h(\alpha) -p_N(\alpha) \rvert \mu(d\alpha)}_{Term 1} \nonumber\\
				&\qquad \qquad + \underbrace{\left\lvert \int_{I} p_N(\alpha) \mu_n(d\alpha) - \int_{I} p_N(\alpha) \mu(d\alpha) \right\rvert}_{Term 2}.\nonumber
			\end{align*}
			By the choice of $N$, 
			\begin{align}
				\textrm{Term 1} \le \epsilon \frac{\mu_n(I)}{2B} + \epsilon\frac{\mu(I)}{2B} \le \frac{\epsilon}{2},
			\end{align}
			since $\mu(I)+\mu_n(I)\le B$ for any $n$.
			
			For term II, define $v_N=\{v_N(k)\}_{k\in\mathbb{Z}}$ such that
			\begin{align*}
				v_N(k) = \begin{cases}
					c_k & 0\le k \le N, \\ 0 & otherwise.
				\end{cases}
			\end{align*}
			for $\{c_k\}_{k=0}^N$ from the coefficients of the approximating polynomial $p_N(\alpha)= \sum_{k=0}^N c_k\alpha^k$.
			Note that $\|v_N\|^2 = \sum_{k=0}^N c_k^2 <\infty$ since $N<\infty$. In particular, $v_N \in \sqs$, and thus by Lemma \ref{lem:inner_product},
			\begin{align*}
				\int_I p_N(\alpha) \mu_n(d\alpha) 
				&=\int_I \sum_{k=0}^N c_k \alpha^{k} \mu_n(d\alpha) \\
				&=\int_I \sum_{k\in\Z} v_N(k) \alpha^{|k|} \mu_n(d\alpha) \\
				&=\int_I \braket{v_N, x_\alpha} \mu_n(d\alpha)  = \braket{v_N,f_n}.
			\end{align*}
			Similarly, we can show $\int_I p_N(\alpha) \mu(d\alpha)  = \braket{v_N,f}.$
			Therefore,
			\begin{align*}
				\mbox{Term 2} = |\braket{v_N,f_n} - \braket{v_N, f}|\le \|v_N\| \|f_n-f\|.
			\end{align*}
			We can find $L<\infty$ such that for $n\ge L$, $\|v_N\|\|f_n-f\|\le \epsilon/2$.
			
			Combining these results for term I and term II, we have for $n \ge L$,
			\begin{align*}
				\textrm{Term 1+ Term 2} \le \epsilon
			\end{align*} But $\epsilon>0$ was arbitrary. This proves the result.
		\end{proof}

		\section{Proofs for results in Section \ref{sec:mcmcIntro}}
		\subsection{Proof of Proposition \ref{prop:gamma_moment_seq}}\label{pf:prop1}
		\begin{proof}
			The representation \eqref{eq:gamma_k} is a consequence of the spectral theorem [e.g., \citealp{Rudin1973-ry}] since $Q_0$ is a self-adjoint bounded linear operator on $L^2(\pi)$. The spectrum $\sigma(Q_0)$ lies on the real axis due to \ref{cond:piReversible}. Since the spectral radius $\rho(Q_0) = \sup\{|\rho|; \rho \in \sigma(Q_0)\}$ is equal to $\|Q_0\|_{L^2(\pi)}$ since $Q_0$ is self-adjoint, and $\|Q_0\|_{L^2(\pi)} \le 1$, we have $\sigma(Q_0) \subseteq [-1,1]$.
			
			For $\Gamma$, we have
			\begin{align*}
				\Gamma(k) = \braket{Q_0^{2k} g, g}_\pi +  \braket{Q_0^{2k+1} g, g}_\pi =\braket{Q_0^{2k} (I+Q_0)g, g}_\pi 
			\end{align*} for $k\in\mathbb{N}$. Let $(I+Q_0)^{1/2}$ denote the square root of $I+Q_0$, which is well defined because $I+Q_0$ is positive and self-adjoint, where the positivity of $I+Q_0$ is due to the fact that $\|Q_0\|_{L^2(\pi)} \le 1$.
			Also, we have that $(I+Q_0)^{1/2}$ is positive, self-adjoint, and commutes with $Q_0$ [e.g., Theorem in \citealp{Riesz2012-tp}, p265]. Therefore,
			\begin{align*}
				\Gamma(k) &= \braket{ (I+Q_0)^{1/2} Q_0^{2k} (I+Q_0)^{1/2}g, g}_\pi \\
				&= \braket{  Q_0^{2k} (I+Q_0)^{1/2}g, (I+Q_0)^{1/2} g}_\pi\\
				& = \braket{  (Q_0^2)^{k} h, h}_\pi
			\end{align*}
			where $h = (I+Q_0)^{1/2}g$. Then by the spectral theorem, there exists a regular measure $H$ supported on $\sigma(Q_0^2)$ such that $\Gamma(k)=\int x^{k}H(dx)$, $k\in\mathbb{N}$. Since $Q_0^2$ is positive and $\|Q_0^2\|_{L^2(\pi)} =\|Q_0\|^2_{L^2(\pi)}$, we have $\sigma(Q_0^2) \subseteq [0, 1]$.
			
			Finally, with the additional assumption of \ref{cond:harris_ergodicity} and \ref{cond:geometric_ergodicity}, the spectral gap $1-\rho(Q_0) >0$ \citep{Roberts1997-qk, kontoyiannis2012geometric}. We can find $\delta_0>0$ so that $\rho(Q_0) = \|Q_0\|_{L^2(\pi)} = 1-\delta_0$. Therefore $\sigma(Q_0) \subseteq [-1+\delta_0,1-\delta_0] $. Since $\|Q_0^2\|_{L^2(\pi)} =\|Q_0\|^2_{L^2(\pi)}=(1-\delta_0)^2$, we have $\sigma(Q_0^2) \subseteq [0, (1-\delta_0)^2]$.
		\end{proof}
		\section{Proofs for results in Section \ref{sec:momentSequences}}
		\subsection{Proof of Proposition \ref{prop:momentSeq}}\label{pf:momentSeq}
		\begin{proof}
			First, suppose there is a measure $\mu$ on $[a,b]$ such that $m(k)=\int x^k\mu(dx)$ for all $k\in \mathbb{N}$. Define $g_k(x)=x^k$ and $f(x)=(x-a)/(b-a)$. Also, we define $\tilde{\mu}(A)=\mu(h(A))$ where $h(x) = f^{-1}(x) = (b-a)x+a$ and $h(A):=\{h(x); x\in A\}$. We show that $\tilde{\mu}$ is a representing measure for $T(m;a,b)$ and $\tilde{\mu}$ is supported on $[0,1]$. 
			First of all, by the change of variable formula, for $k\in \N$,
			\begin{align*}
				T(m;a,b)(k) 
				& = (b-a)^{-k}\sum_{i=0}^{k}\binom{k}{i}m(i)(-a)^{k-i}\\
				& = (b-a)^{-k}\int_{[a,b]}\sum_{i=0}^{k}\binom{k}{i} x^k (-a)^{k-i}  \mu(dx) \\
				& = (b-a)^{-k}\int_{[a,b]}(x-a)^k  \mu(dx) \\
				& =\int I\{f(x)\in [0,1]\}  f(x)^k  \mu(dx) \\
				& = \int_{[0,1]}y^{k}\tilde{\mu}(dy).
			\end{align*}
			Also, $\tilde{\mu}$ is supported on $[0,1]$ since
			\begin{align*}
				\tilde{\mu}(\R\setminus [0,1]) = \mu(\{h(x); x<0 \mbox{ or } x>1\}) = \mu(\R\setminus [a,b])=0.
			\end{align*}
			Then by Theorem \ref{thm:hausdorff_moment}, $\tilde{\mu}$ is the unique representing measure for $T(m;a,b)$, and $T(m;a,b)$ is completely monotone.
			
			Now suppose $T(m;a,b)$ is completely monotone. Then there exists a measure $\tilde{\mu}$ supported on $[0,1]$ such that $T(m;a,b)(k)=\int y^{k}\tilde{\mu}(dy)$. Define the measure $\mu$ by $\mu(A)=\tilde{\mu}(f(A))$ where $f(A):=\{f(x); x\in A\}$. First, note that $\mu$ is supported on $[a,b]$. 
			From the definition of $T(m;a,b)$, we have $m(0)=T(m;a,b)(0)$ and, recursively,
			\begin{align}\label{eq:lem_moment_recursive}
				m(k)=(b-a)^kT(m;a,b)(k)-\sum_{i=0}^{k-1}\binom{k}{i}m(i)(-a)^{k-i}. 
			\end{align}
			We now show that $m(k) = \int x^k \mu(dx)$ for $k \in \N$.
			Recall $h(x) = f^{-1}(x) = (b-a)x+a$.
			From the definition of $T(m;a,b)$ and change of variable formula, we have for any $k\in\N$,
			\begin{align*}
				T(m;a,b)(k) &= \int_{[0,1]} y^{k}\tilde{\mu}(dy)\\
				&=\int I\{ f(h(y)) \in [0,1]\} f(h(y))^{k}\tilde{\mu}(dy)\\
				&=\int I\{f(x) \in [0,1]\} f(x)^{k}(\tilde{\mu}\circ f)(dx)\\
				&=\int_{[a,b]} \left(\frac{x-a}{b-a}\right)^{k}\mu(dx)\\
				&=\int  \left(\frac{x-a}{b-a}\right)^{k}\mu(dx).
			\end{align*}
			When $k=0$, $ m(0) = T(m;a,b)(0) = \int 1 \mu(dx).$
			Suppose $m(i)=\int x^{i}\mu(dx)$ for $i=0,...,k$. We show $m(k+1)=\int x^{k+1}\mu(dx)$. By \eqref{eq:lem_moment_recursive},
			\begin{align*}
				m(k+1)&=(b-a)^{k+1}T(m;a,b)(k+1)-\sum_{i=0}^{k}\binom{k+1}{i}m(i)(-a)^{k+1-i}\\
				&=\int (x-a)^{k+1}\mu(dx)-\sum_{i=0}^{k}\binom{k+1}{i}m(i)(-a)^{k+1-i}\\
				&=\int \sum_{i=0}^{k+1}\binom{k+1}{i}x^{k}(-a)^{k+1-i}\mu(dx)-\int \sum_{i=0}^{k}\binom{k+1}{i}x^{k}(-a)^{k+1-i} \mu(dx)\\
				&=\int x^{k+1}\mu(dx).
			\end{align*} Thus, by induction, $m(k)=\int x^{k}\mu(dx)$ for $k=0,1,...$, for $\mu$ defined by $\mu(A)=\tilde{\mu}(f(A))$.

			Finally, for uniqueness, let $\mu_1$ and $\mu_2$ be two representing measures for $m$. From the first part of the proof, we see that both $\tilde{\mu}_1(A):=\mu_1(h(A))$ and $\tilde{\mu}_2(A):=\mu_2(h(A))$ are representing measures supported on $[0,1]$ for $T(m;a,b)$. Then $\tilde{\mu}_1 = \tilde{\mu}_2 = \tilde{\mu}$ from Theorem \ref{thm:hausdorff_moment}. Then for any measurable set $E$,
			\begin{align*}
				\mu_1(E) = \tilde{\mu}(h^{-1}(E)) = \mu_2(E).
			\end{align*}
			Thus, the measure $\mu$ corresponding to $m$ is unique.
		\end{proof}

		\subsection{Proof of Proposition \ref{prop:existence_rhat}}\label{pf:existence_rhat}

		\begin{proof} 
			We show $\Mld{C}$ is a convex and closed subset of $\sqs$. Convexity holds since for $p,q\in\Mld{C}$ where $p(k)=\int_\ralpha x^{|k|}F_1(dx)$ and $q(k)=\int_\ralpha x^{|k|}F_2(dx)$, we have $u=\alpha p+(1-\alpha)q\in \sqs$ and $u(k)=\int_\ralpha x^{|k|}(\alpha F_1+(1-\alpha)F_2)(dx)$, i.e., $u\in\Mld{C}$. 
			
			Now, we show $\Mld{C}$ is closed. In the case $C$ is a closed interval, then from Lemma~\ref{lem:closed}, $\Mld{C}$ is closed. Otherwise for a general closed set $C$, consider a sequence of vectors $\{m_n\} \subseteq \Mld{C}$ where $\|m_n-f\|\to0$ for some $f=\{f(k)\}_{k=-\infty}^{\infty}$. Now, let $a=\inf\,C$ and $b=\sup\,C$. Then $m_n\in\Mld{[a,b]}$ so from Lemma~\ref{lem:closed}, $f\in\Mld{[a,b]}$. In particular, $f$ is an $[a,b]$-moment sequence. Let $\mu_n$ denote the representing measure for $\mu_n$ and let $\mu$ denote the representing measure for $f$. We now show $f\in\Mld{C}$. 
			
			Suppose $x\in [a,b]$ and $x\notin C$. We show $x\notin \textrm{Supp}(\mu)$. We show that there exists $\epsilon>0$ such that $\mu(N_\epsilon(x)) =0$ where $N_\epsilon(x) = \{y; |y-x|<\epsilon\}$. Since $C$ is closed we can find an $\epsilon'>0$ such that $N_{\epsilon'}(x)\cap C=\emptyset$. Take $\psi:\mathbb{R}\to[0,1]$ to be the continuous function with \begin{align*}
				\psi(y)=\begin{cases}
					0 & |y-x|>3\epsilon'/4\\
					1 & |y-x|< \epsilon'/2\\
					1-(4/\epsilon')\{y-(x+\epsilon'/2)\}& x+\epsilon'/2\leq y\leq x+3\epsilon'/4\\
					(4/\epsilon')\{y-(x-3\epsilon'/4)\}& x-3\epsilon'/4\leq y\leq x-\epsilon'/2.
				\end{cases}
			\end{align*} From Lemma~\ref{lem:vague_convergence}, $0\le \mu(N_{\epsilon'/2}(x))\le \int \psi(y)\, \mu(dy)=\underset{n\to\infty}{\lim}\int \psi(y)\,\mu_n(dy)=\underset{n\to\infty}{\lim}0=0$.  Taking $\epsilon = \epsilon'/2$, we obtain $x\notin \textrm{Supp}(\mu)$. Since $x$ was arbitrary, $\textrm{Supp}(\mu)\subseteq C$ and $f\in\Mld{C}$.
			
			Since $\Ml$ is a closed, convex subset of the Hilbert space $\sqs$, the existence and uniqueness of $\Pi(r;C)$ follows from the Hilbert space projection theorem.

		\end{proof}

		\subsection{Proof of Proposition \ref{prop:inner_product_r}}\label{pf:inner_product_r}
		
		\begin{proof}
			In the case ${\rm Supp}(\hat{\mu}_{C})\cap(-1,1)=\emptyset$, the statement in the Proposition is trivially true. Otherwise, suppose ${\rm Supp}(\hat{\mu}_C)\cap (-1,1)$ is nonempty. Let $\tilde{\alpha}\in {\rm Supp}(\hat{\mu}_C)\cap (-1,1)$ given. We show $\braket{\Pi(r;C),x_{\tilde{\alpha}}}=\braket{r,x_{\tilde{\alpha}}}$.
			
			First, we show that $\braket{x_{\alpha},\Pi(r;C)-r} = 0$ for $\hat{\mu}_{C}$-almost every $\alpha$. Let $E=C\cap(-1,1)$. From Lemma~\ref{lem:meas_boundary}, we have $\hat{\mu}_{C}(\{-1,1\})=0$, and from the definition of $\hat{\mu}_{C}$, we have ${\rm Supp}(\hat{\mu}_{C})\subset C$, so $\hat{\mu}_{C}(E^c)=\hat{\mu}_{C}(C^c \cup \{-1,1\}) \le \hat{\mu}_C(C^c) +\hat{\mu}_{C}(\{-1,1\})= 0$. From Proposition~\ref{prop:rhat_variational} and Lemma~\ref{lem:inner_product}, we have \begin{align}
				0=\braket{\Pi(r;C),\Pi(r;C)-r}&=\int \braket{x_{\alpha},\Pi(r;C)-r}\hat{\mu}_{C}(d\alpha)\nonumber\\
				&=\int_E\braket{x_{\alpha},\Pi(r;C)-r}\hat{\mu}_{C}(d\alpha)\label{eq:nonnegative}
			\end{align} From Proposition~\ref{prop:rhat_variational}, $\braket{x_{\alpha},\Pi(r;C)-r}\geq 0$ for all $\alpha\in E$. Thus, from~\eqref{eq:nonnegative} and the fact $\hat{\mu}_{C}(E^c)=0$, we have
			\begin{align}\label{eq:a.e.equality}
				\braket{x_{\alpha},\Pi(r;C)-r}=0, \,\,\mbox{  for $\hat{\mu}_{C}$-a.e. $\alpha$.}   
			\end{align}
			
			Now, we complete the proof that $\braket{\Pi(r;C),x_{\tilde{\alpha}}}=\braket{r,x_{\tilde{\alpha}}}$. Let $\epsilon>0$ given. Choose $R>0$ such that $|\braket{x_{\tilde{\alpha}}-x_{\alpha},\Pi(r;C)-r}|\leq \epsilon$ for all $\alpha\in N_{R}(\tilde{\alpha})$, where $N_{R}(\tilde{\alpha})=\{\alpha:|\alpha-\tilde{\alpha}|<R\}$. Since $$|\braket{x_{\tilde{\alpha}}-x_{\alpha},\Pi(r;C)-r}|\leq \|x_{\tilde{\alpha}}-x_{\alpha}\|\|\Pi(r;C)-r\|$$ and $$\underset{\alpha\to\tilde{\alpha}}{\lim}\;\|x_{\tilde{\alpha}}-x_{\alpha}\|=0,$$ such a choice of $R>0$ exists. Now, since $\tilde{\alpha}\in {\rm Supp}(\hat{\mu}_{C})$ and $N_R(\tilde{\alpha})$ is open, we have $\hat{\mu}_{C}(N_{R}(\tilde{\alpha}))>0$, so \begin{align*}
				\braket{x_{\tilde{\alpha}},\Pi(r;C)-r}&=\{\hat{\mu}_{C}(N_{R}(\tilde{\alpha}))\}^{-1}\int_{N_{R}(\tilde{\alpha})} \braket{x_{\tilde{\alpha}},\Pi(r;C)-r}\hat{\mu}_{C}(d\alpha)\\
				&=\{\hat{\mu}_{C}(N_{R}(\tilde{\alpha}))\}^{-1}\int_{N_{R}(\tilde{\alpha})} \braket{x_{\alpha}+x_{\tilde{\alpha}}-x_\alpha,\Pi(r;C)-r}\hat{\mu}_{C}(d\alpha)\\
				&=\{\hat{\mu}_{C}(N_{R}(\tilde{\alpha}))\}^{-1}\int_{N_{R}(\tilde{\alpha})} \braket{x_{\alpha},\Pi(r;C)-r}\hat{\mu}_{C}(d\alpha)\\
				&+\{\hat{\mu}_{C}(N_{R}(\tilde{\alpha}))\}^{-1}\int_{N_{R}(\tilde{\alpha})} \braket{x_{\tilde{\alpha}}-x_\alpha,\Pi(r;C)-r}\hat{\mu}_{C}(d\alpha)\\
				&=\{\hat{\mu}_{C}(N_{R}(\tilde{\alpha}))\}^{-1}\int_{N_{R}(\tilde{\alpha})} \braket{x_{\tilde{\alpha}}-x_\alpha,\Pi(r;C)-r}\hat{\mu}_{C}(d\alpha)
			\end{align*} where we used \eqref{eq:a.e.equality}. From the choice of $R$, we have $|\braket{x_{\tilde{\alpha}}-x_{\alpha},\Pi(r;C)-r}|\leq \epsilon$ for all $\alpha\in N_{R}(\tilde{\alpha})$, and thus \begin{align*}
				-\epsilon\leq \braket{x_{\tilde{\alpha}},\Pi(r;C)-r}\leq \epsilon.
			\end{align*} Since $\epsilon>0$ was arbitrary, we have $\braket{x_{\tilde{\alpha}},\Pi(r;C)-r}=0$ and thus $\braket{\Pi(r;C),x_{\tilde{\alpha}}}=\braket{r,x_{\tilde{\alpha}}}$. This proves the result.
		\end{proof}

		\subsection{Proof of Proposition \ref{prop:finiteSupport}}\label{pf:finiteSupport}
		\begin{proof} 
			Define $g(\alpha)=\braket{x_\alpha,r-\Pi(r;C)}=\sum_{k\in \mathbb{Z}}\alpha^{|k|}\{r(k)-\Pi(r;C)(k)\}$.
			We consider derivatives of $g$, i.e., for $n\ge 1$, 
			$$g^{(n)}(\alpha) = \frac{d^n}{d\alpha^n} g(\alpha) = \frac{d^n}{d\alpha^n} \sum_{k\in \mathbb{Z}}\alpha^{|k|}\{r(k)-\Pi(r;C)(k)\}$$
			We first show that the term-by-term differentiation of $g(\alpha)$ is justified, so that \begin{align}\label{eq:prop6:g1}
				g^{(1)}(\alpha)=\frac{d}{d\alpha}g(\alpha)=\sum_{k\in\mathbb{Z}:|k|\geq 1}|k|\alpha^{|k|-1}\{r(k)-\Pi(r;C)(k)\}
			\end{align}
			and, similarly, \begin{align*}
				g^{(n)}(\alpha)=\frac{d^n}{d\alpha^n}g(\alpha)=\sum_{k\in\mathbb{Z}:|k|\geq n}\frac{|k|!}{(|k|-n)!}\alpha^{|k|-n}\{r(k)-\Pi(r;C)(k)\}.
			\end{align*}
			
			We first consider the case when $n=1$. Let $\alpha_0\in (-1,1)$ be arbitrary. Let $\tilde{\alpha}=(|\alpha_0|+1)/2$. Then $|\alpha_0|<\tilde{\alpha}<1$.	Take $\beta=\tilde{\alpha}-|\alpha_0|=1-|\alpha_0|/2$ and define $N_{\beta}(\alpha_0)=\{\alpha':|\alpha'-\alpha_0|\leq \beta\}$. We will show that the term by term differentiation of $g(\alpha)$ at $\alpha=\alpha_0$ is justifiable, by showing that each summand in \eqref{eq:prop6:g1} for $\alpha \in N_\beta(\alpha_0)$ is dominated by some absolutely summable $\tilde{g}_1(k)$. 
			
			For $k\in \mathbb{Z}$ and $\alpha\in (-1,1)$, define     \begin{align*}
				g_1(k,\alpha)=\frac{d[\alpha^{|k|}\{r(k)-\Pi(r;C)(k)\}]}{d\alpha}=\begin{cases}
					|k|\alpha^{|k|-1}\{r(k)-\Pi(r;C)(k)\} & k\geq 1\\
					0 & k=0,
				\end{cases}
			\end{align*} and for $k\in \mathbb{Z}$, define $\tilde{g}_{1}(k)=|g_1(k,\tilde{\alpha})|$. Then $|g_1(k,\alpha)|\leq \tilde{g}_1(k)$ for all $\alpha\in N_{\beta}(\alpha_0)$.

			Define the sequence $\Delta=\{\Delta(k)\}_{k=-\infty}^{\infty}$ by
			\begin{align*}
				\Delta(k) = |r(k) - \Pi(r;C)(k)|, \, k\in\Z.
			\end{align*} Since $r(k)=0$ for $|k|>M-1$, we have $r\in \sqs$, and so $\Pi(r;C)\in \sqs$ also. Thus $\Delta\in \sqs$, since 
			\begin{align}\label{eq:prop6:deltabound}
				\|\Delta\|^2=\sum_{k\in\mathbb{Z}}\Delta(k)^2=\sum_{|k|\leq M-1}\{r(k)-\Pi(r;C)(k)\}^2+\sum_{|k|> M-1}\Pi(r;C)(k)^2<\infty.
			\end{align}
			
			For $n\geq 0$ and $\alpha\in (-1,1)$, define the sequence $\tilde{x}_{\alpha,n}=\{\tilde{x}_{\alpha,n}(k)\}_{k=-\infty}^{\infty}$ by
			\begin{align*}
				\tilde{x}_{\alpha,n}(k)=\frac{d^n\{x_{\alpha}(k)\}}{d\alpha^n}=\begin{cases}
					\frac{|k|!}{(|k|-n)!}\alpha^{|k|-n} &|k|\geq n\\
					0 & |k|<n.
				\end{cases}
			\end{align*} We have 
			\begin{align}\label{eq:prop6:xtilde_alphabound}
				\sum_{k\in\mathbb{Z}}\tilde{x}_{\alpha,n}^2 
				= 2\sum_{k\ge n} \frac{k!}{(k-n)!}|\alpha|^{k-n} = 2\sum_{l \ge 0} \frac{(l+n)!}{l!}|\alpha|^{l}<\infty,
			\end{align} for each $\alpha\in (-1,1)$, $n\geq 0$, so $\tilde{x}_{\alpha,n}\in \sqs$ for any $\alpha\in (-1,1)$, $n\geq 0$. Therefore by \eqref{eq:prop6:deltabound} and \eqref{eq:prop6:xtilde_alphabound}, we can conclude that $\tilde{g}_1$ is absolutely summable,
			\begin{align*}
				\sum_{k\in\mathbb{Z}}\tilde{g}_1(k)=\braket{\tilde{x}_{\tilde{\alpha},1},\Delta}\le \|\tilde{x}_{\tilde{\alpha},1}\|\|\Delta\|<\infty.
			\end{align*} 
			Then, by the Lebesgue differentiation theorem, we have \begin{align*}
				g^{(1)}(\alpha_0)=\sum_{k\in\mathbb{Z}}g_1(k,\alpha_0)=\sum_{k\in\mathbb{Z}:|k|\geq 1}|k|\alpha_0^{|k|-1}\{r(k)-\Pi(r;C)(k)\}
			\end{align*} 	\citep[see, e.g., Theorem 2.27 in][]{folland1999real}. Since $\alpha_0\in(-1,1)$ was arbitrary, we have $g^{(1)}(\alpha)=\sum_{k\in\mathbb{Z}:|k|\geq 1}|k|\alpha^{|k|-1}\{r(k)-\Pi(r;C)(k)\}$ for each $\alpha\in (-1,1)$.

			Proceeding similarly, we obtain \begin{align*}
				g^{(n)}(\alpha)=\frac{d^n}{d\alpha^n}g(\alpha)=\sum_{k\in\mathbb{Z}:|k|\geq n}\frac{|k|!}{(|k|-n)!}\alpha^{|k|-n}\{r(k)-\Pi(r;C)(k)\}.
			\end{align*} for $\alpha\in (-1,1)$, $n\geq 1$.
			
			We now show that $\hat{\mu}_{C}$ has finite support. Recall for $|k|>M-1$, $r(k)=0$, so for $n>(M-1)$ we have 
			\begin{align*}
				g^{(n)}(\alpha)&=-2\sum_{k=n}^{\infty}\frac{k!}{(k-n)!}\alpha^{k-n}\Pi(r;C)(k)\\
				&=-2\sum_{k=n}^{\infty}\frac{k!}{(k-n)!}\alpha^{k-n}\int_{[-1,1]}x^{k}\hat{\mu}_{C}(dx).
			\end{align*} 
			For $|\rho|<1$, we have 
			\begin{align}\label{eq:prop6:eq1}
				\sum_{k=n}^{\infty}\frac{k!}{(k-n)!}\rho^{k-n}=\frac{n!}{(1-\rho)^{(n+1)}}.
			\end{align}
			
			Recall $n>(M-1)$. From Lemma \ref{lem:meas_boundary}, we have $\hat{\mu}_{C}(\{-1,1\})=0$ since $\Pi(r;C)\in \Mld{C}$. 
			Thus, by Fubini's theorem, 
			\begin{align*}
				&\sum_{k=n}^{\infty}\int_{[-1,1]}\frac{k!}{(k-n)!}|\alpha|^{k-n}|x|^{k}\hat{\mu}_{C}(dx)\\
				&=\sum_{k=n}^{\infty}\int_{(-1,1)}\frac{k!}{(k-n)!}|\alpha|^{k-n}|x|^{k}\hat{\mu}_{C}(dx)\\
				&=\int_{(-1,1)}\sum_{k=n}^{\infty}\frac{k!}{(k-n)!}|\alpha|^{k-n}|x|^{k}\hat{\mu}_{C}(dx)\\
				&=\int_{(-1,1)}|x|^{n}\sum_{k=n}^{\infty}\frac{k!}{(k-n)!}|\alpha x|^{k-n}\hat{\mu}_{C}(dx)\\
				&=\int_{(-1,1)}\frac{n!|x|^n}{(1-|\alpha x|)^{n+1}}\hat{\mu}_{C}(dx)<\infty
			\end{align*} for $|\alpha|<1$ where the last equality is due to \eqref{eq:prop6:eq1}. 
			Thus, the integral and summation in $g^{(n)}(\alpha)$ may be interchanged, so that 
			\begin{align}\label{eq:prop6:gn_rep}
				g^{(n)}(\alpha)=-2\int_{(-1,1)}\frac{n!x^n}{(1-\alpha x)^{n+1}}\hat{\mu}_{C}(dx).
			\end{align}
			
			Now, we consider two subcases. In the first subcase, ${\rm Supp}(\hat{\mu}_{C})\subset \{0\}$ (that is, $\hat{\mu}_{C}$ is the null measure, or $\hat{\mu}_{C}$ puts point mass on 0 only.) Then ${\rm Supp}(\hat{\mu}_{C})\cap(-1,1)$ contains at most a single point. Otherwise, take $n$ to be the smallest even number such that $n>(M-1)$. Then since the support of $\hat{\mu}_{C}$ contains points away from 0, $g^{(n)}(\alpha)<0$ for all $\alpha\in (-1,1)$, since for even $n$ the integrand in $g^{(n)}(\alpha)$ in \eqref{eq:prop6:gn_rep} is positive for $x\neq 0$.
			Since $g^{(n)}(\alpha)\neq 0$ for all $\alpha\in (-1,1)$, there exist at most $n$ points $-1<\alpha_1<\alpha_2<...<\alpha_{n}<1$ such that $g(\alpha_i)=0$. Thus ${\rm Supp}(\hat{\mu}_{C})\cap(-1,1)$ contains at most $n$ points, where $n$ is the smallest even number with $n>M-1$. Since $\hat{\mu}_{C}$ has a finite number of support points in $(-1,1)$, we have $\sup \{|x|:x\in (-1,1)\cap {\rm Supp}(\hat{\mu}_{C})\}< 1-\epsilon$ for some $\epsilon>0$ and thus $\{-1,1\}\cap {\rm Supp}(\hat{\mu}_{C})=\emptyset$.
		\end{proof}
		
		\section{Proofs for results in Section \ref{sec:guarantee}}
		
		\subsection{Proof of Proposition \ref{prop:initial_estimators}}\label{pf:initial_estimators}
		First we show that \ref{cond:R1}-\ref{cond:R3} hold for the empirical autocovariance sequence $\tilde{r}_M$.
		The convergence in~\ref{cond:R1} is shown in Lemma~\ref{lem:as_convergence_fixedK} which is presented at the end of this proof. Assumption~\ref{cond:R2} holds from the definition of $\tilde{r}_{M}$ in~\eqref{eq:emp1_example}, with the choice $n(M)=M$, and the symmetry in~\ref{cond:R3} also follows from the definition of $\tilde{r}_{M}(k)$ in~\eqref{eq:emp1_example} as \eqref{eq:emp1_example} depends on $k$ only through $|k|$. 
		Finally, we show that
		\begin{align}\label{eq:emp_autocov_decr}
			|\tilde{r}_M(k)|\leq \tilde{r}_M(0)    
		\end{align}
		By symmetry, it is sufficient to prove the result for $k \in \N$. For notational simplicity, let $h(x) = g(x) - Y_M$. First, we consider $0\leq k\leq M-1$. We define $M$ length $M$ vectors $v_j\in\mathbb{R}^{M}$, $j=0,...,M-1$ such that
		\begin{align*}
			v_0 = [h(X_0), h(X_{1}), \dots, h(X_{M-1})],
		\end{align*}
		and for $k=1,\dots,M-1$,
		\begin{align*}
			v_k = [h(X_k), h(X_{k+1}), \dots, h(X_{M-1}), 0,0,\dots,0].
		\end{align*}
		Then, for $0\leq k\leq M-1$, $\|v_k\| \le \|v_0\|$ by the definition of the $v_k$'s. Also, note that $\tilde{r}_M(k) =M^{-1}\braket{v_0,v_k}$ by the definition of empirical autocovariance.
		We have \begin{align*}
			|\tilde{r}_M(k)| = |\frac{1}{M}\braket{v_0,v_k}| \le \frac{1}{M}\|v_0\|\|v_k\| \le \frac{1}{M}\|v_0\|^2 = \tilde{r}_M(0).
		\end{align*} Additionally, $\tilde{r}_M(k)=0\leq \|v_0\|^2=\tilde{r}_M(0)$ for $k$ with $|k|>(M-1)$.

		Now, we argue that a windowed autocovariance $\check{r}_M (k) = w_M(|k|) \tilde{r}_M(k)$ satisfying conditions \ref{cond:W1}-\ref{cond:W3} satisfies \ref{cond:R1}-\ref{cond:R3}. First of all, the symmetry in~\ref{cond:R3} holds since $\check{r}_M (-k) = w_M(|k|) \tilde{r}_M(-k) = w_M(|k|) \tilde{r}_M(k) = \check{r}_M(k), \forall k\in \Z$. Also, $\check{r}_M(0) = \tilde{r}_M(0) \ge |\tilde{r}_M(k)| \ge |\tilde{r}_M(k)||w_M(|k|)| = |\check{r}_M(k)|, \forall k\in \Z$ by \eqref{eq:emp_autocov_decr}, conditions \ref{cond:W1} and $\ref{cond:W2}$. Assumption~\ref{cond:R2} holds with the choice $n(M)=\min\{b_M,M\}$ by conditions \ref{cond:W3}, and assumption~\ref{cond:R1} follows from Lemma \ref{lem:as_convergence_fixedK}, \ref{cond:W3}, and Slutsky's theorem.

		\begin{lem}\label{lem:as_convergence_fixedK} Assume \ref{cond:harris_ergodicity}, \ref{cond:piReversible} and \ref{cond:integrability}.
			Let $k\in \Z$ given. Then \begin{align*}
				\lim_{M\to\infty}\tilde{r}_{M}(k) = \gamma(k), \,\, P_x\mbox{-a.s.}
			\end{align*}  for each $x\in\mathsf{X}$.
		\end{lem}
		
		\begin{proof}
			First, we show that $\bar{r}_{M}(k)\overset{a.s.}{\to}\gamma(k)$ as $M\to\infty$, where $\bar{r}_M(k) = M^{-1}\sum_{t=0}^{M-1-|k|} \bar{g}(X_t) \bar{g}(X_{t+|k|})$ where we define $\bar{g}(x) = g(x) - E_\pi[g(X_0)]$.
			Without loss of generality, assume $k>0$. We define $h_k:\Omega \to \R$ such that $h_k(\omega)=\bar{g}(X_0(\omega))\bar{g}(X_k(\omega))$ for $k\geq 0$. Note $h_k\in L_1(\Omega, \mathcal{F}, P_\pi)$ by \ref{cond:integrability}. Let $\theta: \Omega \to \Omega$ be the shift operator. We first want to show that 
			\begin{align*}
				\bar{r}_M(k) = \frac{1}{M}\sum_{t=0}^{M-1-k} \bar{g}(X_t) \bar{g}(X_{t+k}) = \frac{1}{M}\sum_{t=0}^{M-1-k}\theta^t h_k \to \gamma(k)
			\end{align*}
			$P_x$-almost surely, for any initial condition $x \in \mathsf{X}$.
			
			Using the fact that the set of $P_\pi$-invariant events is trivial due to $X$ being Harris recurrent and from Theorem 17.1.2 in \cite{meyn2009markov}, we have a set $F_h$ of full $\pi$-measure such that for any initial condition in $x \in F_h$,
			\begin{align}\label{eq:lem6_eq1}
				\lim_{M\to\infty} \frac{1}{M}\sum_{t=0}^{M-1-k}\theta^t h_k = \lim_{M\to\infty} \frac{M-k}{M}\frac{1}{M-k}\sum_{t=0}^{M-1-k}\theta^t h_k = E_\pi[h_k] = \gamma(k), \,\,\mbox{a.s. } P_x.
			\end{align}
			Now via a modification of Proposition 17.1.6 in \cite{meyn2009markov}, we show \eqref{eq:lem6_eq1} holds for all $x \in \mathsf{X}$. Define $h_\infty(x) = P_x(\lim_{M\to\infty} M^{-1}\sum_{t=0}^{M-1-k}\bar{g}(X_t) \bar{g}(X_{t+k}) = \gamma(k))$. We know that $h_\infty(x) = 1$ for $x \in F_h$. If we show $h_\infty(x) = 1$ for all $x \in \mathsf{X}$, we have the desirable result. We show that $h_\infty(x)$ is harmonic.
			\begin{align*}
				&Qh_\infty(x) \\
				&= E_x [ P_{X_1}\{\lim_{M\to\infty} M^{-1}\sum_{t=0}^{M-1-k}\bar{g}(X_t) \bar{g}(X_{t+k}) = \gamma(k)\}]\\
				&= E_x [ P_x \{\lim_{M\to\infty} M^{-1}\sum_{t=0}^{M-1-k}\bar{g}(X_{t+1}) \bar{g}(X_{t+1+k})= \gamma(k)| \mathcal{F}_1\}]\\
				&= E_x [ P_x \{\lim_{M\to\infty}\left[ \frac{M+1}{M}\frac{1}{M+1}\sum_{t=0}^{M-k}\bar{g}(X_{t}) \bar{g}(X_{t+k}) -\frac{1}{M}\bar{g}(X_{0}) \bar{g}(X_{k}) \right] = \gamma(k)\}]\\
				&= h_\infty(x).
			\end{align*}
			Therefore $h_\infty(x)=1$ for any $x \in \mathsf{X}$, and \eqref{eq:lem6_eq1} holds for any initial condition $x \in \mathsf{X}$.
			
			Finally, we show that $\tilde{r}_{M}(k)-\bar{r}_M(k)\to0$ as $M\to\infty$, $P_x$-a.s., for all $x \in \mathsf{X}$, where $\tilde{r}_M(k) = M^{-1}\sum_{t=0}^{M-1-k} (g(X_t) - Y_M)(g(X_{t+k}) - Y_M)$ for $Y_M = \sum_{t=0}^{M-1} g(X_t)$. 
			First we have $Y_M \to \mu$, $P_x$-almost surely for all $x\in \mathsf{X}$ by SLLN in Theorem 17.1.7 in \cite{meyn2009markov}. For any $k \in \N$, we have,
			\begin{align}
				\tilde{r}_M(k) - \bar{r}_M(k) &= \frac{1}{M}\sum_{t=0}^{M-1-k} \{(g(X_t) - Y_M)(g(X_{t+k}) - Y_M)-(g(X_t) - \mu)(g(X_{t+k}) - \mu)\}\nonumber\\
				& = \frac{1}{M}\sum_{t=0}^{M-1-k} \{(Y_M-\mu)(g(X_{t+k}) +g(X_t)) + \mu^2- Y_M^2\}\nonumber\\
				& = (Y_M-\mu) \frac{1}{M}\sum_{t=0}^{M-1-k}  \{g(X_{t+k}) +g(X_t)\} + \frac{M-k}{M} (\mu^2 - Y_M^2)\label{eq:emp_mean_diff}.
			\end{align}
			
			Since both $\frac{1}{M-k} \sum_{t=0}^{M-1-k} g(X_{t+k})$  and $\frac{1}{M-k} \sum_{t=0}^{M-1-k} g(X_t)$ converge to $\mu$ by SLLN in Theorem 17.1.7 in \cite{meyn2009markov}, $\frac{1}{M}\sum_{t=0}^{M-1-k}  \{g(X_{t+k}) +g(X_t)\} \to 2\mu$, $P_x$-almost surely for any $x\in\mathsf{X}$.
			Therefore, by continuous mapping theorem, \eqref{eq:emp_mean_diff} $\to 0$ as $M\to\infty$, $P_x$-almost surely for any $x\in\mathsf{X}$, which proves the result.
		\end{proof}

		\subsection{Proof of Proposition \ref{prop:xalpha_conv} }\label{pf:xalpha_conv}
		\begin{proof}
			First, we show that 
			\begin{align}\label{eq:xalpha_conv}
				|\braket{x_{\alpha},\rinit-\gamma}| \to 0
			\end{align} $P_x$-a.s. for any $x\in\mathsf{X}$, for any $\alpha \in (-1,1)$. For the ease of notation, if a $P_x$-almost sure convergence holds for any $x\in \mathsf{X}$, we will just say the convergence holds almost surely.
			Let $\epsilon>0$ given. Note for any $B>0$, \begin{align*}
				\braket{x_{\alpha},\rinit-\gamma}&=\sum_{k=-(B-1)}^{(B-1)}\alpha^{|k|}\{\rinit(k)-\gamma(k)\}
				+2\sum_{k=B}^{\infty}\alpha^{|k|}\{\rinit(k)-\gamma(k)\}.
			\end{align*} 
			since $\rinit(k) = \rinit(-k)$ by \ref{cond:R3}.
			Choose $B$ such that \begin{align*}
				\sum_{k=B}^{\infty}|\alpha|^{k}\gamma(0)=\frac{|\alpha|^B}{1-|\alpha|}\gamma(0)\leq \epsilon/4.
			\end{align*} Then\begin{align*}
				&\underset{M\to\infty}{\lim\sup}\;|\sum_{k=B}^{\infty}\alpha^{k}\{\rinit(k)-\gamma(k)\}|\\
				&\leq \underset{M\to\infty}{\lim\sup}\;\sum_{k=B}^{\infty}|\alpha|^{k}\{|\rinit(k)|+|\gamma(k)|\}\\
				&\leq \underset{M\to\infty}{\lim\sup}\;\sum_{k=B}^{\infty}|\alpha|^{k}\{\rinit(0)+\gamma(0)\}\\
				&=\epsilon/2
			\end{align*} 
			where the second inequality uses $|\rinit(k)|\le \rinit(0)$ in \ref{cond:R3} and the equality uses $\rinit(0)\overset{a.s.}{\to}\gamma(0)$ by \ref{cond:R1}.
			
			Furthermore, we have \begin{align*}
				\sum_{k=-(B-1)}^{(B-1)}\alpha^{|k|}\{\rinit(k)-\gamma(k)\}\overset{a.s.}{\to}0
			\end{align*} since $\rinit(k)\overset{a.s.}{\to}\gamma(k)$ for each $k\in\mathbb{Z}$. Thus \begin{align*}
				\underset{M\to\infty}{\lim \sup}\;|\braket{x_{\alpha},\rinit-\gamma}|\leq \epsilon.
			\end{align*} Since $\epsilon>0$ was arbitrary, we have $\braket{x_{\alpha},\rinit-\gamma}\overset{a.s.}{\to}0$ as $M\to\infty$. This proves the a.s. convergence result for each $\alpha \in (-1,1)$.
			
			Now, we show that the convergence is uniform over $\mathcal{K}$. First, let $\delta_0$ denote the minimum distance between $\mathcal{K}$ and $\{-1,1\}$, i.e., $\delta_0 = \inf\{\min(|1-x|,|-1-x|): x\in \mathcal{K}\}$. Since $\mathcal{K} \subset (-1,1)$, the gap $\delta_0>0$. Since for $x\in \mathcal{K}$, $x\in (-1,1)$, we have $\delta_0\leq 1$. If $\delta_0=1$, then $\mathcal{K}=\{0\}$ since $\mathcal{K}$ is nonempty by assumption, and $\underset{\alpha\in\mathcal{K}}{\sup}|\braket{x_{\alpha},\prinit-\gamma}|\overset{a.s.}{\to}0$ from the previously shown convergence for each $\alpha\in(-1,1)$. 
			
			Otherwise, suppose $\delta_0<1$. Let $\epsilon_1>0$ be given, and choose $\beta>0$ such that 
			\begin{align}\label{eq:prop7_beta}
				\beta = \epsilon_1/(4\delta_0^2\gamma(0)).
			\end{align}
			For $\alpha\in(-1,1)$, define $B_{\beta}(\alpha)= \{|x-\alpha|\le \beta\}\cap [-1+\delta_0,1-\delta_0]$. Take $N(\beta)=\lceil 2(1-\delta)/\beta\rceil$ and define $\alpha_j=(-1+\delta_0)+j\beta$, $j=0,...,N(\beta)-1$. Then $\mathcal{K} \subset \cup_{j=0}^{N(\beta)-1} B_\beta(\alpha_j)$ and 
			\begin{align}
				&\sup_{\alpha \in \mathcal{K}} |\braket{x_\alpha, \rinit-\gamma}| \nonumber\\
				&\leq \max_{j =0,\dots,N(\beta)-1}\; \sup_{\alpha \in B_\beta(\alpha_j)} |\braket{x_\alpha, \rinit-\gamma}| \nonumber\\
				&\leq \max_{j=0,\dots,N(\beta)-1}\; \sup_{\alpha \in B_\beta(\alpha_j)}  |\braket{x_\alpha-x_{\alpha_j},\rinit-\gamma}|+|\braket{x_{\alpha_j},\rinit-\gamma}| \nonumber\\
				&\leq \max_{j=0,\dots,N(\beta)-1}\; \sup_{\alpha \in B_\beta(\alpha_j)}|\braket{x_\alpha-x_{\alpha_j},\rinit-\gamma}|+\max_{j=0,\dots,N(\beta)-1}|\braket{x_{\alpha_j},\rinit-\gamma}|\label{eq:twoterms}
			\end{align}
			From convergence \eqref{eq:xalpha_conv}, we have \begin{align}
				\underset{M\to\infty}{\lim\sup}\max_{j=0,\dots,N(\beta)-1}|\braket{x_{\alpha_j},\rinit-\gamma}|=0 \label{eq:limsup1}
			\end{align} almost surely since $N(\beta)$ is finite. 
			
			To control the size of the first term by the distance between $\alpha$ and $\alpha_i$, we have the following Lemma, the proof of which is deferred to the end of this proof.
			
			\begin{lem}\label{lem:r_xalpha_diff}
				For any $r$ such that $r(0) \geq0$, $r(k) = r(-k)$, and $|r(k)|\le r(0)$ for $k\in \Z$, and $\alpha,\beta\in (-1,1)$, we have 
				\begin{align*}
					|\braket{r,x_\alpha-x_\beta}|\le  \frac{2r(0)}{(1-|\alpha|)(1-|\beta|)} |\alpha-\beta|.
				\end{align*}
			\end{lem}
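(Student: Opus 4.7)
The plan is to start from the definition $\braket{r, x_\alpha - x_\beta} = \sum_{k\in\Z} r(k)(\alpha^{|k|} - \beta^{|k|})$, and use the symmetry hypothesis $r(k) = r(-k)$ to rewrite this sum as $2\sum_{k=1}^\infty r(k)(\alpha^k - \beta^k)$, the $k=0$ term being zero. This reduces the problem to bounding a one-sided series where the factor $\alpha^{|k|} - \beta^{|k|}$ takes the clean algebraic form $\alpha^k - \beta^k$.

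Next, I would apply the standard telescoping identity $\alpha^k - \beta^k = (\alpha - \beta)\sum_{j=0}^{k-1} \alpha^j \beta^{k-1-j}$ to extract the factor $(\alpha - \beta)$. Combined with the bound $|r(k)| \le r(0)$ from the hypothesis, this gives
\begin{align*}
|\braket{r, x_\alpha - x_\beta}| \;\le\; 2 r(0)\, |\alpha - \beta| \sum_{k=1}^\infty \sum_{j=0}^{k-1} |\alpha|^j |\beta|^{k-1-j}.
\end{align*}

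The final step is to interchange the order of summation in the double series. Setting $m = k - 1 - j$ so that the constraints $0 \le j \le k-1$ and $k \ge 1$ become $j \ge 0$ and $m \ge 0$ independently, the double sum factors as a product of two geometric series:
\begin{align*}
\sum_{k=1}^\infty \sum_{j=0}^{k-1} |\alpha|^j |\beta|^{k-1-j} = \Bigl(\sum_{j=0}^\infty |\alpha|^j\Bigr)\Bigl(\sum_{m=0}^\infty |\beta|^m\Bigr) = \frac{1}{(1-|\alpha|)(1-|\beta|)},
\end{align*}
where absolute convergence (which justifies the Fubini-style rearrangement) is immediate since $|\alpha|, |\beta| < 1$. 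Substituting this back yields exactly the claimed bound. There is no real obstacle here — the lemma is essentially a calculation, with the only substantive idea being the factorization of $\alpha^k - \beta^k$ that isolates the Lipschitz factor $|\alpha - \beta|$ while leaving a nonnegative coefficient that sums geometrically.
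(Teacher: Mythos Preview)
Your proof is correct and follows essentially the same approach as the paper: reduce to the one-sided sum via symmetry, apply the telescoping factorization $\alpha^k-\beta^k=(\alpha-\beta)\sum_j \alpha^{j}\beta^{k-1-j}$, bound $|r(k)|$ by $r(0)$, and evaluate the resulting double sum by Fubini as a product of two geometric series. The only cosmetic differences are the indexing convention in the telescoping identity and the order in which you apply the bound on $|r(k)|$ versus the factorization.
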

			From Lemma \ref{lem:r_xalpha_diff}, we have, for $\alpha\in B_{\beta}(\alpha_j)$,
			\begin{align*}
				|\braket{x_\alpha-x_{\alpha_j}, \rinit-\gamma}| 
				&= |\braket{x_\alpha-x_{\alpha_j},\rinit} - \braket{x_\alpha-x_{\alpha_j},\gamma}| \\
				&\leq |\braket{x_\alpha-x_{\alpha_j},\rinit}| + |\braket{x_{\alpha}-x_{\alpha_j},\gamma}|\\
				&\le \frac{2|\alpha-{\alpha_j}|}{(1-|\alpha|)(1-|{\alpha_j}|)}(\rinit(0)+\gamma(0))\\
				& \le 2(\beta/\delta_0^2)\{\rinit(0)+\gamma(0)\}.
			\end{align*} 
			Since this bound does not depend on $j$, we have, 
			\begin{align}
				&\underset{M\to\infty}{\lim\sup}\max_{j=0,\dots,N(\beta)-1}\; \sup_{\alpha \in B_\beta(\alpha_j)}|\braket{x_\alpha-x_{\alpha_j},\rinit-\gamma}|\nonumber\\
				&\leq\underset{M\to\infty}{\lim\sup} \;2(\beta/\delta_0^2)\{\rinit(0)+\gamma(0)\}\nonumber\\
				&=4\beta\delta_0^{-2}\gamma(0)\label{eq:limsup2}
			\end{align} since $\rinit(0)\overset{a.s.}{\to}\gamma(0)$. Thus, from~\eqref{eq:twoterms},~\eqref{eq:limsup1}, and~\eqref{eq:limsup2}, we have \begin{align*}
				&\underset{M\to\infty}{\lim \sup}\;\underset{\alpha \in \mathcal{K}}{\sup} |\braket{x_\alpha, \rinit-\gamma}|\\
				&\leq \underset{M\to\infty}{\lim \sup}\max_{j=0,\dots,N(\beta)-1}\; \sup_{\alpha \in B_\beta(\alpha_j)}|\braket{x_\alpha-x_{\alpha_j},\rinit-\gamma}|\nonumber\\
				&+\underset{M\to\infty}{\lim \sup}\max_{j=0,\dots,N(\beta)-1}|\braket{x_{\alpha_j},\rinit-\gamma}|\\
				&\leq 4\beta\delta_0^{-2}\gamma(0) = \epsilon_1.
			\end{align*} 
			where the last equality is due to the choice of $\beta$ in \eqref{eq:prop7_beta}. But $\epsilon_1$ was arbitrary, so $\underset{M\to\infty}{\lim }\;\underset{\alpha \in \mathcal{K}}{\sup}|\braket{x_\alpha, \rinit-\gamma}|=0$ almost surely. This proves the result.
			
		\end{proof}
		
		Now we present the proof for Lemma \ref{lem:r_xalpha_diff}.
		\begin{proof}[Proof of Lemma \ref{lem:r_xalpha_diff}]
			By definition,
			\begin{align*}
				|\braket{\rinit,x_\alpha-x_\beta}| 
				&= |2 \sum_{k=1}^\infty r(k) \{\alpha^k - \beta^k\}|\\
				&\le 2 r(0) \sum_{k=1}^\infty |\alpha^k - \beta^k|
			\end{align*}
			where the second inequality uses the fact that $\max_{k\ge 1} |r(k)| \le r(0)$.
			Using the following equality:
			\begin{align*}
				\alpha^k-\beta^k = (\alpha-\beta) \sum_{j=1}^k \alpha^{k-j} \beta^{j-1}
			\end{align*}
			we have,
			\begin{align*}
				|\braket{r,x_\alpha-x_\beta}|
				&\le 2 r(0) \sum_{k=1}^\infty |(\alpha-\beta) \sum_{j=1}^k \alpha^{k-j} \beta^{j-1}|\\
				&\le 2 r(0) |\alpha-\beta| \sum_{k=1}^\infty  \sum_{j=1}^k |\alpha^{k-j} \beta^{j-1}|\\
				& = 2 r(0) |\alpha-\beta| \sum_{j=1}^\infty  \sum_{k=j}^\infty |\alpha|^{k-j} |\beta|^{j-1}\\
				& = 2 r(0) |\alpha-\beta| \frac{1}{(1-|\alpha|)(1-|\beta|)}
			\end{align*}
			where the last equality follows from
			\begin{align*}
				\sum_{j=1}^\infty  \sum_{k=j}^\infty |\alpha|^{k-j} |\beta|^{j-1} 
				=  \sum_{j=1}^\infty |\beta|^{j-1} \sum_{k=j}^\infty |\alpha|^{k-j}  =\frac{1}{(1-|\alpha|)(1-|\beta|)}.
			\end{align*}

		\end{proof}
		
		\subsection{Proof of Proposition \ref{prop:finite_muhat}}\label{pf:finite_muhat}
		
		\begin{proof}
			By Lemma \ref{lem:inner_product} and Proposition \ref{prop:finiteSupport}, we have,
			\begin{align}
				\braket{\prinit,\prinit}& = \int_{[-1,1]} \braket{\prinit, x_\alpha} \m_{\delta,M}(d\alpha)\nonumber\\
				&= \sum_{\alpha \in {\rm Supp}(\m_{\delta,M})} \braket{\prinit,x_\alpha} \m_{\delta,M}(\{\alpha\}) \nonumber\\
				&=  \sum_{\alpha \in {\rm Supp}(\m_{\delta,M})} \braket{\rinit,x_\alpha} \m_{\delta,M}(\{\alpha\}) \label{eq:muhatSum}
			\end{align}
			where the last equality is due to Proposition \ref{prop:inner_product_r}. On the one hand,
			\begin{align}
				\braket{\prinit,\prinit}  &= \sum_{\alpha \in {\rm Supp}(\m_{\delta,M})} \braket{\prinit,x_\alpha} \m_{\delta,M}(\{\alpha\}) \nonumber\\
				&=\sum_{\alpha,\alpha' \in {\rm Supp}(\m_{\delta,M})} \braket{x_\alpha,x_{\alpha'}} \m_{\delta,M}(\{\alpha\}) \m_{\delta,M}(\{\alpha'\})\nonumber \\
				&\geq  \inf_{\alpha,\alpha' \in [-1+\delta,1-\delta]} \braket{x_\alpha,x_{\alpha'}}  \sum_{\alpha,\alpha' \in {\rm Supp}(\m_{\delta,M})}\m_{\delta,M}(\{\alpha\}) \m_{\delta,M}(\{\alpha'\}) \label{eq:muhatLower}
			\end{align}
			since $ \braket{x_\alpha,x_{\alpha'}}  = \frac{1+\alpha \alpha'}{1-\alpha \alpha'} >0$ for any $\alpha, \alpha' \in (-1,1)$.
			On the other hand, we have from~\eqref{eq:muhatSum} that
			\begin{align}\label{eq:muhatUpper}
				\braket{\prinit,\prinit} \le \sup_{\alpha \in [-1+\delta,1-\delta]} |\braket{\rinit,x_\alpha}| \sum_{\alpha \in {\rm Supp}(\m_{\delta,M})} \m_{\delta,M}(\{\alpha\}).
			\end{align}
			Thus, from~\eqref{eq:muhatLower} and \eqref{eq:muhatUpper}, we have
			\begin{align*}
				&\inf_{\alpha,\alpha' \in [-1+\delta,1-\delta]} \braket{x_\alpha,x_{\alpha'}}  \sum_{\alpha,\alpha' \in {\rm Supp}(\m_{\delta,M})}\m_{\delta,M}(\{\alpha\}) \m_{\delta,M}(\{\alpha'\}) \\
				&\qquad \le \sup_{\alpha \in [-1+\delta,1-\delta]} |\braket{\rinit,x_\alpha}| \sum_{\alpha \in {\rm Supp}(\m_{\delta,M})} \m_{\delta,M}(\{\alpha\}).
			\end{align*}
			That is,
			\begin{align}\label{eq:prop1-eq3}
				\sum_{\alpha \in {\rm Supp}(\m_{\delta,M})} \m_{\delta,M}(\{\alpha\})  \le \frac{\sup_{\alpha \in [-1+\delta,1-\delta]} |\braket{\rinit,x_\alpha}|}{\inf_{\alpha,\alpha' \in [-1+\delta,1-\delta]} \braket{x_\alpha,x_{\alpha'}} }.
			\end{align}
			The denominator is deterministic, and we let $C_0:= \inf_{\alpha,\alpha' \in [-1+\delta,1-\delta]} \braket{x_\alpha,x_{\alpha'}} $.
			Now we show that the numerator is bounded almost surely. We have,
			\begin{align*} 
				\sup_{\alpha \in [-1+\delta,1-\delta]} |\braket{\rinit,x_\alpha}| \leq 
				\sup_{\alpha \in [-1+\delta,1-\delta]} |\braket{\rinit-\gamma,x_\alpha}| +\sup_{\alpha \in [-1+\delta,1-\delta]} |\braket{\gamma,x_\alpha}|   
			\end{align*}
			The second term $\sup_{\alpha \in [-1+\delta,1-\delta]} |\braket{\gamma,x_\alpha}| $ is deterministic and bounded by $\gamma(0)(2-\delta)/\delta$ from Holder's inequality.  For the first term, from Proposition \ref{prop:xalpha_conv}, we have \begin{align*}
				\underset{M\to\infty}{\lim\sup}\;\underset{\alpha \in [-1+\delta,1-\delta]}{\sup} |\braket{\rinit-\gamma,x_\alpha}|=0
			\end{align*} almost surely. Define $C_{\delta,\gamma}=\frac{\gamma(0)(2-\delta)}{(\delta) (C_0)}$. Then
			\begin{align*}
				&\underset{M\to\infty}{\lim \sup}\;\m_{\delta,M}([-1+\delta,1-\delta])\\
				&=\underset{M\to\infty}{\lim \sup}\;\sum_{\alpha \in {\rm Supp}(\m_{\delta,M})} \m_{\delta,M}(\{\alpha\}) \\
				&\le \frac{\gamma(0)(2-\delta)}{(\delta) (C_0)}=C_{\delta,\gamma}
			\end{align*} almost surely by \eqref{eq:prop1-eq3}.
		\end{proof}
		
		\section{Proofs for results in Section \ref{sec:emp}}\label{pf:delta_hat}
		
		\begin{prop}
			Suppose $X_{0},X_1,...,$ is a Markov chain with transition kernel $Q$ satisfying~~\ref{cond:harris_ergodicity}-\ref{cond:geometric_ergodicity}, and suppose $g:\mathsf{X}\to\mathbb{R}$ satisfies~\ref{cond:integrability}. Let $\gamma$ denote the autocovariance sequence as defined in Proposition~\ref{prop:gamma_moment_seq}, and let $F$ denote the representing measure for $\gamma$. Assume that $\gamma(0)= {\rm Var}_\pi(g(X_0))>0$. Let $\rho(k)=\gamma(k)/\gamma(0)$, $k\in\mathbb{Z}$ denote the autocorrelation sequence and $\hat{\rho}_M(k)=\tilde{r}_{M}(k)/\tilde{r}_{M}(0)$ denote the empirical autocorrelation sequence. Define $\delta_\gamma$ such that $\delta_\gamma = 1-\sup\{|x|; x \in {\rm Supp}(F)\}$ where $F$ is the representing measure for $\gamma$. Let 
			\begin{align}
				\hat{\delta}_{M} = 1-\exp\{-\log M/(2\hat{m})\},
			\end{align}
			with  $\hat{\delta}_{M}:=1$ in the case $\hat{m}=0$.
			Suppose in addition to~~\ref{cond:harris_ergodicity}-\ref{cond:geometric_ergodicity} and~\ref{cond:integrability} that
			\begin{align}
				\sup_{k=0,\dots,M-1} |\hat{\rho}_M(k) - \rho(k) | = O_{P_x}(\sqrt{\frac{\log M}{M}})
			\end{align} with respect to the Markov chain law $P_{x}$ for each $x\in \textsf{X}$. 
			Choose $\hat{m}$ such that
			\begin{align}\label{eq:def_trunc}
				\hat{m} = \min\{t\in 2\mathbb{N}; \hat{\rho}_M(t+2) \le c_M\sqrt{\frac{\log M}{M}}\}
			\end{align}
			for some $c_M \ge 0$. Then we have $\hat{\delta}_{M}$ is asymptotically not larger than $\delta_\gamma$, i.e., for any $\epsilon>0$, we have $\lim_{M\to\infty} P_x(\hat{\delta}_{M} \geq \delta_\gamma+\epsilon) =0$.
			
			Furthermore, under the assumption of $c_M \to \infty$ such that $c_M=O(\log M)$, $\hat{\delta}_{M}$ converges in $P_x$-probability to $\delta_\gamma$, i.e., for any $\epsilon>0$, we have $\lim_{M\to\infty} P_x(|\hat{\delta}_{M} -\delta_\gamma|\geq \epsilon) =0$.
		\end{prop}
		
		\begin{proof}
			Define $\tilde{F}$ such that 
			\begin{align*}
				\tilde{F}((-\infty,t]) = 
				\begin{cases}
					0 & t<0\\
					\gamma(0)^{-1} F([-t,t]) & t\ge 0\\
				\end{cases}
			\end{align*}
			Let $H_{\tilde{F}}$ be the distribution function of $\tilde{F}$. Note $\delta_{\gamma}$ exists and is finite since $\gamma(0)>0$ implies $\textrm{Supp}(F)$ is nonempty, and $\textrm{Supp}(F)\subset [-1,1]$. Also we note $\tilde{F}((-\infty,\infty))= \gamma(0)^{-1} F((-\infty,\infty)) =1$ since $\gamma(0) = \int \alpha^0 F(d\alpha) = F((-\infty,\infty))$.
			
			We first show that $1-\delta_{\gamma}$ is the smallest value of $b$ such that $[-b,b]$ has full $F$-measure, i.e., $1-\delta_{\gamma} =  \inf \{t; F([-t,t]) \ge \gamma(0)\} = \inf \{t ; H_{\tilde{F}}(t) \geq 1\}$, and for $a>\delta_\gamma$, $F([-(1-a),(1-a)]) < \gamma(0)$.
			
			In the case $\delta_{\gamma}=1$, then $\textrm{Supp}(F)=\{0\}$, and for $a>\delta_{\gamma}$, $[-(1-a),(1-a)]=\emptyset$, which is not full measure since $\gamma(0)>0$. We next consider the case $\delta_{\gamma}<1$. From the definition of $\delta_{\gamma}$, $\textrm{Supp}(F)\subset [-(1-\delta_{\gamma}),1-\delta_{\gamma}]$. Now, consider $a$ such that $\delta_{\gamma}<a\leq 1$. We show $[-(1-a),1-a]$ is not full $F$ measure. Since $\textrm{Supp}(F)$ is closed, we have $\{-(1-\delta_{\gamma}),1-\delta_{\gamma}\}\cap \textrm{Supp}(F)\neq \emptyset$. 
			Let $N_{\theta}(x)=\{y:|y-x|<\theta\}$ denote the open $\theta$-neighborhood of $x$. Define $\theta_0=(a-\delta_{\gamma})/2$. Then $\{-(1-\delta_{\gamma}),1-\delta_{\gamma}\}\cap \textrm{Supp}(F)\neq \emptyset$ implies the open set $A=N_{\theta_0}(1-\delta_{\gamma})\cup N_{\theta_0}(-(1-\delta_{\gamma}))$ has $F(A)>0$, but $A\cap [-(1-a),1-a]=\emptyset$, and so $F([-(1-a),1-a]) < F((-\infty,\infty)) = \gamma(0)$. Thus 
			\begin{align}\label{eq:1-delta_as_infimum}
				1-\delta_{\gamma} =  \inf \{t; F([-t,t]) \ge \gamma(0)\} = \inf \{t ; H_{\tilde{F}}(t) \geq 1\}\end{align}

			From the definition of $\hat{m}$, we have
			\begin{align}
				\hat{\rho}_M(\hat{m}) \geq  c_M\sqrt{\frac{\log M}{M}}
				\qquad \mbox{and} \qquad
				\hat{\rho}_M(\hat{m}+2) \leq  c_M\sqrt{\frac{\log M}{M}}.\label{eq:mHatdef}
			\end{align}
			First, we consider the case $\delta_{\gamma}<1$. 
			
			Let $\Delta_M$ be $\Delta_M = \sup_{k=0,\dots,M-1}|\rho(k) - \hat{\rho}_M(k)|$. Since $\Delta_M = O_{P_x} (\sqrt{\log M/M})$, we have $C_\beta >0$ and a finite $M_1$ such that $\Delta_M \le C_\beta \sqrt{\log M/M}$ with probability at least $1-\beta$ for all $M \ge M_1$. Let $\mathcal{E}_M$ be the event such that this inequality holds.
			
			On the event $\mathcal{E}_M$, the second condition in \eqref{eq:mHatdef} implies
			\begin{align*}
				&\hat{\rho}_M(\hat{m}+2) \leq  c_M\sqrt{\frac{\log M}{M}}\\
				&\Rightarrow \rho(\hat{m}+2) -|\hat{\rho}_M(\hat{m}+2) - \rho(\hat{m}+2)|\leq  c_M\sqrt{\frac{\log M}{M}}\\
				&\Rightarrow  \rho(\hat{m}+2) \leq (C_\beta+c_M) \sqrt{\frac{\log M}{M}}.
			\end{align*}
			Note by definition of $\rho$,
			\begin{align*}
				\rho(\hat{m}+2) = \gamma(0)^{-1} \int \alpha^{\hat{m}+2} F(d\alpha) =\gamma(0)^{-1} \int |\alpha|^{\hat{m}+2} F(d\alpha) =  \int \alpha^{\hat{m}+2} \tilde{F}(d\alpha).
			\end{align*}
			
			We will lower-bound $\int \alpha^{\hat{m}+2} \tilde{F}(d\alpha)$. First, define $\{a_k\}_{k=1}^\infty$ such that
			\begin{align}
				a_k = \sup \{t\ge 0; H_{\tilde{F}}(t) < 1-1/\sqrt{\log(k)}\},
			\end{align}
			where we take the convention of $\sup\{\emptyset\} = -\infty$. 
			By definition of $H_{\tilde{F}}$, we have $H_{\tilde{F}}(1-\delta_\gamma) = 1$ and \begin{align}
				&H_{\tilde{F}}(1-\delta_\gamma) - H_{\tilde{F}}(a_k) \geq 1-(1-1/\sqrt{\log(k)}) = 1/\sqrt{\log(k)}
			\end{align}
			Also, $a_k \ge a_{k+1}$ since $H_{\tilde{F}}$ is an increasing function. 
			Now, we show $\lim_{k\to\infty} a_k = 1-\delta_\gamma$: first, we have
			$a_k \le 1-\delta_\gamma$ for all $k$. Therefore the limit of $a_k$ exists. Now suppose to the contrary that $\lim_k a_k  = c_a < 1-\delta_\gamma$. Then $H_{\tilde{F}}(c_a) <1$ by \eqref{eq:1-delta_as_infimum}. Choose $\{\epsilon_k\}$ such that $\epsilon_k\ge 0$, $a_k+\epsilon_k \le 1-\delta_\gamma$ and $\epsilon_k \to 0$. For each $k$, we have $H_{\tilde{F}}(a_k+ \epsilon_k ) \ge 1-1/\sqrt{\log (k)}$ by the definition of $a_k$. Then taking $k$ limit to both sides, we have $\lim_k H_{\tilde{F}}(a_k +\epsilon_k )  = H_{\tilde{F}}(c_a)\ge 1	$ since $H_{\tilde{F}}$ is a right continuous function, and we have a contradiction. Therefore we have $\lim_k a_k = 1-\delta_\gamma$.
			
			We have,
			\begin{align*}
				\int \alpha^{\hat{m}+2} \tilde{F}(d\alpha) \ge \int_{(a_M, 1-\delta_\gamma]} \alpha^{\hat{m}+2} \tilde{F}(d\alpha) \ge a_M^{\hat{m}+2}\int_{(a_M, 1-\delta_\gamma]} \tilde{F}(d\alpha) = a_M^{\hat{m}+2}(H_{\tilde{F}}(1-\delta_\gamma) - H_{\tilde{F}}(a_M)).
			\end{align*}
			Since 
			\begin{align*}
				a_M^{\hat{m}+2}(H_{\tilde{F}}(1-\delta_\gamma) - H_{\tilde{F}}(a_M)) \ge a_M^{\hat{m}+2}/\sqrt{\log M},
			\end{align*}
			we have,
			\begin{align*}
				&\int \alpha^{\hat{m}+2} \tilde{F}(d\alpha)  \leq (c_M+C_\beta) \sqrt{\frac{\log M}{M}}\\
				&\Rightarrow  a_M^{\hat{m}+2}\leq \sqrt{\log M} (c_M+C_\beta) \sqrt{\frac{\log M}{M}}\\
				&\Rightarrow  (\hat{m}+2)\log a_M \leq \log(\sqrt{\log M}) + \log (c_M+C_\beta) + \frac{1}{2} \{\log(\log M)-\log M\}\\
				&\Rightarrow \frac{\hat{m}}{\log M} \geq \frac{-2\log a_M+\log(\sqrt{\log M}) + \log (c_M+C_\beta) + \frac{1}{2} \{\log(\log M)-\log M\}}{\log a_M \log M}\\
				&\Rightarrow 1-\exp(-\frac{\log M}{2\hat{m}}) \le \\
				&\qquad 1-\exp\left\lbrace-\frac{1}{2}\left[\frac{\log a_M \log M}{-2\log a_M+\log(\sqrt{\log M}) + \log (c_M+C_\beta) + \frac{1}{2} \{\log(\log M)-\log M\}}  \right] \right\rbrace
			\end{align*}
			Since $\log a_M \to \log (1-\delta_\gamma)$ as $M\to \infty$, the RHS converges to $\delta_\gamma$. In other words, there exists a finite $M_2$ such that the RHS is $\delta_\gamma + \epsilon_0$. Therefore, 
			\begin{align}\label{eq:ineq2}
				\hat{\delta}_{M} \leq  \delta_\gamma +\epsilon_0
			\end{align} 
			on $\mathcal{E}_M$ for $M\ge M_2$.
			Therefore, for $M\geq \max\{M_1,M_2\}$, we have $P_x(\hat{\delta}_{M} \leq  \delta_\gamma +\epsilon_0)\geq P(\mathcal{E}_{M})\geq 1-\beta$ for arbitrarily chosen $\beta$ and $\epsilon_0$, i.e., $\hat{\delta}_M$ is asymptotically not greater than $\delta_\gamma$.
			
			Now under the condition that $c_M \to \infty$ and $c_M = O(\log M)$, we show that  $\hat{\delta}_M$ is asymptotically not smaller than $\delta_\gamma$ as well. The first condition in \eqref{eq:mHatdef} implies on the event $\mathcal{E}_M$,
			\begin{align*}
				&\hat{\rho}_M(\hat{m}) \ge c_M\sqrt{\frac{\log M}{M}}\\
				&\Rightarrow \rho(\hat{m}) + |\rho(\hat{m}) - \hat{\rho}_M(\hat{m})| \ge c_M\sqrt{\frac{\log M}{M}}\\
				&\Rightarrow \rho(\hat{m}) + C_\beta \sqrt{\frac{\log M}{M}} \ge c_M\sqrt{\frac{\log M}{M}}\\
				& \Rightarrow \gamma(0)^{-1}\int \alpha^{\hat{m}} F(d\alpha) \ge (c_M - C_\beta)\sqrt{\frac{\log M}{M}}
			\end{align*}
			Note that if we only require $c_M \ge 0$, the RHS can be negative depending on $c_M$ and $C_\beta$. However, with the choice of $c_M \to \infty$, there exists a finite $M_3$ such that $c_M-C_\beta>0$ for $M \ge M_3$. 
			
			We continue to upper bound the LHS. Since $\hat{m}$ is even,
			$\gamma(0)^{-1}\int \alpha^{\hat{m}} F(d\alpha) = \gamma(0)^{-1}\int |\alpha|^{\hat{m}} F(d\alpha) = \int \alpha^{\hat{m}}\tilde{F}(d\alpha)$. In particular, ${\rm Supp} (\tilde{F}) \subseteq [0, 1-\delta_\gamma]$, since $\tilde{F}([0,1-\delta_\gamma]) = \tilde{F}((-\infty, 1-\delta_\gamma]) = \gamma(0)^{-1} F([-1+\delta_\gamma, 1-\delta_\gamma]) = 1$ by the definition of $\delta_\gamma$.
			Therefore,
			\begin{align*}
				&\int \alpha^{\hat{m}}\tilde{F}(d\alpha) \ge (c_M - C_\beta)\sqrt{\frac{\log M}{M}}\\
				&\Rightarrow (1-\delta_\gamma)^{\hat{m}}\ge (c_M - C_\beta)\sqrt{\frac{\log M}{M}}\\
				&\Rightarrow {\hat{m}}\log (1-\delta_\gamma)\ge \log (c_M - C_\beta)+ \frac{1}{2}\{\log(\log M)-\log M\}\\
				&\Rightarrow {\hat{m}}\le \frac{\log (c_M - C_\beta)}{\log (1-\delta_\gamma)}+ \frac{1}{2\log (1-\delta_\gamma)}\{\log(\log M)-\log M\}	
			\end{align*}
			since $\log (1-\delta_\gamma) <0$. By dividing both sides by $\log(M)/2$,
			\begin{align*}
				&\frac{2\hat{m}}{\log M}\le \frac{2\log (c_M - C_\beta)+\log(\log M)-\log M}{\log (1-\delta_\gamma)\log(M)}\\
				&\Rightarrow 1-\exp(-\frac{2\hat{m}}{\log M}) \ge 1-\exp\left(-\frac{\log (1-\delta_\gamma)\log(M)}{2\log (c_M - C_\beta)+\log(\log M)-\log M}\right)
			\end{align*}
			By the condition of $c_M = O(\log(M))$, $c_M/\log(M)\le C$ for some constant $C$ for a sufficiently large $M$.
			\begin{align*}
				\frac{\log(c_M -C_\beta)}{\log(M)}\le\frac{\log(c_M)}{\log(M)} \le \frac{\log(C \log(M))}{\log(M)} = o(1).
			\end{align*}
			Therefore, the RHS converges to $\delta_\gamma$, and we can find a finite $M_4$ such that
			\begin{align}\label{eq:ineq1}
				\hat{\delta}_M \ge \delta_\gamma -\epsilon_0
			\end{align} on $\mathcal{E}_{M}$, for $M \ge \max\{M_3,M_4\}$. Thus, under the additional condition that $c_M \to \infty$ and $c_M = O(\log M)$, combining~\eqref{eq:ineq2} and~\eqref{eq:ineq1} yields $P_x(\{|\hat{\delta}_M - \delta_\gamma|> \epsilon_0\})\leq P(\mathcal{E}_{M}^c)\leq \beta$, for $M \ge \max_{i=1,\dots,4} M_i$, i.e.,  $\hat{\delta}_{M} \to \delta_\gamma$ in probability since $\beta$ and $\epsilon_0$ were arbitrary. This shows the result in the case $\delta_{\gamma}<1$.
			
			Now we consider the case when $\delta_\gamma=1$. In this case, the inequality $\hat{\delta}_M \leq \delta_\gamma=1$ is trivially true with probability 1 from the definition of $\hat{m}$, and therefore, we have $P_x(\hat{\delta}_{M} \leq  \delta_\gamma +\epsilon_0)=1$ for all $M$, for each $\epsilon_0>0$. Thus $\hat{\delta}_{M}$ is not asymptotically larger than $\delta_{\gamma}$. 
			
			Now, under the additional assumption $c_{M}\to \infty$ and $c_M=O(\log(M))$, we show $\hat{\delta}_{M}\overset{p}{\to}\delta_{\gamma}=1$. Let $\epsilon_0$, $\beta>0$ given. As before, let $\Delta_M$ be $\Delta_M = \sup_{k=0,\dots,M-1}|\rho(k) - \hat{\rho}_M(k)|$. Since $\Delta_M = O_{P_x} (\sqrt{\log M/M})$, we have $C_\beta >0$ and a finite $M_1$ such that $\Delta_M \le C_\beta \sqrt{\log M/M}$ with probability at least $1-\beta$ for all $M \ge M_1$. Let $\mathcal{E}_M$ denote the event $\Delta_M\leq C_{\beta}\sqrt{\frac{\log M}{M}}$. We also have a finite $M_2$ such that $c_M\geq C_{\beta}$ for $M\geq M_2$. Therefore, on $\mathcal{E}_{M}$, 
			\begin{align*}
				\hat{\rho}_{M}(2)\leq C_{\beta}\sqrt{\frac{\log M}{M}} \Rightarrow \hat{\rho}_{M}(2)\leq c_M\sqrt{\frac{\log M}{M}}\Rightarrow \hat{m}=0,
			\end{align*} holds for all $M\geq M_2$. Note $\hat{\delta}_{M}=1$ whenever $\hat{m}=0$. Thus for $M\geq \max\{M_1,M_2\}$, 
			\begin{align*}
				P_x(|\hat{\delta}_{M}-\delta_{\gamma}|\geq \epsilon_0)\leq P_x(\hat{\delta}_{M}\neq \delta_{\gamma})=P_x(\hat{m}\neq 0)\leq \beta.
			\end{align*} Since $\beta$ was arbitrary, this proves the result $\hat{\delta}_{M}\overset{p}{\to}\delta_{\gamma}=1$.
		\end{proof}

		\newpage
		\section{Supplementary Tables for Section \ref{sec:emp}}\label{sec:tables}
		Here we present some supplementary tables for Section \ref{sec:emp}.
		\begin{table}[H]
			\caption{\textit{Estimated average $\ell_2$ error (s.e.) for the autocovariance sequence estimators and mean squared error (s.e.) for the asymptotic variance estimators for discrete state space Metropolis-Hastings example}}
			\centering
			\setlength\tabcolsep{3pt}
			\begin{subtable}[t]{\textwidth}
				
				\caption{$\ell_2$ error}
				\centering
				\footnotesize
				\begin{tabularx}{\linewidth}{lXXXXXX}
					\hline
					Estimator & 4000 & 8000 & 16000 & 32000 & 64000 & 128000 \\ 
					\hline
					Empirical & 1.2732 (0.0074) & 1.2639 (0.0051) & 1.2616 (0.0035) & 1.2668 (0.0025) & 1.2683 (0.0018) & 1.2666 (0.0013) \\ 
					Bartlett & 0.0092 (0.0003) & 0.0056 (0.0002) & 0.0034 (0.0001) & 0.0021 (0.0001) & 0.0012 (0.0000) & 0.0007 (0.0000) \\ 
					MomentLS(Orcl,Brtl) & \textbf{0.0054 (0.0003)} & \textbf{0.0029 (0.0001)} & 0.0016 (0.0001) & 0.0009 (0.0000) & 0.0005 (0.0000) & 0.0003 (0.0000) \\ 
					MomentLS(Tune,Emp) & 0.0059 (0.0003) & 0.0030 (0.0002) & 0.0016 (0.0001) & \textbf{0.0008 (0.0000)} & \textbf{0.0004 (0.0000)} & \textbf{0.0002 (0.0000)} \\ 
					MomentLS(Tune-Incr,Emp) & 0.0059 (0.0003) & 0.0030 (0.0002) & 0.0016 (0.0001) & \textbf{0.0008 (0.0000)} & \textbf{0.0004 (0.0000)} & \textbf{0.0002 (0.0000)} \\ 
					MomentLS(Orcl,Emp) & 0.0056 (0.0003) & \textbf{0.0029 (0.0001)} & \textbf{0.0015 (0.0001)} & \textbf{0.0008 (0.0000)} & \textbf{0.0004 (0.0000)} & \textbf{0.0002 (0.0000)} \\ 
					\hline
				\end{tabularx}
			\end{subtable}
			\vspace{.2em}
			
			\begin{subtable}[t]{\textwidth}
				
				\centering
				\footnotesize
				\caption{Asymptotic variance mean squared error}
				\begin{tabularx}{\linewidth}{lXXXXXX}
					\hline
					Estimator & 4000 & 8000 & 16000 & 32000 & 64000 & 128000 \\ 
					\hline
					BM & 0.0917 (0.0056) & 0.0548 (0.0035) & 0.0377 (0.0024) & 0.0234 (0.0017) & 0.0134 (0.0009) & 0.0086 (0.0006) \\ 
					OLBM & 0.0940 (0.0058) & 0.0559 (0.0036) & 0.0332 (0.0022) & 0.0204 (0.0014) & 0.0118 (0.0008) & 0.0070 (0.0005) \\ 
					Empirical & 6.4180 (0.0000) & 6.4180 (0.0000) & 6.4180 (0.0000) & 6.4180 (0.0000) & 6.4180 (0.0000) & 6.4180 (0.0000) \\ 
					Bartlett & 0.0921 (0.0058) & 0.0541 (0.0034) & 0.0325 (0.0022) & 0.0201 (0.0014) & 0.0115 (0.0008) & 0.0069 (0.0005) \\ 
					Init-Positive & 0.1236 (0.0146) & 0.0571 (0.0051) & 0.0332 (0.0041) & 0.0154 (0.0015) & 0.0084 (0.0011) & 0.0038 (0.0003) \\ 
					Init-Decr & 0.0878 (0.0079) & 0.0400 (0.0030) & 0.0230 (0.0022) & 0.0114 (0.0009) & 0.0058 (0.0005) & 0.0030 (0.0002) \\ 
					Init-Convex & 0.0741 (0.0063) & 0.0348 (0.0026) & 0.0195 (0.0019) & 0.0101 (0.0008) & 0.0051 (0.0004) & 0.0026 (0.0002) \\ 
					MomentLS(Orcl,Brtl) & \textbf{0.0521 (0.0038)} & \textbf{0.0274 (0.0018)} & \textbf{0.0148 (0.0010)} & 0.0085 (0.0005) & 0.0048 (0.0003) & 0.0030 (0.0002) \\ 
					MomentLS(Tune,Emp) & 0.0697 (0.0059) & 0.0342 (0.0025) & 0.0183 (0.0017) & 0.0092 (0.0006) & 0.0046 (0.0004) & 0.0023 (0.0002) \\ 
					MomentLS(Tune-Incr,Emp) & 0.0675 (0.0056) & 0.0336 (0.0025) & 0.0179 (0.0017) & 0.0090 (0.0006) & 0.0046 (0.0004) & 0.0023 (0.0002) \\ 
					MomentLS(Orcl,Emp) & 0.0558 (0.0043) & 0.0285 (0.0020) & \textbf{0.0148 (0.0011)} & \textbf{0.0079 (0.0005)} & \textbf{0.0038 (0.0003)} & \textbf{0.0020 (0.0002)} \\ 
					\hline
				\end{tabularx}
			\end{subtable}
			
		\end{table}

		\begin{table}[H]
			\caption{\textit{Estimated average $\ell_2$ error (s.e.) for the autocovariance sequence estimators and mean squared error (s.e.) for the asymptotic variance estimators for AR1 example with $\rho=0.9$}}
			\centering
			\setlength\tabcolsep{3pt}
			\begin{subtable}[t]{\textwidth}
				
				\caption{$\ell_2$ error}
				\centering
				\footnotesize
				\begin{tabularx}{\linewidth}{lXXXXXX}
					\hline
					Estimator & 4000 & 8000 & 16000 & 32000 & 64000 & 128000 \\ 
					\hline
					Empirical & 260.4808 (3.3948) & 262.8923 (2.6730) & 261.6753 (1.8479) & 261.7598 (1.3180) & 263.6259 (0.9379) & 263.6811 (0.6410) \\ 
					Bartlett & 7.1962 (0.2886) & 4.4781 (0.1685) & 2.7572 (0.0995) & 1.6033 (0.0540) & 0.9710 (0.0312) & 0.5964 (0.0175) \\ 
					MomentLS(Orcl,Brtl) & 3.9916 (0.2211) & 2.3139 (0.1283) & 1.3216 (0.0762) & 0.6979 (0.0396) & 0.4162 (0.0241) & 0.2675 (0.0140) \\ 
					MomentLS(Tune,Emp) & 4.8135 (0.2362) & 2.9854 (0.1719) & 1.5289 (0.0819) & 0.7512 (0.0416) & 0.4092 (0.0276) & 0.1970 (0.0117) \\ 
					MomentLS(Tune-Incr,Emp) & 4.7958 (0.2356) & 2.9674 (0.1700) & 1.5241 (0.0816) & 0.7509 (0.0416) & 0.4082 (0.0275) & 0.1973 (0.0117) \\ 
					MomentLS(Orcl,Emp) & \textbf{3.7863 (0.2114)} & \textbf{2.1209 (0.1209)} & \textbf{1.1063 (0.0675)} & \textbf{0.5141 (0.0322)} & \textbf{0.2680 (0.0189)} & \textbf{0.1247 (0.0078)} \\ 
					\hline
				\end{tabularx}
			\end{subtable}
			\vspace{.2em}
			
			\begin{subtable}[t]{\textwidth}
				\footnotesize
				\centering
				\caption{Asymptotic variance mean squared error}
				\begin{tabularx}{\linewidth}{lXXXXXX}
					\hline
					Estimator & 4000 & 8000 & 16000 & 32000 & 64000 & 128000 \\ 
					\hline
					BM & 441.3225 (26.3386) & 310.6322 (18.7844) & 207.5458 (12.0461) & 120.8707 (7.1677) & 75.7108 (4.5795) & 50.9232 (3.1021) \\ 
					OLBM & 530.6842 (30.8907) & 329.0258 (18.6833) & 200.7452 (11.5641) & 113.6880 (6.9584) & 70.3730 (4.6056) & 43.9050 (2.8280) \\ 
					Empirical & 10,000.0000 (0.0000) & 10,000.0000 (0.0000) & 10,000.0000 (0.0000) & 10,000.0000 (0.0000) & 10,000.0000 (0.0000) & 10,000.0000 (0.0000) \\ 
					Bartlett & 480.1497 (28.6448) & 304.9819 (18.0249) & 188.2221 (11.0456) & 108.3490 (6.7378) & 67.6490 (4.5006) & 42.5953 (2.7620) \\ 
					Init-Positive & 727.0617 (102.9961) & 384.9189 (52.4958) & 200.7905 (24.7859) & 94.0732 (9.7885) & 53.6517 (5.0255) & 25.4210 (2.3289) \\ 
					Init-Decr & 442.0562 (37.6246) & 289.4032 (30.6812) & 141.7235 (12.9728) & 70.1543 (5.7635) & 39.3842 (3.5904) & 19.6494 (1.7146) \\ 
					Init-Convex & 349.7933 (23.8527) & 240.7814 (22.5304) & 120.4678 (9.4864) & 59.7319 (4.4887) & 34.3007 (3.1422) & 16.8201 (1.4084) \\ 
					MomentLS(Orcl,Brtl) & 206.4103 (13.2641) & 121.6584 (7.7392) & 73.4146 (4.9726) & 40.7843 (2.6673) & 24.8478 (1.5947) & 16.6819 (0.9671) \\ 
					MomentLS(Tune,Emp) & 317.3013 (21.2726) & 217.8352 (18.9499) & 103.8053 (7.7694) & 52.8779 (4.0613) & 30.1932 (3.0648) & 14.2880 (1.3012) \\ 
					MomentLS(Tune-Incr,Emp) & 313.1335 (20.8190) & 214.4295 (18.4725) & 103.0196 (7.7013) & 52.5359 (4.0221) & 30.0120 (3.0511) & 14.2162 (1.2912) \\ 
					MomentLS(Orcl,Emp) & \textbf{187.2898 (12.1154)} & \textbf{104.0514 (6.7444)} & \textbf{56.0977 (4.0863)} & \textbf{26.4687 (1.9108)} & \textbf{13.3904 (1.0194)} & \textbf{6.2344 (0.4571)} \\ 
					\hline
				\end{tabularx}
			\end{subtable}
			
		\end{table}

		\begin{table}[H]
			\caption{\textit{Estimated average $\ell_2$ error (s.e.) for the autocovariance sequence estimators and mean squared error (s.e.) for the asymptotic variance estimators for AR1 example with $\rho=-0.9$}}
			\centering
			\setlength\tabcolsep{3pt}
			\begin{subtable}[t]{\textwidth}
				\footnotesize
				\caption{$\ell_2$ error}
				\centering
				\begin{tabularx}{\linewidth}{lXXXXXX}
					\hline
					Estimator & 4000 & 8000 & 16000 & 32000 & 64000 & 128000 \\ 
					\hline
					Empirical & 260.2496 (3.5967) & 260.7740 (2.6222) & 260.8336 (1.8960) & 264.7212 (1.2805) & 263.1831 (0.8522) & 263.6014 (0.6256) \\ 
					Bartlett & 6.6264 (0.2658) & 4.1319 (0.1757) & 2.6333 (0.0993) & 1.5941 (0.0591) & 0.9635 (0.0292) & 0.5946 (0.0165) \\ 
					MomentLS(Orcl,Brtl) & 3.7670 (0.2187) & 2.0407 (0.1308) & 1.2248 (0.0719) & 0.6844 (0.0449) & 0.3941 (0.0237) & 0.2585 (0.0146) \\ 
					MomentLS(Tune,Emp) & 4.6199 (0.2543) & 2.5181 (0.1457) & 1.3821 (0.0754) & 0.7286 (0.0432) & 0.3639 (0.0217) & 0.1968 (0.0110) \\ 
					MomentLS(Tune-Incr,Emp) & 4.6113 (0.2541) & 2.5137 (0.1459) & 1.3793 (0.0753) & 0.7305 (0.0436) & 0.3638 (0.0219) & 0.1974 (0.0111) \\ 
					MomentLS(Orcl,Emp) & \textbf{3.5910 (0.2156)} & \textbf{1.8474 (0.1215)} & \textbf{1.0294 (0.0627)} & \textbf{0.5192 (0.0368)} & \textbf{0.2388 (0.0169)} & \textbf{0.1236 (0.0079)} \\ 
					\hline
				\end{tabularx}
		\end{subtable}
		\vspace{.2em}
		
		\begin{subtable}[t]{\textwidth}
			\footnotesize
			\centering
			\caption{Asymptotic variance mean squared error}
				\begin{tabularx}{\linewidth}{lXXXXXX}
					\hline
					Estimator & 4000 & 8000 & 16000 & 32000 & 64000 & 128000 \\ 
					\hline
					BM & 0.0048 (0.0003) & 0.0030 (0.0002) & 0.0017 (0.0001) & 0.0010 (0.0001) & 0.0006 (0.0000) & 0.0005 (0.0000) \\ 
					OLBM & 0.0025 (0.0002) & 0.0018 (0.0001) & 0.0011 (0.0001) & 0.0007 (0.0000) & 0.0005 (0.0000) & 0.0003 (0.0000) \\ 
					Empirical & 0.0767 (0.0000) & 0.0767 (0.0000) & 0.0767 (0.0000) & 0.0767 (0.0000) & 0.0767 (0.0000) & 0.0767 (0.0000) \\ 
					Bartlett & 0.0027 (0.0002) & 0.0019 (0.0002) & 0.0011 (0.0001) & 0.0007 (0.0001) & 0.0005 (0.0000) & 0.0003 (0.0000) \\ 
					Init-Positive & 0.0973 (0.0076) & 0.0526 (0.0036) & 0.0232 (0.0016) & 0.0114 (0.0008) & 0.0059 (0.0004) & 0.0032 (0.0002) \\ 
					Init-Decr & 0.1186 (0.0079) & 0.0663 (0.0041) & 0.0309 (0.0020) & 0.0149 (0.0010) & 0.0078 (0.0005) & 0.0044 (0.0003) \\ 
					Init-Convex & 0.3009 (0.0134) & 0.1664 (0.0076) & 0.0835 (0.0037) & 0.0417 (0.0019) & 0.0209 (0.0009) & 0.0116 (0.0005) \\ 
					MomentLS(Orcl,Brtl) & 0.0028 (0.0002) & 0.0017 (0.0001) & 0.0010 (0.0001) & 0.0007 (0.0000) & 0.0004 (0.0000) & 0.0003 (0.0000) \\ 
					MomentLS(Tune,Emp) & 0.0034 (0.0002) & 0.0019 (0.0001) & 0.0010 (0.0001) & 0.0006 (0.0000) & 0.0003 (0.0000) & \textbf{0.0001 (0.0000)} \\ 
					MomentLS(Tune-Incr,Emp) & 0.0034 (0.0002) & 0.0019 (0.0001) & 0.0010 (0.0001) & 0.0006 (0.0000) & 0.0003 (0.0000) & \textbf{0.0001 (0.0000)} \\ 
					MomentLS(Orcl,Emp) & \textbf{0.0023 (0.0001)} & \textbf{0.0012 (0.0001)} & \textbf{0.0006 (0.0000)} & \textbf{0.0003 (0.0000)} & \textbf{0.0001 (0.0000)} & \textbf{0.0001 (0.0000)} \\ 
					\hline
				\end{tabularx}
			\end{subtable}
			
		\end{table}
		\newpage

		\ifnum\pageoption=2
		\putbib 
	\end{bibunit}\fi
	
	\ifnum\pageoption=3
	\bibliographystyle{plainnat}  
	\bibliography{bib}
	\fi
	
	\fi
\end{document}